\DeclareMathOperator{\E}{\mathbb{E}}
\DeclareMathOperator*{\argmin}{arg\,min}
\newcommand{\nbf}{\noindent\textbf}
\newcommand{\nit}{\noindent\textit}
\newcommand{\R}{\mathbb R}
\newcommand{\F}{\mathcal F}
\newcommand{\Ss}{\mathbb S}
\newcommand{\Pb}{\mathbb P}
\DeclarePairedDelimiter\ceil{\lceil}{\rceil}
\DeclareRobustCommand{\qed}{%
	\ifmmode 
	\else \leavevmode\unskip\penalty9999 \hbox{}\nobreak\hfill
	\fi
	\quad\hbox{\qedsymbol}}
\newcommand{\openbox}{\leavevmode
	\hbox to.77778em{%
		\hfil\vrule
		\vbox to.675em{\hrule width.6em\vfil\hrule}%
		\vrule\hfil}}
\newcommand{\qedsymbol}{\openbox}
\newenvironment{proof}[1][\proofname]{\par
	\normalfont
	\topsep6\p@\@plus6\p@ \trivlist
	\item[\hskip\labelsep\itshape
	#1.]\ignorespaces
}{%
	\qed\endtrivlist
}
\newcommand{\proofname}{Proof}
\newcommand{\red}[1]{\textcolor{black}{#1}}
\begin{document}

\begin{frontmatter}

\title{A Reliability-aware Multi-armed Bandit Approach to Learn and Select Users in Demand Response} 

\thanks[footnoteinfo]{The work was supported by NSF CAREER 1553407,NSF ECCS 1839632, AFOSR YIP, ONR YIP, and ARPA-E through the NODES program. }

\author[YL]{Yingying Li}\ead{yingyingli@g.harvard.edu},    
\author[QH]{Qinran Hu}\ead{qhu@seu.edu.cn},               
\author[YL]{Na Li}\ead{nali@seas.harvard.edu}  

\address[YL]{Harvard University, Cambridge, MA 02138, USA.}
\address[QH]{Southeast University, Nanjing, China}  

\begin{keyword}                           
	learning theory;  optimization under uncertainties; real time simulation and dispatching;  multi-armed bandit; demand response; regret analysis.  
\end{keyword}   

\begin{abstract}                          
	One challenge in the optimization and control of societal systems is to handle the unknown and uncertain user behavior. 
This paper focuses on residential demand response (DR) and proposes a closed-loop learning scheme to address these issues. In particular, we consider DR programs where an aggregator calls upon
residential users to change their demand so that the total load adjustment is close to a target value. 
To learn and select the right users,  we formulate the DR problem as a combinatorial multi-armed bandit (CMAB) problem with  a reliability objective. We propose a learning algorithm: CUCB-Avg (Combinatorial Upper Confidence Bound-Average), which utilizes both upper confidence bounds and sample averages to balance the tradeoff between exploration (learning) and exploitation (selecting). We consider both a fixed time-invariant target and time-varying targets, and show that 
CUCB-Avg achieves $O(\log T)$ and $O(\sqrt{T 
	\log(T)})$  regrets respectively. Finally,
we  numerically test our algorithms using synthetic and real  data, and demonstrate that our CUCB-Avg performs significantly better than the classic CUCB and also better than Thompson Sampling. 
\end{abstract}

\end{frontmatter}

\section{Introduction}

Unknown and uncertain user behavior is common in many sequential decision-making problems of societal systems, such as transportation, electricity grids, communication, crowd-sourcing, and resource allocation problems in general \citep{o2010residential,belleflamme2014crowdfunding,kuderer2015learning,li2017mechanism}. One key challenge caused by the unknown and uncertain user behavior is how to ensure reliability or reduce risks for the system.  This paper focuses on addressing this challenge for residential demand response (DR) in power systems.

 Residential DR refers to adjusting  power consumption of residential users, e.g. by changing the temperature setpoints of air conditioners,  to relieve  the supply-demand imbalances of the power system \citep{DRreportFERC2017,conEdison,o2010residential,thinkeco,psegcoolcustomer}. In most residential DR programs, customers can decide to respond to a DR signal or not, and the decisions are usually highly uncertain. Moreover, the pattern of the user behavior is not well understood by the DR aggregator. Such unknown and uncertain behavior may cause severe troubles for the system reliability: without enough knowledge of the user behavior,  the DR load adjustment  is likely to be very different from a target level, resulting in extra power imbalances and fluctuations. Therefore, it is critical for residential DR programs to learn the user behavior and ensure reliability  during the learning.


 Multi-armed bandit (MAB) emerges as a natural framework to learn the user  behavior \citep{auer2002finite,bubeck2012regret}.  In a simple setting, MAB considers $n$ independent arms, each providing a random contribution according to its own distribution at time step $1\leq t \leq T$. 
 Without knowing these distributions, a decision maker picks one arm at each time step and tries to maximize the total expected contribution in $T$ time steps. 
  When the decision maker can select multiple arms at each time, the problem is often referred to as  combinatorial multi-armed bandit (CMAB) in literature \citep{chen2016combinatorial,kveton2015tight}.  (C)MAB captures a fundamental tradeoff in most learning problems: \textit{exploration} vs. \textit{exploitation}. A common metric to evaluate the performance of (C)MAB learning algorithms is regret, which captures the difference between  the optimal expected value assuming the distributions are known and the  expected value achieved by the online learning algorithm. It is desirable to design online algorithms with sublinear $o(T)$ regrets, which roughly indicates that  the learning algorithm eventually learns the optimal solution.
  

 
  Though there have been studies on DR via  (C)MAB, most literature aims at maximizing the load reduction \citep{wang2014adaptive,lesage2017multi,jain2014multiarmed}. There is a lack of efforts on improving the reliability of CMAB algorithms for DR as well as the theoretical reliability guarantees. 
  

\subsection{Our Contributions} 
In this paper, we formulate the DR as a CMAB problem with a reliability objective, i.e. we aim to minimize the deviation between the actual total load adjustment and a target signal. The target might be caused by a sudden change of renewable energy or a peak load reduction event.   We consider a large number of residential users, and each user can commit one unit of load change (either reduction or increase) with an unknown probability. The task of the DR aggregator is to select a subset of the users to guarantee the actual load adjustment to be as close to the target as possible. The number of  users to select is not fixed, giving flexibility to the aggregator for achieving different target levels.  

In order to design our  online learning algorithm, we first develop an offline combinatorial optimization algorithm that selects the optimal subset of the users when the user behavior models are known. Based 
on the structure of the offline algorithm, we propose an online algorithm CUCB-Avg (Combinatorial Upper Confidence Bound-Average) and provide a rigorous regret analysis.  We show that, over $T$ time steps,  CUCB-Avg achieves $O(\log T)$ regret given a static target and $O(\sqrt{T\log(T)}$ regret given a time-varying target. The regrets in both cases depend polynomially on the number of users $n$. We also conduct numerical studies using synthetic DR data, showing that the performance of CUCB-Avg is much better than the classic algorithm CUCB \citep{kveton2015tight,chen2016combinatorial}, and also better than  Thompson sampling. 
In addition, we numerically show that, with minor modifications,  CUCB-Avg can cope with more realistic  behavior models with user fatigue.

Lastly, we would like to mention that though the DR model considered in this paper is very simple, the model is motivated by real pilot studies of residential DR programs, and the results have served as a guideline for designing the learning protocols \citep{thinkeco}. Besides, since real-world DR programs vary a lot among each other (depending on the DR company, local policies, reward schemes, data infrastructure, etc.), abstracting the DR model can be useful for a variety of DR programs by providing some common insights and general guidelines. When designing algorithms for real DR programs, we could modify the vanilla method to suit different specific requirements. 
 Furthermore, our algorithm design and  theoretical analysis based on the simple model may also provide insights for  other societal system applications.
 

\subsection{Related Work}
\textit{Combinatorial multi-armed bandits.}
 Most literature in CMAB studies a classic formulation which aims to maximize the total (weighted) contribution of $K$ arms with a \textit{fixed} integer $K$ (and known weights) \citep{bubeck2012regret,kveton2015tight}. 
There are also papers considering more general reward functions, for example, \citep{chen2016combinatorial} considers   objective functions that are \textit{monotonically nondecreasing}  with the parameters of the selected arms and designs Combinatorial Upper Confidence Bound (CUCB) using the principle of \textit{optimism in the face of uncertainty}.
However, the reliability objective of our CMAB problem does not satisfy the monotonicity assumption, thus the study of CUCB  cannot be directly applied here. 
Another line of work follows the \textit{Bayesian} approach and studies Thompson sampling \citep{gopalan2014thompson,wang2018thompson}. \red{However, the regret bound of Thompson sampling consists of a term that is independent of $T$ but depends  exponentially on the number of   arms $K$ in the optimal subset \citep{wang2018thompson}. Further,  \citep{wang2018thompson} shows that the exponential dependence is unavoidable. In  the residential DR problems, $K$ is usually large, so Thompson sampling may generate poor performance especially when $T$ is not very large, which is consistent with our numerical results in Figure \ref{fig: compare TS with diff. n} in Section \ref{sec: simulation}.} Finally, there is a lack of analysis on time-varying objective functions, but in many real-world applications the objectives change with time, e.g., the DR target would depend on the time-varying renewable generation. Therefore, either the learning algorithms or the theoretical analysis in literature do not directly apply to our CMAB problem, motivating the work of this paper.

\textit{Risk-aversion MAB.} There is a related line of  research on reducing risks in MAB by  selecting the \textit{single} arm with the best \textit{return-risk tradeoff} \citep{sani2012risk,vakili2016risk}.  However, there is a  lack of studies on selecting \textit{a subset of} arms so that the total contribution of the selected arms is \textit{close to a certain target.\footnote{\red{In Appendix \ref{append: risk averse}, we provide an algorithm based on the risk-aversion MAB ideas and provide numerical results.}}}
	

\textit{Learning-based demand response.} In addition to the demand response program considered in this paper and \citep{wang2014adaptive,lesage2017multi,thinkeco,psegcoolcustomer,conEdison}, where customers are directly selected by the aggregator to  perform demand response, there is a different type of DR programs based on dynamic pricing,  where the goal is to design time-varying electricity prices to automatically incentivize  desirable load reduction behaviors from the consumers  \citep{dynamicpricing}. Learning-based algorithms are also proposed for this type of DR programs to deal with, for example, the unknown utility functions of the consumers \citep{khezeli2017risk,li2017distributed,moradipari2018learning}.

\textit{Preliminary work.} Some preliminary work was presented in the conference paper \citep{li2018learning}. This journal version strengthens the regret bounds, especially for the time-varying  target case, conducts  more intensive numerical analysis using realistic data from ISOs, provides more complete proofs, and adds more  intuitions and discussions to both theoretical and numerical results.


\vspace{6pt}
\noindent\textbf{Notations.} Let $\bar E$ and $|E|$ be the complement and the cardinality of the set $E$ respectively. For any positive integer $n$, let $[n]=\{1,\dots, n \}$. Let $I_E(x)$ be the indicator function: $I_E(x)=1$ if $x\in E$ and $I_E(x)=0$ if $x\not \in E$. For any two sets $A, B$, we define $A-B\coloneqq\{x\mid x\in A, x\not \in B \}$. When $k=0$, let $\sum_{i=1}^k a_i=0$ for any $a_i$, and define the set $\{\sigma(1),\dots, \sigma(k) \}=\emptyset$ for any $\sigma(i)$. 
 For $x\in \R^k$,  we consider $\|x\|_\infty = \max_{i\in[k]}|x_i|$, and
 write $f(x)=O(g(x))$ as $\|x\|_\infty \to +\infty$  if there exists a constant $M$ such that $|f(x) | \leq M |g(x)|$ for any $x$ with $x_i\geq M$ for some $i$; and  $f(x)=o(g(x))$  if $ f(x)/g(x)\to0$ as $\|x\|_\infty \to +\infty$. We usually omit ``as $\|x\|_\infty\to +\infty$" for simplicity. For the asymptotic behavior near zero,  consider the inverse of $\|x\|_\infty$.

\section{Problem Formulation}

Motivated by the discussion above, we formulate the demand response (DR)  as a CMAB problem in this section. We  focus on load reduction to illustrate the problem. The load increase can be treated in the same way. 


Consider a DR program with an aggregator and $n$ residential customers  over $T$ time steps, where each time step corresponds to one DR event.\footnote{The specific definition of DR events and the duration of each event are up to the choice of the system designer. Our methods can accommodate different scenarios.} Each customer is viewed as an arm in our CMAB problem. We consider a simple user (customer) behavior model, where each customer may either respond to a DR event by reducing one unit of power consumption with probability $0\leq p_i\leq 1$, or not respond with probability $1-p_i$. We denote the demand reduction by customer $i$ at time step $t$ as $X_{t,i}$, which is assumed to follow Bernoulli distribution, $X_{t,i}\sim \text{Bern}(p_i)$, and is independent across time.\footnote{For simplicity, we only consider that each customer has one unit to reduce. Our learning method can be extended to multi-unit setting and/or the setting where different users have different sizes of units. But the regret analysis will be more complicated which we leave as future work. As mentioned before, results in the paper have been used as a guideline for DR field studies \citep{conEdison}.} Different customers behave independently and may respond to the same DR event with  different probabilities. Though this behavior model may be oversimplified by neglecting the influences of temperatures, humidities, user fatigue, changes in lifestyles, etc., this simple model allows us to provide useful insights on improving the  reliability of the DR programs and lay the foundation for future  research on more realistic behavior models.

At each time $1\leq t\leq T$, there is a DR event with a nonnegative demand reduction target $D_t$ determined by the power system. This reduction target might be caused by a sudden drop of renewable energy generation or a peak load reduction request, etc. 
The aggregator aims to select a subset of customers, i.e. $S_t \subseteq [n]$, such that the total demand reduction is as close to the target as possible. The cost at time $t$ can be captured by the \textit{squared deviation} of the total reduction from the target $D_t$: 
$$ L_t(S_t)= \left(\sum_{i\in S_t}X_{t,i}-D_t\right)^2.$$
Noticing that the  demand reduction $X_{t,i}$ are random, we consider a goal of selecting a subset of customers $S_t$ to minimize the   squared deviation in expectation, that is,
\begin{equation}\label{equ: problem formulation}
S_t^* \subseteq \argmin_{S_t \subseteq [n]} \E\!\left[ L_t(S_t)\right].
\end{equation}
	In this paper, we will first study the scenario where the target $D$ is time-invariant (Section~\ref{sec: online alg} and \ref{sec: regret}). Then, we will extend the results to cope with time-varying targets to
	incorporate different DR signals resulted from the fluctuations of power supply and demand (Section~\ref{sec:time_varying}). 

When the response probability profile $ p=(p_1, \dots, p_n)$ is known, the problem \eqref{equ: problem formulation} is a combinatorial optimization. In Section \ref{sec: online alg}, we will provide an offline combinatorial optimization algorithm to solve the problem \eqref{equ: problem formulation}.

In reality, the response  probabilities are usually unknown. Thus, the aggregator should learn the probabilities from the feedback of the previous demand response events, then make online decisions to minimize the difference between the total demand reduction and the target $D_t$. The learning performance is measured by $\text{Regret}(T)$, which compares the total expected cost of online decisions and the optimal total expected costs in $T$ time steps:\footnote{Strictly speaking, this is the definition of pseudo-regret, because its benchmark is the optimal expected cost: $\min_{S_t \subseteq [n]}\E L_t(S_t) $, instead of the optimal cost for each time, i.e. $\min_{S_t \subseteq [n]} L_t(S_t) $.}
\begin{equation}
\begin{aligned}
&\text{Regret}(T) \coloneqq \E\left[\sum_{t=1}^T R_t(S_t)\right], \label{equ:regret def}
\end{aligned} 
\end{equation}
where $ R_t(S_t)\coloneqq L_t(S_t) - L_t(S^*_t)$ and the expectation is taken with respect to  $X_{t,i}$ and the possibly random $S_t$.

The feedback of previous demand response events includes the responses of every selected customer, i.e., $\{X_{t,i}\}_{i\in S_t}$. Such feedback structure is called \textit{semi-bandit} in literature \citep{chen2016combinatorial}, and carries more information than bandit feedback which only includes the realized cost $L_t(S_t)$.

Lastly, we note that our problem formulation can be applied to other applications beyond demand response. One example is introduced below.

\begin{example}
	Consider a crowd-sourcing related problem. Given  budget $D_t$, a survey planner sends out surveys and offers one unit of reward for each participant. Each potential participant may  participate with probability $p_i$. Let $X_{t,i}=1$ if agent $i$ participates; and $X_{t,i}=0$ if agent $i$ ignores the survey. The survey planner intends to maximize the total number of responses without exceeding the budget too much. One possible formulation 
	 is to select subset $S_t$ such that the total number of responses is close to the budget $D_t$, 
	 $$ \min_{S_t} \E\left(\sum_{i\in S_t}X_{t,i}-D_t\right)^2.$$
	 Since the participation probabilities are unknown, the  planner can learn the participation probabilities from the previous actions of the selected agents and then try to minimize the total costs during the learning process. 
\end{example}

\section{Algorithm Design} \label{sec: online alg}
This section considers time-invariant target $D$. We will first  provide an optimization algorithm for the offline problem, then introduce the notations for online algorithms and discuss two simple algorithms: greedy algorithm and CUCB. Finally, we introduce our online algorithm CUCB-Avg. 
\subsection{Offline Optimization}
When the probability profile $p$ is known, the problem \eqref{equ: problem formulation} becomes a combinatorial optimization problem:
\begin{align}\label{equ: offline minimization}
\min_{S \subseteq [n]} \left[(\sum_{i\in S}p_i-D)^2+ \sum_{i\in S}p_i(1-p_i)\right],
\end{align}
where we omit the subscript $t$ for simplicity of notation.
Though combinatorial optimization is NP-hard  and only has approximate algorithms in general, we are able to design a simple algorithm in Algorithm \ref{alg: offline oracle} to solve the problem \eqref{equ: offline minimization} exactly. Roughly speaking, Algorithm \ref{alg: offline oracle} takes two steps: i) rank the arms according to $p_i$, ii) determine the number $k$ according to the probability profile $p$ and the target $D$ and select the top $k$ arms. The output of Algorithm~\ref{alg: offline oracle} is denoted by $\phi(p,D)$ which is a subset of $[n]$. In the following theorem, we show that such algorithm finds an optimal solution to  \eqref{equ: offline minimization}. 

\begin{algorithm}\caption{Offline optimization algorithm}\label{alg: offline oracle}
	\begin{algorithmic}[1]
		\State \textbf{Inputs:}  $p_1,\dots,  p_n\in [0,1]$, $D>0$.
		\State Rank $p_i$ in a non-increasing order:

		$ p_{\sigma(1)}\geq \dots \geq p_{\sigma(n)}$.
		\State Find the smallest $k\geq 0$ such that
$$
		\sum_{i=1}^k p_{\sigma(i)}> D-1/2.$$
Let $k=n$ if
$
		 \sum_{i=1}^{n} p_{\sigma(i)}\leq D-1/2
$. Ties are broken randomly.
		\State \textbf{Ouputs}:  $\phi(p,D)=\{\sigma(1),\dots, \sigma(k) \}$
	\end{algorithmic}
\end{algorithm}
\begin{theorem}\label{thm: offline opt optimality}
	For any $D>0$, the output of
	Algorithm \ref{alg: offline oracle}, $\phi(p, D)$, is an optimal solution to \eqref{equ: offline minimization}.
\end{theorem}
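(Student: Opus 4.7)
The plan is to prove the theorem in two stages. Stage~1: show that any optimal subset can be taken to be of the sorted form $S_k \coloneqq \{\sigma(1),\dots,\sigma(k)\}$ for some $0 \leq k \leq n$. Stage~2: show that among the $S_k$'s, the one output by Algorithm~\ref{alg: offline oracle} is a minimizer.

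The single calculation driving both stages is a marginal identity. Writing $f(S) = (M(S)-D)^2 + \sum_{i \in S} p_i(1-p_i)$ with $M(S) \coloneqq \sum_{i\in S} p_i$, a direct expansion shows that for any $S \subseteq [n]$ and $j \notin S$ the $p_j^2$ terms cancel and
$$f(S \cup \{j\}) - f(S) = 2 p_j \bigl(M(S) - D + \tfrac{1}{2}\bigr).$$
Subtracting two such increments yields the swap identity: for $i \in S$, $j \notin S$,
$$f\bigl((S\setminus\{i\})\cup\{j\}\bigr) - f(S) = 2(p_j - p_i)\bigl(M(S\setminus\{i\}) - D + \tfrac{1}{2}\bigr).$$

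For Stage~1, I would first discard from an optimal $S^*$ every arm with $p_i=0$; by the add identity this leaves $f$ unchanged. If the remaining $S^*$ is not of the form $S_k$, there exist $i \in S^*$ and $j \notin S^*$ with $p_j > p_i > 0$. Local optimality of $S^*$ against removing $i$, via the add identity applied to $S^*\setminus\{i\}$, forces $M(S^*\setminus\{i\}) \leq D - 1/2$. The swap identity then gives $f((S^*\setminus\{i\})\cup\{j\}) \leq f(S^*)$, and optimality of $S^*$ upgrades this to equality. Since each such swap strictly decreases the rank-potential $\sum_{i\in S^*}\sigma^{-1}(i)$ (the swap replaces a higher-rank index by a lower-rank one), the process terminates at an optimal set of the form $S_k$.

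For Stage~2, the marginal identity specialized to $S=S_k$ and $j=\sigma(k+1)$ gives, with $M_k \coloneqq \sum_{i=1}^k p_{\sigma(i)}$,
$$f(S_{k+1}) - f(S_k) = 2 p_{\sigma(k+1)}\bigl(M_k - D + \tfrac{1}{2}\bigr).$$
Since $\{M_k\}$ is nondecreasing and $p_{\sigma(k+1)}\geq 0$, the sequence $\{f(S_k)\}_{k=0}^{n}$ is nonincreasing while $M_k \leq D - 1/2$ and nondecreasing thereafter, so its minimum is attained at the smallest $k^*$ with $M_{k^*} > D-1/2$, or at $k^* = n$ when no such index exists; this is exactly the output of Algorithm~\ref{alg: offline oracle}. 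The main obstacle will be the Stage~1 exchange argument, specifically the zero-cost case, because $S^*$ is only locally optimal and the swap identity does not by itself produce a strict decrease; the rank-potential monotonicity is what guarantees termination.
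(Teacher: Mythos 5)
Your proposal is correct, and although it shares the paper's skeleton (Stage~1: reduce to prefix sets under the sorted order; Stage~2: optimize over the prefix length), the mechanics of both stages are genuinely different from the paper's. For Stage~1 the paper (Lemma~\ref{lem: exists S*}) builds the candidate prefix $A$ in one shot and compares $\E L(A)$ with $\E L(S^*)$ by a long expansion in slack variables $\delta_x,\delta_y$, with a case split and an analysis of a quadratic in $\delta_y$; your exchange argument replaces all of that with repeated applications of the two-line swap identity, using local optimality only to pin down the sign of $M(S^*\setminus\{i\})-D+\tfrac{1}{2}$ (this is exactly the content of the paper's Lemma~\ref{lem:necessary condition}, whose proof is your add identity), and using the rank-potential to force termination through the zero-cost swaps — a subtlety you correctly flag, since local optimality alone never yields a strict decrease. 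For Stage~2 the paper checks the algorithm's output against the necessary conditions and treats the boundary case $\sum_{i=1}^{k}p_{\sigma(i)}=D-1/2$ separately (Corollary~\ref{cor: case 2}), whereas you read off unimodality of $k\mapsto f(S_k)$ directly from the sign of the increment $2p_{\sigma(k+1)}\bigl(M_k-D+\tfrac{1}{2}\bigr)$, which absorbs the boundary and the $D<1/2$ case automatically; this is tidier. The one imprecision is the claim that a non-prefix $S^*$ yields $i\in S^*$, $j\notin S^*$ with $p_j>p_i$: with ties this can fail (e.g.\ $S^*=\{\sigma(2)\}$ when $p_{\sigma(1)}=p_{\sigma(2)}$), but your swap process terminates with $\min_{i\in S}p_i\geq\max_{j\notin S}p_j$, and exchanging tied arms leaves $f$ unchanged, so a genuine prefix of equal value is still optimal — the same ``up to ties'' caveat the paper folds into the statement of Lemma~\ref{lem: exists S*} and Remark~1. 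What your route buys is brevity and transparency (every step is controlled by one identity); what the paper's route buys is an explicit quantitative comparison between $f(A)$ and $f(S^*)$ that it partially reuses in the robustness analysis of $\epsilon_0$.
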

\textit{Proof Sketch.}
We defer the detailed proof to Appendix \ref{aped: offline opt} and only introduce the intuition here. An optimal set $S$  roughly has two properties: i) the total expected contribution of $S$, $\sum_{i\in S} p_i$, is closed to the target $D$, ii) the total variance of  arms in $S$ is minimized. i) is roughly guaranteed by Line 3 of Algorithm \ref{alg: offline oracle}: it is easy to show that $|\sum_{i\in\phi(p,D)} p_i-D|\leq 1/2$. ii) is roughly guaranteed by only selecting arms with higher response probabilities, as indicated by Line 2 of Algorithm \ref{alg: offline oracle}. The intuition is the following. Consider an arm with large parameter $p_1$ and two arms with smaller parameters $p_2, p_3$. For simplicity, we let $p_1=p_2+p_3$. Thus replacing $p_1 $ with $p_2,p_3$ will not affect the first term in \eqref{equ: offline minimization}. However, 
$ p_1(1-p_1) \leq p_2(1-p_2) + p_3(1-p_3)$
by $p_1^2=(p_2+p_3)^2 \geq p_2^2+p_3^2$.
Hence, replacing one arm with higher response probability by two arms with lower response probabilities will  increase the variance. 
\qed

\begin{corollary}\label{cor: D<1/2}
	When $D<1/2$, the empty set is optimal. 
\end{corollary}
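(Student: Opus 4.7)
The plan is to derive Corollary~\ref{cor: D<1/2} as an immediate consequence of Theorem~\ref{thm: offline opt optimality} by tracing through what Algorithm~\ref{alg: offline oracle} does when $D<1/2$. First I would invoke the convention stated in the Notations paragraph: for $k=0$ the empty sum $\sum_{i=1}^{0} p_{\sigma(i)}$ is defined to be $0$, and the set $\{\sigma(1),\dots,\sigma(0)\}$ is defined to be $\emptyset$.

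Next I would check Line~3 of Algorithm~\ref{alg: offline oracle} at $k=0$. Since $D<1/2$, we have $D-1/2<0$, and therefore
\begin{equation*}
\sum_{i=1}^{0} p_{\sigma(i)} \;=\; 0 \;>\; D-\tfrac{1}{2}.
\end{equation*}
Hence $k=0$ already satisfies the strict inequality in Line~3, so it is the smallest such nonnegative integer. The output of the algorithm is therefore $\phi(p,D)=\{\sigma(1),\dots,\sigma(0)\}=\emptyset$.

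Finally, I would appeal to Theorem~\ref{thm: offline opt optimality}, which asserts that for any $D>0$ the output $\phi(p,D)$ is an optimal solution of \eqref{equ: offline minimization}. Combining this with the identification $\phi(p,D)=\emptyset$ above yields the claim that the empty set is optimal when $D<1/2$. There is no real obstacle: the only thing to be careful about is using the conventions for the empty sum and empty index set consistently, and noting that the hypothesis $D>0$ required by Theorem~\ref{thm: offline opt optimality} is compatible with $0<D<1/2$ (the regime of interest in the corollary). The argument is essentially a one-line verification that $k=0$ is selected by the algorithm, so the proof should fit in a couple of sentences.
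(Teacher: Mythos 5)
Your derivation is internally consistent as an implication — tracing Algorithm \ref{alg: offline oracle} at $k=0$ with the empty-sum convention correctly shows $\phi(p,D)=\emptyset$ when $D<1/2$ — but as a proof of the corollary it is circular relative to the paper's own logical structure. In the appendix, the paper's proof of Theorem \ref{thm: offline opt optimality} handles the case $D<1/2$ precisely by writing ``see Corollary \ref{cor: D<1/2}''; that is, the theorem's validity in the regime $0<D<1/2$ is itself established \emph{via} the corollary, not the other way around. So invoking the theorem to prove the corollary assumes exactly what needs to be shown.

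The paper instead proves the corollary directly from the local optimality condition (Lemma \ref{lem:necessary condition}): if a nonempty $S^*$ with $p_i>0$ on $S^*$ (which is without loss of generality, since zero-probability arms do not affect the loss) were optimal, then removing any $j\in S^*$ could not decrease the expected loss, which forces $\sum_{i\in S^*-\{j\}}p_i\leq D-1/2<0$ — impossible since the $p_i$ are nonnegative. If you want to keep your route, you would need an independent argument for why the empty output of Algorithm \ref{alg: offline oracle} is optimal when $D<1/2$, and that argument is essentially the Lemma \ref{lem:necessary condition} computation, so the detour through the theorem buys nothing.
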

\begin{remark}
	There might be more than one optimal subset. Algorithm \ref{alg: offline oracle} only outputs one of them.
\end{remark}

\subsection{Notations for Online Algorithms}

Let $\bar p_i(t)$ denote the sample average of parameter $p_i$ by time $t$ (including time $t$), then 
$$ \bar p_i(t) = \frac{1}{T_i(t)}\sum_{\tau \in I_i(t)} X_{\tau, i},$$
where  $I_i(t)$ denotes the set of time steps when arm $i$ is selected by time $t$ and $T_i(t)=| I_i(t)| $ denotes the number of times that arm $i$ has been selected by time $t$. 
Let $\bar p(t) = (\bar p_1(t), \dots, \bar p_n(t))$.
Notice that before making decisions at time $t$, only $\bar p(t-1)$ is available.

\subsection{Two Simple Online Algorithms: Greedy Algorithm and CUCB}\label{subsec: CUCB}
Next, we introduce two simple methods: greedy algorithm and CUCB, and explain their poor performance in our problem to gain intuitions for our algorithm design.

Greedy algorithm  uses the sample average of each parameter $\bar p_i(t-1)$ as an estimation of the unknown probability $p_i$ and chooses a subset based on the offline oracle described in Algorithm~\ref{alg: offline oracle}, i.e. $S_t = \phi(\bar p(t-1),D)$. The greedy algorithm is known to perform poorly  because  it
only exploits the current information, but fails to explore the unknown information, as demonstrated below. 
\begin{example}
Consider two arms that generate Bernoulli rewards with  expectation $p_1>p_2>0$. The goal is to select the arm with the higher  reward expectation, which is arm 1 in this case. Suppose after some time steps, arm 1's history sample  average $\bar p_1(t)$ is zero, while arm 2's history average $\bar p_2(t)$ is positive. In this case, the greedy algorithm will always select the suboptimal arm 2 in the future since $\bar p_2(t)>\bar p_1(t)=0$ for all future time $t$ and arm 1's history average will remain 0 due to insufficient exploration. Hence, the regret will be $O(T)$.



\end{example}

A well-known algorithm in CMAB literature  that balances the exploration and exploitation is CUCB \citep{chen2016combinatorial}. Instead of using sample average $\bar p(t-1)$ directly, CUCB considers an upper confidence bound:
\begin{equation} \label{eq:Ui}
	 U_i(t) =\min\left( \bar p_i(t-1)+ \sqrt{\frac{\alpha \log t}{2T_i(t-1)}},1\right),
\end{equation}
{\color{black}
where $\alpha\geq 0$ is the parameter to balance the tradeoff between $\bar p_i(t-1)$ (exploitation) and $T_i(t-1)$ (exploration). The output of CUCB is $S_t= \phi(U(t), D)$.  CUCB performs well in classic CMAB problems, such as maximizing the total contribution of $K$ arms for a fixed  $K$.




However, CUCB performs poorly in our problem, as shown in Section \ref{sec: simulation}. }The major problem of CUCB is the over-estimate of the arm parameter $p$. By choosing $S_t= \phi(U(t), D)$, CUCB selects less arms than needed, which not only results in a large deviation from the target, but also discourages exploration.

\begin{algorithm}\caption{CUCB-Avg}\label{alg: ucb_avg}
	\begin{algorithmic}[1]
		\State \textbf{Notations:}  $T_i(t)$ is the number of times selecting 
		arm $i$ by  time  $t$, and $\bar p_i(t)$ is the sample average of  arm $i$ by time $t$ (both including time $t$).
		\State \textbf{Inputs:} $\alpha$, $D$.
		\State \textbf{Initialization:}  For $t=1, \ldots, \ceil{\frac{n}{\ceil{2D}}}$, select $\ceil{2D}$ arms each time until each arm has been selected for at least once.
			Let $S_t$ be the set of arms selected at time $t$. Initialize $T_i(t)$ and $\bar p_i(t)$ by the observation $\{X_{t,i} \}_{i\in S_t}$.\textcolor{black}{\footnotemark{}}
		\For{$t=\ceil{\frac{n}{\ceil{2D}}}+1, \dots, T$ }
		\State 	Compute the upper confidence bound for each  $i$
		
		$U_i(t) = \min\left( \bar p_i(t-1)+ \sqrt{\frac{\alpha \log t}{2T_i(t-1)}},1\right)$.
		
		\State 	Rank $U_i(t)$ by a non-increasing order:  
		
		$U_{\sigma(t,1)}(t)\geq \dots \geq U_{\sigma(t,n)}(t)$.
		\State Find the smallest $k_t\geq 0$ such that 
		$$ \sum_{i=1}^{k_t} \bar p_{\sigma(t,i)}(t-1)>D-1/2$$
		\qquad or let $k_t=n$ if $ \sum_{i=1}^{n} \bar p_{\sigma(t,i)}(t-1)\leq D-1/2$. 
		\State  Select $S_t=\{\sigma(t,1),\dots, \sigma(t, k_t) \}$ 
		\State Update $T_i(t)$ and $\bar p_i(t)$ by observations $\{X_{t,i} \}_{i\in S_t}$
		\EndFor
	\end{algorithmic}
\end{algorithm}\footnotetext{\textcolor{black}{The initialization method  is
not unique and can be any method that selects each customer
for at least once.}}

\subsection{Our Proposed Online  Algorithm: CUCB-Avg}
Based on the discussion above, we propose a new method CUCB-Avg. The novelty of CUCB-Avg is that it utilizes both sample averages and upper confidence bounds by exploiting the structure of the offline optimal method. 

We note that the offline Algorithm~\ref{alg: offline oracle} selects the right subset of arms in two steps: i)  rank (top) arms, ii) determine the number $k$ of the top arms to select. In CUCB-Avg, we use the upper confidence bound $U_i(t)$ to rank the arms in a non-increasing order. This is the same as  CUCB. However, the difference is that our CUCB-Avg uses the sample average $\bar{p}_i(t-1)$ to decide the number of  arms to select at time $t$. The details of the algorithm are given in Algorithm~\ref{alg: ucb_avg}.

Now we explain  why the ranking rule and the selection rule of CUCB-Ave would work for our problem. 

The ranking rule is determined by $U_i(t)$. An arm with larger $U_i(t)$ is given a priority to be selected at time $t$. We note that  $U_i(t)$ is the summation of two terms: the sample average $\bar{p}_i(t-1)$ and the confidence interval radius that is related to how many times the arm has been explored. Therefore, an arm with a large $U_i(t)$ may    have a small $T_i(t-1)$, meaning  that the arm has not been explored enough; and/or  have a large $\bar p_i(t-1)$, indicating that the arm frequently responds in the history. In this way, CUCB-Avg selects both the under-explored arms (\textit{exploration}) and the arms with good performance in the past (\textit{exploitation}).

When determining $k$, CUCB-Avg uses the sample averages and selects enough arms such that the total sample average is close to $D$. Compared with CUCB which uses upper confidence bounds to determine $k$, our algorithm selects more arms, which reduces the load reduction difference from the target and also encourages  exploration.


\vspace{-8pt}

\section{Regret analysis}\label{sec: regret}
In this section, we will prove that our algorithm CUCB-Avg achieves $O(\log T)$ regret when $D$ is time invariant.

\subsection{The Main Result}

\begin{theorem}\label{thm: regret bdd D}
	There exists a constant $\epsilon_0>0$ determined by $p$ and $  D$, such that 
	for any $\alpha>2$, the regret of CUCB-Avg is upper bounded by
	\begin{equation}\label{equ: regret bdd D}
	\textup{Regret}(T)\leq M\left(\ceil{\frac{n}{\ceil{2D}}}+\frac{2 n}{\alpha -2}\right)+ \frac{\alpha Mn\log T}{2\epsilon_0^2}, 
	\end{equation}
	where $M=\max(D^2, (n-D)^2)$. 
\qed	
\end{theorem}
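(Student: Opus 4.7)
The plan is to adapt the classical UCB-style regret decomposition to CUCB-Avg's two-stage (rank-then-cutoff) selection rule. First I would isolate a gap constant $\epsilon_0>0$ depending only on $p$ and $D$ with the property that whenever every arm's sample mean is $\epsilon_0$-accurate and every UCB radius is at most $\epsilon_0$, CUCB-Avg's output $S_t$ coincides with an offline optimum of \eqref{equ: offline minimization}, hence $R_t(S_t)=0$. Writing $p_{\sigma(1)}\ge\dots\ge p_{\sigma(n)}$ for the sorted true parameters and $k^\star$ for the cutoff produced by Algorithm \ref{alg: offline oracle}, I would take $\epsilon_0$ to be the smaller of (i) half the smallest positive gap between consecutive $p_{\sigma(j)}$ values, which preserves the UCB ranking up to ties, and (ii) $\tfrac{1}{n}\min_{j\ne k^\star}|\sum_{i=1}^{j}p_{\sigma(i)}-(D-1/2)|$, which preserves the sample-average cutoff $k_t=k^\star$. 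Justification of the optimality claim under these conditions mirrors the proof sketch of Theorem \ref{thm: offline opt optimality}, together with the observation that permutations within classes of equal $p_i$ leave the objective unchanged.

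Next I would split the horizon into the initialization phase, contributing at most $M\lceil n/\lceil 2D\rceil\rceil$ to the regret, and the remaining rounds. For those rounds introduce the classical good event
\begin{equation*}
\mathcal C_t \,=\, \bigl\{|\bar p_i(t-1)-p_i|\le \sqrt{\alpha\log t/(2T_i(t-1))}\ \forall i\in[n]\bigr\}
\end{equation*}
and the well-explored event $\mathcal W_t=\{T_i(t-1)\ge L(t):=\lceil\alpha\log t/(2\epsilon_0^2)\rceil\ \forall i\}$. On $\mathcal C_t\cap\mathcal W_t$ every $|\bar p_i(t-1)-p_i|\le\epsilon_0$ and every $|U_i(t)-p_i|\le 2\epsilon_0$, so by the choice of $\epsilon_0$ both the UCB ranking and the $\bar p$-cutoff coincide with Algorithm \ref{alg: offline oracle}, giving $R_t=0$. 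Consequently
\begin{equation*}
\text{Regret}(T) \,\le\, M\!\left(\left\lceil\tfrac{n}{\lceil 2D\rceil}\right\rceil + \sum_{t}\bigl[\Pr(\overline{\mathcal C_t}) + \mathbb E\,\mathbf{1}_{\overline{\mathcal W_t}}\bigr]\right).
\end{equation*}

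For $\Pr(\overline{\mathcal C_t})$ I would apply Hoeffding's inequality and a union bound over $i\in[n]$ and over the $t$ possible values of $T_i(t-1)$, getting $\Pr(\overline{\mathcal C_t})\le 2nt\cdot t^{-\alpha}=2nt^{1-\alpha}$; summing over $t\ge 1$ and using $\alpha>2$ then yields a bound of the form $2n/(\alpha-2)$, matching the middle term of \eqref{equ: regret bdd D}. For $\sum_t\mathbf{1}_{\overline{\mathcal W_t}}$ I would invoke the usual potential argument: each arm can be selected while under-explored at most $L(T)$ times, so the total count of under-explored selections is at most $nL(T)=\alpha n\log T/(2\epsilon_0^2)$, which after multiplication by $M$ matches the final term of \eqref{equ: regret bdd D}.

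The main obstacle will be this last step: the event $\overline{\mathcal W_t}$ can in principle be triggered by an arm that is \emph{not} selected at time $t$, so naively summing $\mathbf{1}_{\overline{\mathcal W_t}}$ does not immediately reduce to the selection count of under-explored arms. The clean route is to strengthen the implication to ``on $\mathcal C_t$, if $R_t>0$ then some arm in $S_t$ is under-explored,'' thereby charging every regret-inducing round to an under-explored selection. This uses the UCB structure: an under-explored arm has an inflated $U_i$, so under $\mathcal C_t$ it must either rank in the top $k_t$ (and hence be selected) or else be displaced by arms whose $U$-values are even higher, which in turn forces those displacing arms to themselves carry the inflation and, via the choice of $\epsilon_0$, still yields an optimal $S_t$. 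Pinning down this combinatorial step, and verifying that the few exceptional rounds are absorbed in the existing $2n/(\alpha-2)$ bucket, is the delicate part; once that is in place, the three pieces combine to give exactly \eqref{equ: regret bdd D}.
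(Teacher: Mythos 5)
Your decomposition is the same as the paper's: initialization, a concentration-failure bucket controlled by Hoeffding plus a union bound over arms and over the possible values of $T_i(t-1)$ (giving the $2n/(\alpha-2)$ term), a bucket of rounds in which some \emph{selected} arm is under-explored, counted by the potential argument (giving the $\alpha n\log T/(2\epsilon_0^2)$ term), and a zero-regret bucket. The paper's events $E_t$ and $B_t(\epsilon_0)$ are exactly your $\overline{\mathcal C_t}$ and the selected-arm restriction of $\overline{\mathcal W_t}$ that you arrive at in your last paragraph, and your $\epsilon_0$ plays the same role as the paper's $\min(\delta_1/k,\delta_2/k,\Delta_k/2)$.

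The one genuine gap is in the step you yourself flag as delicate: the proof that on $\mathcal C_t$, if every arm in $S_t$ is well-explored then $S_t$ is optimal. Your proposed mechanism (an under-explored arm is ``displaced by arms whose $U$-values are even higher, which in turn forces those displacing arms to themselves carry the inflation'') is not the argument that closes this, and it is not clear it terminates: you are chasing inflation through arms about which you have no two-sided control. The fact that actually does the work is a one-sided bound that holds for \emph{every} arm under $\mathcal C_t$ alone, explored or not: $U_i(t)=\bar p_i(t-1)+\sqrt{\alpha\log t/(2T_i(t-1))}\ge p_i$. Combining this with $U_j(t)<p_j+2\epsilon_0$ for each well-explored selected $j$, any selected arm outside the optimal top-$k^\star$ would have $U_j< p_{\sigma(k^\star+1)}+\Delta_{k^\star}=p_{\sigma(k^\star)}\le p_i\le U_i$ for an unselected optimal arm $i$, contradicting the ranking rule; hence $S_t$ is nested with $\{\sigma(1),\dots,\sigma(k^\star)\}$, and the $\delta_1,\delta_2$ margins on the sample-average cutoff (each selected arm's $\bar p_j$ is within $\epsilon_0$ of $p_j$) rule out strict inclusion in either direction. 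No displacement chain is needed. Separately, your definition of $\epsilon_0$ via $\min_{j\ne k^\star}|\sum_{i=1}^{j}p_{\sigma(i)}-(D-1/2)|$ can equal zero in the degenerate case $\sum_{i=1}^{k^\star-1}p_{\sigma(i)}=D-1/2$ (where two adjacent subsets are both optimal), so the theorem as stated for arbitrary $p,D$ requires the more careful case analysis the paper carries out in its appendix.
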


We make a few comments before the proof.

{\color{black}
\nbf{Dependence on  $T$ and $n$.} The  dependence on the horizon $T$ is $O(\log T)$, so the average regret diminishes to zero as $T$ increases, indicating that our algorithm learns the customers' response probabilities effectively. The dependence on $n$ is polynomial, i.e. $O(n^3)$ by $M\sim O(n^2)$, showing that our algorithm  can handle a large number of arms effectively. The cubic dependence is likely to be a proof artifact and improving the dependence on $n$ is left as future work.


}




{\color{black}
\nbf{Role of $\epsilon_0$.} The bound depends on a constant term $\epsilon_0$ determined by $p$ and $D$ and such a bound is  referred to as a \textit{distribution-dependent  bound} in literature. We defer the explicit expression of $\epsilon_0$ to Appendix \ref{aped: D no regret} and only explain the intuition behind $\epsilon_0$ here. Roughly, $\epsilon_0$ is a robustness measure of our offline optimal algorithm, in the sense that  if the probability profile $p$ is perturbed  by $\epsilon_0$, i.e., $ |\tilde p_i-p_i|< \epsilon_0$ for all $i$, the output $\phi(\tilde p,D)$ of Algorithm~\ref{alg: offline oracle} would still be  optimal for the true profile $p$. Intuitively, if $\epsilon_0$ is large, the learning task is easy because we are able to find an optimal subset given a poor estimation, leading to a small regret. This explains why the upper bound in (\ref{equ: regret bdd D}) decreases when $\epsilon_0$ increases.
}


To discuss what factors will affect the robustness measure $\epsilon_0$, we provide an explicit expression of $\epsilon_0$ under two assumptions in the following proposition.

\begin{proposition}\label{prop: A1A2 epsilon0}
	Consider the following  assumptions.\\
	\textup{(A1)} 	$p_i$ are positive and distinct $p_{\sigma(1)}> \dots > p_{\sigma(n)}>0$. \\
	\textup{(A2)}	There exists $k\geq 1$ such that 
	$ \sum_{i=1}^k p_{\sigma(i)}> D-1/2,$ and $ \sum_{i=1}^{k-1} p_{\sigma(i)}< D-1/2
$.\\
	Then the $\epsilon_0$ in Theorem \ref{thm: regret bdd D} can be determined by:
	\begin{equation}
	\epsilon_0 = 
	\min\left(\frac{\delta_1}{k}, \frac{\delta_2}{k}, \frac{\Delta_k}{2}\right),
	\label{equ: def epsilon}
	\end{equation}
	where $k=| \phi(p, D)|$, $\sum_{i=1}^k p_{\sigma(i)}=D-1/2+\delta_1,$ $ \sum_{i=1}^{k-1} p_{\sigma(i)}=D-1/2-\delta_2,$ and $\Delta_i =p_{\sigma(i)}-p_{\sigma(i+1)} ,  \forall \ i=1,\dots, n-1$.
\end{proposition}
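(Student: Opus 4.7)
The plan is to show that the formula in (\ref{equ: def epsilon}) yields a valid robustness radius, in the sense described after Theorem~\ref{thm: regret bdd D}: whenever the estimate $\tilde p$ satisfies $\|\tilde p - p\|_\infty < \epsilon_0$, Algorithm~\ref{alg: offline oracle} applied to $\tilde p$ returns $\phi(\tilde p, D) = \{\sigma(1),\dots,\sigma(k)\} = \phi(p, D)$, which by Theorem~\ref{thm: offline opt optimality} is optimal for the true profile $p$. I would decompose the argument into two independent consequences of $\|\tilde p - p\|_\infty < \epsilon_0$: (i) the \emph{top-$k$ set} of $\tilde p$ coincides with that of $p$, and (ii) the cut-off index chosen by Algorithm~\ref{alg: offline oracle} on $\tilde p$ equals $k$. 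Under (A1)+(A2) these two facts together pin down the output $\phi(\tilde p,D)$.

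For step (i), I would use the gap $\Delta_k = p_{\sigma(k)} - p_{\sigma(k+1)} > 0$. For any $i\le k$ and any $j\ge k+1$,
\begin{equation*}
\tilde p_{\sigma(i)} \ge p_{\sigma(i)} - \epsilon_0 \ge p_{\sigma(k)} - \epsilon_0 > p_{\sigma(k+1)} + \epsilon_0 \ge \tilde p_{\sigma(j)},
\end{equation*}
where the strict inequality uses $\epsilon_0 < \Delta_k/2$. Hence the $k$ largest coordinates of $\tilde p$ form exactly the set $\{\sigma(1),\dots,\sigma(k)\}$, though their internal ordering may be permuted; for the cut-off condition in Algorithm~\ref{alg: offline oracle} only the \emph{set} matters, which is the reason the stronger gaps $\Delta_i$ for $i<k$ are not needed.

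For step (ii), I would verify the two partial-sum inequalities that characterize the output of Algorithm~\ref{alg: offline oracle}. Using (A2) and $\epsilon_0 \le \delta_1/k$,
\begin{equation*}
\sum_{i=1}^{k} \tilde p_{\sigma(i)} \ge \sum_{i=1}^{k} p_{\sigma(i)} - k\epsilon_0 = D - \tfrac12 + \delta_1 - k\epsilon_0 > D - \tfrac12,
\end{equation*}
so the algorithm terminates by index $k$. Conversely, since the top $k{-}1$ coordinates of $\tilde p$ form some $(k{-}1)$-subset of $\{\sigma(1),\dots,\sigma(k)\}$, their sum is bounded by the largest such sub-sum; dropping the coordinate with smallest $p$-value (namely $\sigma(k)$) maximizes the bound, giving
\begin{equation*}
\sum_{i=1}^{k-1} \tilde p_{\text{top}(i)} \le \sum_{i=1}^{k-1} p_{\sigma(i)} + (k-1)\epsilon_0 = D - \tfrac12 - \delta_2 + (k-1)\epsilon_0 \le D - \tfrac12,
\end{equation*}
where the last step uses $\epsilon_0 \le \delta_2/k \le \delta_2/(k-1)$ (for $k=1$ the inequality is vacuous). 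Thus the smallest index whose prefix sum exceeds $D-1/2$ is exactly $k$, and combining with step (i) the algorithm outputs $\{\sigma(1),\dots,\sigma(k)\} = \phi(p,D)$.

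I expect the only delicate point to be the upper bound in step (ii) on the top-$(k{-}1)$ sum of $\tilde p$: because the internal order of $\sigma(1),\dots,\sigma(k)$ is not preserved under perturbation, one must maximize over which element of the top-$k$ set ends up excluded by $\tilde p$, and then observe that excluding $\sigma(k)$ is the worst case for the partial-sum bound. Once that combinatorial maximization is made explicit, the three terms in the $\min$ of (\ref{equ: def epsilon}) correspond transparently to the three conditions (i), (ii)-upper, (ii)-lower, and the proposition follows.
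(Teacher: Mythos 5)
There is a genuine gap, and it is one of scope rather than of computation. What you prove is the \emph{offline} robustness property: if $|\tilde p_i - p_i|<\epsilon_0$ for \emph{every} $i$, then $\phi(\tilde p, D)=\phi(p,D)$. Your two steps (the top-$k$ set is preserved via $\Delta_k/2$, and the cut-off index is preserved via $\delta_1/k$ and $\delta_2/k$, including the correct handling of the internal permutation of the top $k$ coordinates) are sound for that claim --- indeed this is exactly the two-step explanation the paper gives in the main text right after Proposition~\ref{prop: A1A2 epsilon0} as \emph{intuition}. But the statement to be proved is that this $\epsilon_0$ is valid as the constant in Theorem~\ref{thm: regret bdd D}, i.e.\ that Lemma~\ref{lem: no regret bar E bar B} holds with this $\epsilon_0$: whenever $\bar E_t$ and $\bar B_t(\epsilon_0)$ occur, the \emph{online} algorithm CUCB-Avg selects an optimal subset. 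That does not follow from offline robustness, for two reasons. First, CUCB-Avg is not $\phi(\bar p(t-1),D)$: it ranks arms by the upper confidence bounds $U_i(t)$ but determines the cut-off by summing the sample averages $\bar p_{\sigma(t,i)}(t-1)$, so Algorithm~\ref{alg: offline oracle} is never actually applied to a single perturbed vector $\tilde p$. Second, the events $\bar E_t\cap\bar B_t(\epsilon_0)$ do \emph{not} give $|\bar p_i(t-1)-p_i|<\epsilon_0$ for all $i$ --- $\bar B_t(\epsilon_0)$ only guarantees $T_i(t-1)>\frac{\alpha\log T}{2\epsilon_0^2}$ for the arms $i\in S_t$ that are actually selected at time $t$. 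An unselected arm may have been explored only once and its sample average can be arbitrarily far from $p_i$; all one gets for it from $\bar E_t$ is the one-sided optimism $U_i(t)\ge p_i$. So the hypothesis $\|\tilde p-p\|_\infty<\epsilon_0$ of your argument is simply not available.

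The paper's actual proof (Appendix~\ref{aped: D no regret Ass}) is built precisely around this asymmetry: Lemma~\ref{lem:alg properties} characterizes the selected set $S_t$ through the partial sums of $\bar p_i(t-1)$ over $S_t$ and the minimal-UCB arm $i_{\min}$; Lemma~\ref{lem: epsilon < Deltak} uses the two-sided bound $U_j(t)<p_j+2\epsilon_0$ for selected $j$ against the one-sided bound $U_i(t)\ge p_i$ for unselected $i$ to show $S_t$ is comparable (by inclusion) with $\{\sigma(1),\dots,\sigma(k)\}$; and the contradiction steps using $\delta_1/k$ and $\delta_2/k$ are applied to sample averages of arms that are guaranteed to lie in $S_t$, where the $\epsilon_0$-accuracy is legitimate. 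Your partial-sum manipulations would reappear there almost verbatim, but the bridge from the offline map's stability to the behavior of the UCB-ranked, average-cut online rule is the substantive content of the proof, and it is missing from your proposal.
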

We defer the proof of the proposition to Appendix \ref{aped: D no regret Ass} and only make two comments here. Firstly, it is easy to verify that Assumptions (A1) and (A2) imply $\epsilon_0>0$. Secondly, we explain why $\epsilon_0$ defined in \eqref{equ: def epsilon} is a robustness measure, that is, we show if $\forall\, i,\ | \tilde p_i- p_i| <\epsilon_0$,  then $\phi(\tilde p, D)=\phi(p,D)$. This can be proved in two steps. Step 1: when $\epsilon_0 \leq \frac{\Delta_k}{2}$,
the $k$ arms with higher $\tilde {p}_i$ are the same $k$ arms with higher $p_i$ because for any $1\leq i \leq k$ and $k+1 \leq j \leq n$, we have $
\tilde p_{\sigma(i)}> p_{\sigma(k)}-\epsilon_0 \geq p_{\sigma(k+1)} + \epsilon_0> \tilde p_{\sigma(j)}$. 
 Step 2: by $\epsilon_0 \leq \min\left( \frac{\delta_1}{k},\frac{\delta_2}{k}\right)$ and the definition of $\delta_1$ and $\delta_2$, when $|\tilde p_i-p_i|\leq \epsilon_0$ for all $i$,  we have $\sum_{i=1}^k \tilde p_{\sigma(i)}> D-1/2$ and $\sum_{i=1}^{k-1} \tilde p_{\sigma(i)}  < D-1/2$. Consequently, by Algorithm \ref{alg: offline oracle}, $\phi(\tilde p, D)=\{\sigma(1),\dots, \sigma(k) \}=\phi(p,D)$.

Finally, we briefly discuss how to generalize the expression \eqref{equ: def epsilon} of $\epsilon_0$ to cases without (A1) and (A2). When (A1) does not hold, we only consider the gap between the arms that are not in a tie, i.e. $\{\Delta_i| \ \Delta_i>0, \  1\leq i\leq n-1 \}$. When (A2) does not hold and $\sum_{i=1}^{k-1} p_{\sigma(i)}=D-1/2$, we consider  less than $k-1$ arms to make the total expected contribution below $D-1/2$. An explicit expression of $\epsilon_0$ is provided in Appendix \ref{aped: D no regret}.

{\color{black}
\nbf{Comparison with the regret bound of classic CMAB.}
In classic CMAB literature when the goal is to select $K$ arms with the highest parameters given a fixed integer $K$, the regret bound usually depends on $\frac{\Delta_K}{2}$ \citep{kveton2015tight}. We note that $\frac{\Delta_K}{2}$ is similar to  $\epsilon_0$ in our problem, as it is the robustness measure of the top-$K$-arm problem in the sense that given any estimation $\tilde p$ with estimation error at most $\frac{\Delta_K}{2}$: $\forall\, i,\ | \tilde p_i- p_i| <\frac{\Delta_K}{2}$,  the top $K$ arms with the profile $\tilde p$ are the same as that with the profile $p$. In addition, we would like to mention that the regret bound in literature is usually linear on $1/\Delta_K$, while our regret bound is $1/\epsilon_0^2$. This difference may be an artificial effect of the proof techniques because our CMAB problem is more complicated. We leave it as future work to strengthen the results. 
}



\subsection{Proof of Theorem \ref{thm: regret bdd D}}

\nbf{Proof outline:} We divide the $T$ time steps into four parts, and bound the regret in each part separately. The partition of the time steps are  based on event $E_t$ and the event $B_t(\epsilon_0)$ defined below.
Let $E_t$ be the event when the sample average is outside the confidence interval considered in Algorithm \ref{alg: ucb_avg}: 
$$ E_t \coloneqq \left\{ \exists \, i\in [n], \ |\bar p_i(t-1) -p_i|\geq \sqrt{\frac{\alpha \log t}{2 T_i(t-1)}}   \right\}.$$
For any $\epsilon>0$, let $B_t(\epsilon)$ denote the event when Algorithm \ref{alg: ucb_avg} selects an arm who has been explored for no more than $ \frac{\alpha \log T}{2\epsilon^2} $ times:
\begin{equation}
B_t(\epsilon)\coloneqq \left\{ \exists \, i\in S_t, \ s.t. \ T_i(t-1) \leq \frac{\alpha \log T}{2\epsilon^2}   \right\}. \label{eq:Btepsilon}
\end{equation} 
Let $\epsilon_0>0$ be a  small number  such that Lemma \ref{lem: no regret bar E bar B} holds.

Now, we will define the four parts of the $T$ time steps, and briefly introduce the regret bound of each part.

\begin{enumerate}
	\item \textit{Initialization:} the regret bound does not depend on $T$ (Inequality (\ref{eq: D Rt(St)<= M})).
	\item \textit{When event $E_t$ happens:} the regret  bound does not depend on $T$ because $E_t$ happens rarely due to concentration properties in statistics (Lemma \ref{lem: regret 1bdd}).
	\item \textit{When event $\bar E_t$ and $B_t(\epsilon_0) $ happen:} the regret is at most $O(\log T)$ because $B_t(\epsilon_0)$ happens for at most $O(\log T)$ times (Lemma \ref{lem: regret4 bdd}).
	\item \textit{When  event $\bar E_t$ and $\bar B_t(\epsilon_0) $ happen,} the regret is zero due to the  enough exploration of the selected arms (Lemma \ref{lem: no regret bar E bar B}).
\end{enumerate}
Notice that the time steps are not divided sequentially here. For example, it is possible that $t=10$ and $t=30$ belong to Part 2 while $t=10$ belongs to Part 3.

\nbf{Proof details:} Firstly, it is without loss of generality to require $D\geq 1/2$ because when $D<1/2$, the optimal set is known to be the empty set by Corollary \ref{cor: D<1/2}, so the regret is zero by selecting no customers.

Secondly, we note that for all time steps $1\leq t\leq T$ and any $S_t\subseteq[n]$, the regret at   $t$ is upper bounded by 
\begin{align}
R_t(S_t )&  \leq L_t(S_t)      \leq  \max(D^2, (n-D)^2)  \eqqcolon M. \label{eq: D Rt(St)<= M}
\end{align}
Thus, the regret of initialization (Part 1) at $t=1,\dots, \ceil{\frac{n}{\ceil{2D}}}$ is  bounded by $M\ceil{\frac{n}{\ceil{2D}}}$.

Next, we bound the regret of Part 2 by the  Chernoff-Hoeffding's concentration  inequality. The intuition behind the proof is that $E_t$ happens rarely because the sample average $\bar{p}_i(t)$ concentrates around the true value $p_i$ with a high probability. 
\begin{theorem}[Chernoff-Hoeffding's inequality]\label{thm: chernoff}
Consider i.i.d. random variables $X_1,\dots, X_m$ with support  $[0,1]$ and mean $\mu$, then we have
	\begin{equation}
	\Pb\left(|\sum_{i=1}^m X_i -m\mu |\geq m \epsilon\right)\leq 2 e^{-2m \epsilon^2} .
	\end{equation}
\end{theorem}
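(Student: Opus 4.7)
The plan is to prove this as the standard two-sided Hoeffding inequality via the classical Chernoff-style exponential moment argument, then symmetrize.

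First I would bound the upper tail $\Pb(S_m - m\mu \geq m\epsilon)$ where $S_m = \sum_{i=1}^m X_i$. For any $s>0$, Markov's inequality applied to the exponential transform gives
\begin{equation*}
\Pb(S_m - m\mu \geq m\epsilon) \;\leq\; e^{-sm\epsilon}\,\E\!\left[e^{s(S_m - m\mu)}\right].
\end{equation*}
By independence, the moment generating function factorizes, so
$\E[e^{s(S_m-m\mu)}] = \prod_{i=1}^m \E[e^{s(X_i-\mu)}]$.

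The key auxiliary step is Hoeffding's lemma: if $Y$ is a zero-mean random variable supported on $[a,b]$, then $\E[e^{sY}] \leq e^{s^2 (b-a)^2/8}$. I would prove this by noting that $y \mapsto e^{sy}$ is convex, writing $y$ as a convex combination of the endpoints $a,b$ to get the pointwise bound $e^{sy}\leq \frac{b-y}{b-a}e^{sa} + \frac{y-a}{b-a}e^{sb}$, taking expectations to eliminate the linear term (using $\E Y = 0$), and then checking by a one-variable calculus argument (taking logs and bounding the second derivative by $(b-a)^2/4$, followed by a Taylor expansion around $s=0$) that the resulting expression is at most $e^{s^2(b-a)^2/8}$.

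Applying Hoeffding's lemma to $Y_i = X_i - \mu$, which has mean zero and support contained in $[-\mu, 1-\mu]$, an interval of length $1$, yields $\E[e^{sY_i}] \leq e^{s^2/8}$. Combining,
\begin{equation*}
\Pb(S_m - m\mu \geq m\epsilon) \;\leq\; \exp\!\left(-sm\epsilon + \tfrac{ms^2}{8}\right).
\end{equation*}
Minimizing the right-hand side in $s>0$ gives the optimal choice $s = 4\epsilon$, producing the bound $e^{-2m\epsilon^2}$. Finally, I would apply the identical argument to the variables $1-X_i$ (still supported on $[0,1]$ with mean $1-\mu$) to obtain the symmetric lower-tail bound $\Pb(S_m - m\mu \leq -m\epsilon) \leq e^{-2m\epsilon^2}$, and a union bound over the two tails yields the factor of $2$ in the stated inequality.

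The only slightly non-routine step is Hoeffding's lemma; everything else is bookkeeping. If one prefers to avoid proving Hoeffding's lemma from scratch, it can simply be quoted as a standard fact, in which case the entire argument reduces to the three lines: exponential Markov, independence, and optimization in $s$.
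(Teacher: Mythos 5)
Your proof is correct: the exponential Markov bound, the factorization by independence, Hoeffding's lemma applied to $Y_i=X_i-\mu$ on an interval of length $1$, the optimization $s=4\epsilon$ yielding the exponent $-2m\epsilon^2$, and the symmetrization via $1-X_i$ are all the standard ingredients and are carried out correctly. Note, however, that the paper does not prove this statement at all --- it quotes Chernoff--Hoeffding as a known textbook result and uses it as a black box in Lemma~\ref{lem: regret 1bdd} --- so there is no in-paper argument to compare against; your derivation is simply the canonical proof of the cited fact.
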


\begin{lemma}\label{lem: regret 1bdd}
	
	When $\alpha>2$, we have
$$
	\E \left[\sum_{t=1}^T I_{E_t} R_t(S_t)\right]\leq \frac{2Mn}{\alpha -2}.
$$
\end{lemma}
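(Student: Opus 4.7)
The plan is a standard UCB-style concentration argument with one subtle twist to handle the randomness of $T_i(t-1)$. First I would observe that the trivial bound $R_t(S_t) \leq M$ from \eqref{eq: D Rt(St)<= M} lets us reduce the problem to estimating probabilities:
\[
\E\!\left[\sum_{t=1}^T I_{E_t} R_t(S_t)\right] \;\leq\; M \sum_{t=1}^T \Pb(E_t).
\]
So the whole task is to show $\sum_{t=1}^T \Pb(E_t) \leq \tfrac{2n}{\alpha-2}$.

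Second, I would union bound $\Pb(E_t)$ over the $n$ arms and over all possible realized values $s$ of the random count $T_i(t-1)$. This peeling step is needed because $\bar p_i(t-1)$ is an average of a random number of samples, so we cannot directly apply Chernoff–Hoeffding to $\bar p_i(t-1)$. Writing $\bar p_i^{(s)}$ for the average of the first $s$ i.i.d.\ Bernoulli($p_i$) samples of arm $i$, and using that $T_i(t-1) \in \{1,\dots,t-1\}$ after initialization, we get
\[
\Pb(E_t) \;\leq\; \sum_{i=1}^n \sum_{s=1}^{t-1} \Pb\!\left(|\bar p_i^{(s)} - p_i| \geq \sqrt{\tfrac{\alpha \log t}{2s}}\right).
\]
Applying Theorem~\ref{thm: chernoff} to each term with $\epsilon = \sqrt{\alpha \log t/(2s)}$ gives a bound of $2\exp(-\alpha\log t) = 2 t^{-\alpha}$, independent of $s$. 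Hence $\Pb(E_t) \leq 2n(t-1) t^{-\alpha}$.

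Third, I would sum over $t$ using an integral comparison. Since $(t-1)t^{-\alpha} \leq t^{1-\alpha}$ and $t \mapsto t^{1-\alpha}$ is decreasing for $\alpha > 2$,
\[
\sum_{t=1}^T (t-1) t^{-\alpha} \;=\; \sum_{t=2}^T (t-1) t^{-\alpha} \;\leq\; \sum_{t=2}^\infty t^{1-\alpha} \;\leq\; \int_{1}^{\infty} x^{1-\alpha}\, dx \;=\; \frac{1}{\alpha-2},
\]
where the condition $\alpha > 2$ is exactly what ensures the integral is finite. Combining the three steps yields
\[
\E\!\left[\sum_{t=1}^T I_{E_t} R_t(S_t)\right] \;\leq\; M \cdot 2n \cdot \frac{1}{\alpha-2} \;=\; \frac{2Mn}{\alpha-2}.
\]

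There is no serious obstacle here; the only real subtlety is the peeling over $s$ to handle the random $T_i(t-1)$, which is what forces the extra factor of $(t-1)$ and in turn requires the strict inequality $\alpha > 2$ (rather than the $\alpha > 1$ that would suffice for a deterministic schedule). Everything else — the concentration step and the $p$-series bound — is routine.
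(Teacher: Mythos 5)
Your proposal is correct and follows essentially the same route as the paper's proof: reduce to $\sum_t \Pb(E_t)$ via the trivial bound $R_t(S_t)\leq M$, union bound over the $n$ arms and peel over the possible values $s$ of $T_i(t-1)$, apply Chernoff--Hoeffding to get $2t^{-\alpha}$ per term, and finish with the integral comparison giving $1/(\alpha-2)$. Your explicit replacement of the random-length average by the fixed-sample average $\bar p_i^{(s)}$ is a slightly more careful writing of the same peeling step the paper performs by conditioning on the event $\{T_i(t-1)=s\}$.
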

\begin{proof} The number of times $E_t$ happens is bounded by
	\begin{align*}
	&	\E \left[\sum_{t=1}^T I_{E_t} \right]= \sum_{t=1}^T \Pb(E_t)   \\
	& \leq \sum_{t=1}^T \sum_{i=1}^n  \Pb\left(|\bar p_i(t-1)-p_i| \geq \sqrt{\frac{\alpha \log t}{2 T_i(t-1)}}\right)   \\
	& \leq \sum_{t=1}^T \sum_{i=1}^n  \sum_{s=1}^{t-1}\Pb(|\bar p_i(t-1)-p_i| \geq \sqrt{\frac{\alpha \log t}{2 s}},  T_i(t-1)  = s)   \\
	& \leq \sum_{t=1}^T \sum_{i=1}^n  \sum_{s=1}^{t-1}\, \frac{2}{t^\alpha}  \leq \sum_{t=1}^T \, \frac{2n}{t^{\alpha-1}}   \leq \frac{2n}{\alpha -2},
	\end{align*}
	where the first inequality is by enumerating possible $i \in [n]$, the second inequality is by enumerating possible values of $T_i(t-1)$: $\{1, \dots, t-1\}$, the third inequality is by Chernoff-Hoeffding's inequality, and the last inequality is by $\sum_{t=1}^T \frac{1}{t^{\alpha-1}}  \leq \int_{1}^{+\infty} \frac{1}{t^{\alpha-1}} \leq \frac{1}{\alpha -2}$. Then by inequality \eqref{eq: D Rt(St)<= M} the proof is completed.
\end{proof}

Next, we show the regret of Part 3 is at most  $O(\log T)$.
\begin{lemma}\label{lem: regret4 bdd}
	For any $\epsilon_0>0$, the regret in Part 3 is bounded by
	$
	\E\left[ \sum_{t=1}^T  R_t(S_t)I_{\{ \bar E_t, B_t(\epsilon_0) \}}\right] \leq \frac{\alpha M n \log T}{2\epsilon_0^2}
	$
\end{lemma}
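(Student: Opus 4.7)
The plan is to replace $R_t(S_t)$ by its trivial upper bound $M$ and then derive a purely pathwise bound on $\sum_t I_{B_t(\epsilon_0)}$ via an elementary counting argument. The event $\bar E_t$ plays no role in this lemma: only the indicator $I_{B_t(\epsilon_0)}$ matters, so one simply drops $\bar E_t$. Set $c := \alpha\log T/(2\epsilon_0^2)$; the target is the deterministic bound $\sum_{t=1}^T I_{B_t(\epsilon_0)} \leq nc$, which after multiplying by $M$ yields the lemma.

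First I would invoke inequality \eqref{eq: D Rt(St)<= M} together with $I_{\bar E_t, B_t(\epsilon_0)} \leq I_{B_t(\epsilon_0)}$ to obtain
\[
\sum_{t=1}^T R_t(S_t)\, I_{\bar E_t, B_t(\epsilon_0)} \;\leq\; M\sum_{t=1}^T I_{B_t(\epsilon_0)},
\]
so the problem reduces to bounding $\sum_t I_{B_t(\epsilon_0)}$. Next, by the definition of $B_t(\epsilon_0)$ in \eqref{eq:Btepsilon}, a single union bound over the arms gives
\[
I_{B_t(\epsilon_0)} \;\leq\; \sum_{i=1}^n I_{\{i\in S_t,\,T_i(t-1)\le c\}},
\]
and swapping the order of summation reduces the task to controlling, for each fixed arm $i$, the number of time steps at which $i$ is selected while its prior pull-count is at most $c$.

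The remaining step is a book-keeping argument. Each time arm $i$ is selected the counter $T_i$ increases by exactly one, so the consecutive values of $T_i(t-1)$ observed at the selection instants of $i$ form the integer sequence $0,1,2,\ldots$. The condition $T_i(t-1)\leq c$ therefore holds for at most $\lfloor c\rfloor+1$ selections of $i$ in total; restricting attention to the main loop (after initialization, where every arm already satisfies $T_i(t-1)\geq 1$) discards the $T_i(t-1)=0$ case and leaves at most $\lfloor c\rfloor\leq c$ qualifying selections per arm. Summing over the $n$ arms yields $\sum_{t=1}^T I_{B_t(\epsilon_0)}\leq nc$, and multiplying by $M$ gives the stated bound. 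There is no real obstacle here: the lemma is essentially deterministic and requires no probabilistic concentration; the only mild subtlety is aligning the counting with the initialization convention so that one gets $nc$ rather than $n(c+1)$.
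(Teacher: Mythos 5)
Your proposal is correct and follows essentially the same route as the paper's proof: bound $R_t(S_t)$ by $M$ via inequality \eqref{eq: D Rt(St)<= M}, drop $\bar E_t$, and count that each occurrence of $B_t(\epsilon_0)$ increments some under-explored arm's counter, so it can happen at most $\frac{\alpha n\log T}{2\epsilon_0^2}$ times. Your treatment of the initialization convention is slightly more careful than the paper's, but the argument is the same.
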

\begin{proof}
	By the definition of $B_t(\epsilon_0)$ in \eqref{eq:Btepsilon}, whenever $B_t(\epsilon_0)$ happens, the algorithm  selects an arm $i$ that has not been selected for $\frac{\alpha \log T}{2\epsilon_0^2}$ times, increasing the selection time counter $T_i(t)$ by one. Hence, $ B_t(\epsilon_0) $ can happen for at most $\frac{\alpha n \log T}{2\epsilon_0^2}$ times. Then, by inequality \eqref{eq: D Rt(St)<= M}, the proof is completed.
\end{proof}

When $\bar E_t$ and $\bar B_t(\epsilon_0)$ happen (Part 4),  every selected arm is fully explored and every arm's sample average is within the confidence interval. As a result,  CUCB-Avg selects the right subset and hence contributes zero regret. This is formally stated in the following lemma.

\begin{lemma}\label{lem: no regret bar E bar B}
	
	There exists $\epsilon_0>0$, such that for each $1\leq t\leq T$, if $\bar E_t$ and $\bar B_t(\epsilon_0)$ happen,  CUCB-Avg selects an optimal subset and $\E\!\left[ R_t(S_t)I_{\{ \bar E_t,\bar B_t(\epsilon_0) \}} \right]=0$. Consequently, the regret in Part 4 is 0.
\end{lemma}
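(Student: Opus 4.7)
The plan is to take $\epsilon_0$ as given in Proposition \ref{prop: A1A2 epsilon0}, namely $\epsilon_0=\min(\delta_1/k^*,\delta_2/k^*,\Delta_{k^*}/2)$ with $k^*\coloneqq|\phi(p,D)|$, and to prove that on the event $\bar E_t\cap\bar B_t(\epsilon_0)$ the selection $S_t$ produced by CUCB-Avg coincides with $\phi(p,D)=\{\sigma(1),\dots,\sigma(k^*)\}$. Optimality of $\phi(p,D)$ (Theorem \ref{thm: offline opt optimality}) then gives $R_t(S_t)=0$ on this event, from which $\E[R_t(S_t)I_{\{\bar E_t,\bar B_t(\epsilon_0)\}}]=0$ follows, together with the zero contribution to Part 4 of the regret decomposition.

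The first step is to translate the two events into per-arm bounds. Writing $c_i\coloneqq\sqrt{\alpha\log t/(2T_i(t-1))}$, under $\bar E_t$ we have $|\bar p_i(t-1)-p_i|<c_i$ for every $i$, and in particular $U_i(t)>p_i$. Under $\bar B_t(\epsilon_0)$, every $i\in S_t$ satisfies $T_i(t-1)>\alpha\log T/(2\epsilon_0^2)$, whence $c_i<\epsilon_0$ and therefore $|\bar p_i(t-1)-p_i|<\epsilon_0$ and $p_i<U_i(t)<p_i+2\epsilon_0$ for each selected arm.

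With these bounds I would then establish $k_t=k^*$ in two steps. First, if $k_t\leq k^*-1$, the algorithm's rule $\sum_{i=1}^{k_t}\bar p_{\sigma(t,i)}(t-1)>D-1/2$ combined with the selected-arm bound $\bar p_i<p_i+\epsilon_0$ and the monotone inequality $\sum_{i=1}^{k_t}p_{\sigma(t,i)}\leq\sum_{i=1}^{k^*-1}p_{\sigma(i)}=D-1/2-\delta_2$ forces $\epsilon_0>\delta_2/k^*$, contradicting the choice of $\epsilon_0$; hence $k_t\geq k^*$. Second, once $k_t\geq k^*$, the top $k^*$ arms by $U$ all lie in $S_t$ and are well explored, so if a bad arm $j$ (with $p$-rank $>k^*$) were swapped into this top group in place of some good $\sigma(i)$ ($i\leq k^*$), then $U_j\geq U_{\sigma(i)}$ together with $U_j<p_j+2\epsilon_0$ and $U_{\sigma(i)}>p_{\sigma(i)}$ would yield $p_{\sigma(i)}-p_j<2\epsilon_0\leq\Delta_{k^*}$, contradicting $p_{\sigma(i)}-p_j\geq p_{\sigma(k^*)}-p_{\sigma(k^*+1)}=\Delta_{k^*}$. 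Therefore $\{\sigma(t,1),\dots,\sigma(t,k^*)\}=\{\sigma(1),\dots,\sigma(k^*)\}$, and using $\bar p_{\sigma(i)}(t-1)>p_{\sigma(i)}-\epsilon_0$ gives $\sum_{i=1}^{k^*}\bar p_{\sigma(i)}(t-1)>D-1/2+\delta_1-k^*\epsilon_0\geq D-1/2$ by $\epsilon_0\leq\delta_1/k^*$, so the minimality of $k_t$ forces $k_t\leq k^*$. Hence $S_t=\phi(p,D)$.

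The main obstacle is the careful tracking of strict versus non-strict inequalities. Because $\epsilon_0$ is a minimum of three quantities and can attain equality in any of them, each of the contradictions above must be derived from a strict bound, and this strictness has to be traced back to the strict inequalities $|\bar p_i-p_i|<c_i$ inside $\bar E_t$ and $T_i(t-1)>\alpha\log T/(2\epsilon_0^2)$ inside $\bar B_t(\epsilon_0)$, and then propagated through the sums and through the comparisons of $U$. When assumptions (A1) and (A2) fail — ties among the $p_i$ or $\sum_{i=1}^{k^*-1}p_{\sigma(i)}=D-1/2$ — one substitutes the generalized $\epsilon_0$ defined in Appendix \ref{aped: D no regret}, which uses only nontrivial gaps $\Delta_i>0$ and the smallest prefix whose $p$-sum lies strictly below $D-1/2$; the same structural argument then carries over.
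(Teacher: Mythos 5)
Your proposal is correct and follows essentially the same route as the paper's proof in Appendices B--C: it uses the same $\epsilon_0$ from Proposition \ref{prop: A1A2 epsilon0}, the same translation of $\bar E_t$ and $\bar B_t(\epsilon_0)$ into the per-arm bounds $|\bar p_i(t-1)-p_i|<\epsilon_0$ and $p_i\leq U_i(t)<p_i+2\epsilon_0$ for selected arms, and the same three-way use of $\Delta_{k^*}/2$ (to pin down which arms occupy the top of the $U$-ranking, the content of the paper's Lemma \ref{lem: epsilon < Deltak}), $\delta_2/k^*$ (to rule out selecting too few), and $\delta_1/k^*$ (to rule out selecting too many), before deferring ties and the degenerate prefix-sum case to the generalized $\epsilon_0$ exactly as the paper does. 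The only cosmetic difference is that you argue directly about the count $k_t$ versus $k^*$ where the paper phrases Steps 1--2 as ruling out the strict set inclusions $S_t\subsetneqq S^*$ and $S^*\subsetneqq S_t$ via its Lemma \ref{lem:alg properties}(i)--(ii).
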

\nit{Proof Sketch:} We defer the proof to Appendix \ref{aped: D no regret Ass} and \ref{aped: D no regret} and sketch the proof ideas here, which is based on two facts: \\
Fact 1: when $\bar E_t$ and $\bar B(\epsilon_0)$ happen, the upper confidence bounds can be bounded by
$ U_i(t)>p_i$ for all $i\in [n]$,
and the confidence bounds of the selected arm $j$ satisfy
$$ 
\left|\bar p_j(t-1) -p_j\right|<\epsilon_0, \ U_j(t)<p_j + 2\epsilon_0, \ \forall\, j \in S_t.$$
Fact 2: when $\epsilon_0$ is small enough, CUCB-Avg  selects an optimal subset.

To get the intuition for Fact 2, we consider the expression of $\epsilon_0$ in \eqref{equ: def epsilon} under Assumption (A1) (A2) in Proposition \ref{prop: A1A2 epsilon0}. Let  $\phi(p,D)=\{\sigma(1),\dots, \sigma(k)  \}$ denote the  optimal subset. In the following, we roughly explain why the selected subset $S_t$ is optimal given $\epsilon_0$ defined in \eqref{equ: def epsilon}:

	i) By $\epsilon_0 \leq \frac{\Delta_k}{2}$, we can show that the selected subset $S_t$ is either a superset or a subset of the optimal subset $\{\sigma(1),\dots, \sigma(k)  \}$.
	
	ii) By $\epsilon_0 \leq \delta_1/k$, we can show that we will not select more than $k$ arms, because, informally, even if  we underestimate $p_i$, the sum of arms in $\{\sigma(1),\dots, \sigma(k)  \}$ is still larger than $D-1/2$.
	
iii) By  $\epsilon_0 \leq \delta_2/k$, we can show that we will not select less than $k$ arms, because, informally, even if  we overestimate $p_i$, the sum of  $k-1$ arms in $\{\sigma(1),\dots, \sigma(k)  \}$ is still smaller than $D-1/2$.
\qed


The proof of Theorem \ref{thm: regret bdd D} is completed by summing up the regret bounds of Part 1-4.

\section{Time-varying target}\label{sec:time_varying}
In practice,  the load reduction target is usually time-varying. We will study the performance of CUCB-Avg in the time-varying case below. 


{\color{black}
Notice that CUCB-Avg can be directly applied to the time-varying case by using $D_t$ in Algorithm \ref{alg: ucb_avg} at each time step $t$. }

Next, we  provide a regret bound for CUCB-Avg in the time-varying case. Notice that we impose no assumption on $D_t$ except that it is bounded, which is almost always the case in practice.
\begin{assumption}\label{ass: Dt bdd}
	There exists a finite $\bar{D}>0$ such that
	$0< D_t \leq \bar D, \ \forall \ 1\leq t\leq T$.
\end{assumption}

\begin{theorem}\label{thm: Dt regret bound}
	Suppose Assumption \ref{ass: Dt bdd} holds. When $T>2$, for  any $\alpha>2$, the regret of CUCB-Avg is  bounded by
	\begin{align*}
	\textup{Regret}(T)&\leq \bar M  n+ \frac{2\bar Mn}{\alpha -2} 
	+ \frac{\alpha \bar M n \log T}{2\epsilon_1^2}\\
	& + 2n^2 \sqrt{2\alpha \log T}\sqrt{T+ \frac{\alpha  \log T}{2\epsilon_1^2}},
	\end{align*}
	where  $\bar M = \max(\bar D^2, n^2)$, $\epsilon_1= \min( \frac{\Delta_{\text{min}}}{2}, \frac{\beta_{\text{min}}}{n})$, $\Delta_{\text{min}}= \min\{\Delta_i \mid 1\leq i \leq n-1, \Delta_i>0\}$ and $\beta_{\text{min}} = \min\{p_i \mid 1\leq i \leq n , p_i>0\}$.
\end{theorem}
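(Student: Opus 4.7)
The plan is to mirror the four-part decomposition used in the proof of Theorem \ref{thm: regret bdd D}, partitioning the $T$ time steps according to the events $E_t$ and $B_t(\epsilon_1)$, but replacing the $D$-dependent robustness measure $\epsilon_0$ by the $D_t$-independent quantity $\epsilon_1 = \min(\Delta_{\min}/2,\ \beta_{\min}/n)$. The intuition is that $\epsilon_1 \leq \Delta_{\min}/2$ still guarantees that a sufficiently accurate estimate preserves the ranking of arms by $p_i$, and $\epsilon_1 \leq \beta_{\min}/n$ keeps zero-probability arms separated from the positive ones. Because $\epsilon_1$ no longer encodes the target-dependent gaps $\delta_1,\delta_2$ of Proposition \ref{prop: A1A2 epsilon0}, we can only conclude that $S_t$ is a prefix of the true $p_i$-ordering, not that its size $k_t$ equals $k_t^*$; this is why Lemma \ref{lem: no regret bar E bar B} (zero regret in Part 4) must be replaced by a non-trivial bound.

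Parts 1--3 essentially carry over. The initialization takes at most $n$ steps since $D_t > 0$ forces $\lceil 2 D_t \rceil \geq 1$, each contributing at most $\bar M = \max(\bar D^2, n^2)$, yielding $\bar M n$. The Chernoff--Hoeffding argument of Lemma \ref{lem: regret 1bdd} is unchanged (it does not use $D$) and gives the $2\bar M n /(\alpha-2)$ term. For Part 3, the counting argument of Lemma \ref{lem: regret4 bdd} applied with $\epsilon_1$ shows that $B_t(\epsilon_1)$ occurs at most $\alpha n \log T/(2\epsilon_1^2)$ times, yielding the $\alpha \bar M n \log T/(2\epsilon_1^2)$ term.

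The heart of the argument is Part 4. Under $\bar E_t \cap \bar B_t(\epsilon_1)$, every selected arm satisfies $|\bar p_i(t-1) - p_i| < \epsilon_1$, so by the choice of $\epsilon_1$ the $U_i(t)$-ranking coincides with the true ranking on all selected arms and no zero-probability arm is selected. Consequently both $S_t$ and $S_t^*$ are prefixes of the true ordering, so $S_t \triangle S_t^*$ is the contiguous tail between positions $\min(k_t,k_t^*)+1$ and $\max(k_t,k_t^*)$. Writing the expected per-step regret as
\[
\E[R_t] = \bigl[(\mu(S_t)-D_t)^2 - (\mu(S_t^*)-D_t)^2\bigr] + \bigl[V(S_t) - V(S_t^*)\bigr],
\]
with $\mu(S)=\sum_{i\in S} p_i$ and $V(S)=\sum_{i\in S} p_i(1-p_i)$, I will use $|\mu(S_t^*)-D_t|\leq 1/2$ and the algorithm's stopping rule $|\sum_{i\in S_t} \bar p_i(t-1) - D_t|\leq 1/2$ to bound $|\mu(S_t) - \mu(S_t^*)|$ and $|V(S_t) - V(S_t^*)|$ in terms of $\sum_{i \in S_t\cup S_t^*} |\bar p_i(t-1) - p_i|$. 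Invoking $\bar E_t$ then replaces each such deviation by $\sqrt{\alpha \log t /(2 T_i(t-1))}$, producing a per-step bound of order $n \sum_{i\in S_t} \sqrt{\alpha \log t/(2T_i(t-1))}$ (contributions from $S_t^*\setminus S_t$, when this set is nonempty, are controlled analogously since those arms must also have been frequently selected in the past).

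Summing over Part 4 is then done by Cauchy--Schwarz,
\[
\sum_{t\in \text{Part 4}} \sum_{i \in S_t} \frac{1}{\sqrt{T_i(t-1)}} \leq \sqrt{\sum_{t\in \text{Part 4}}|S_t|}\cdot\sqrt{\sum_{t\in \text{Part 4}} \sum_{i \in S_t} \frac{1}{T_i(t-1)}},
\]
combined with the standard telescoping bound $\sum_{t}\sum_{i\in S_t} 1/T_i(t-1) \leq n(1+\log T)$ and the fact that the aggregate number of selections across Parts 3 and 4 is at most $n\bigl(T + \alpha \log T/(2\epsilon_1^2)\bigr)$; this is what produces the $\sqrt{T+\alpha\log T/(2\epsilon_1^2)}$ factor in the stated bound, and the extra $n$ gathered from the per-step analysis gives the overall $n^2\sqrt{2\alpha \log T}\,\sqrt{T + \alpha \log T/(2\epsilon_1^2)}$. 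The main technical obstacle I anticipate is precisely the per-step bound on $|\mu(S_t)-\mu(S_t^*)|$ when $k_t\neq k_t^*$: the algorithm's stopping rule only guarantees closeness in sample averages, and translating this to a deviation in true means requires carefully accumulating the individual confidence widths of all arms in the symmetric difference without losing additional factors of $n$.
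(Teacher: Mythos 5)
Parts 1--3 of your decomposition match the paper and go through. The genuine gap is in Part 4, and it is twofold. First, the paper's key structural fact (Lemma \ref{lem: Sst under assumption Dt}) is that under $\bar E_t\cap\bar B_t(\epsilon)$ with $\epsilon\le\epsilon_1$ the selected set differs from an optimal set by \emph{at most one arm}. This is precisely what the component $\beta_{\min}/n$ of $\epsilon_1$ buys --- not merely separation of zero-probability arms, as you suggest: if $S_t$ omitted two or more arms of the optimal prefix, the true mass of a single omitted arm, at least $\beta_{\min}$, would dominate the accumulated estimation error $(l_1^t-2)\epsilon_1\le\beta_{\min}$ and force $\sum_{i\in S_t}\bar p_i(t-1)\le D_t-1/2$, contradicting the stopping rule (and symmetrically for over-selection via Lemma \ref{lem:alg properties}(ii)). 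You explicitly give up on this and allow $S_t\triangle S_t^*$ to be a long contiguous tail; but then no $O(n\epsilon)$ per-step bound is available: for an over-selected tail $A$ the exact per-step regret is $2\mu(A)\delta_1^t+\bigl(\mu(A)^2-\sum_{i\in A}p_i^2\bigr)$, and the second term is an order-one quantity whenever $|A|\ge 2$ (two extra arms with $p_i=1/2$ contribute $1/2$), uncontrolled by any estimation error. Second, even granting $|S_t\triangle S_t^*|=1$, your plan to bound the brackets $\bigl[(\mu(S_t)-D_t)^2-(\mu(S_t^*)-D_t)^2\bigr]$ and $\bigl[V(S_t)-V(S_t^*)\bigr]$ \emph{separately} in terms of $\sum_i|\bar p_i(t-1)-p_i|$ cannot succeed: each bracket is individually of size $\Theta(p_{\sigma(l_1^t)})$, i.e.\ order one, and only their algebraic combination collapses, as in $\E[R_t\mid\F_{t-1}]=2p_{\sigma(l_1^t)}\delta_2^t$, after which the stopping rule $\sum_{i\in S_t}\bar p_i(t-1)>D_t-1/2$ yields $\delta_2^t<(l_1^t-1)\epsilon\le n\epsilon$. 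You flag exactly this as your ``main technical obstacle,'' but the route you propose does not resolve it; the resolution requires the exact cancellation plus the one-arm-off property.

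A smaller issue is the final summation. The paper layers Part 4 by the events $H_t^q$ and sums the error levels $\eta_q$ directly, giving $O\bigl(n^2\sqrt{\log T}\sqrt{T+\alpha\log T/(2\epsilon_1^2)}\bigr)$. Your Cauchy--Schwarz step paired with the harmonic telescoping $\sum_t\sum_{i\in S_t}1/T_i(t-1)\le n(1+\log T)$ produces $n\sqrt{T}\cdot\sqrt{n\log T}$ inside the square roots, i.e.\ an extra $\sqrt{\log T}$ factor, so it would prove only $O(n^2\sqrt{T}\log T)$ rather than the stated $O(n^2\sqrt{T\log T})$. (Using $\sum_{s=1}^{m}s^{-1/2}\le 2\sqrt{m}$ per arm instead of Cauchy--Schwarz would avoid this, but as written the claimed constant and rate are not recovered.)
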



Before the proof, we make a few comments below.

{\color{black}
\nbf{Dependence on $T$.} The bound is $O(\sqrt{T \log T})$, still sublinear in $T$, meaning that our algorithm learns the customers' response probabilities well enough to yield diminishing average regret in the time-varying case.


The  dependence on $T$ is worse than the static case which is $O(\log T)$. We briefly discuss the intuition behind this  difference.
In the proof of Theorem \ref{thm: regret bdd D}, we show that there exists a threshold $\epsilon_0$ depending on $D$ such that when the estimation errors of parameter $p_i$ for $i\in S_t$ are below $\epsilon_0$, our algorithm selects the optimal subset (Lemma \ref{lem: no regret bar E bar B}). Moreover, we also show that as $t$ increases, with high probability the estimation error will decrease and eventually our algorithm will find the optimal subset and generate no more regret. 
However, in the time-varying case the argument above no longer holds because the threshold $\epsilon_0$ will change with $D_t$, denoted as $\epsilon_0(D_t)$, and it is possible that the estimation error will always be larger than $\epsilon_0(D_t)$, as a result the algorithm may not find the optimal subset with high probability even when  $t$ is large. This roughly explains why the bound of the time-varying case  is worse than that of the static case. 


In addition, we provide some intuitive explanation for the scaling  $O(\sqrt{T \log T})$. It can be shown that  the regret at time $t$ is almost bounded by the estimation error at time $t$ under some conditions (Lemma \ref{lem: Sst under assumption Dt}). Since the estimation error is roughly captured by our confidence interval in \eqref{eq:Ui}, which  scales like $O(\sqrt{\log T/t})$,  the total regret scales like $ \sum_{t=1}^T O(\sqrt{\log T/t}) = O(\sqrt{T\log T})  $.

}
Finally, we note that the regret bound is for the worst-case scenario and  the  regret in practice may be smaller.

\nbf{Dependence on $n$.} The bound is polynomial on the number of arms $n$: $O(n^3)$ by $\bar M \sim O(n^2)$, demonstrating that our algorithm  can learn a large number of arms effectively in the time-varying case. Improving the cubic dependence on $n$ is left as future work.

\nbf{Role of $\epsilon_1$.} Notice that $\epsilon_1$ only depends on $p$ and does not depend on the target $D_t$. Roughly speaking, $\epsilon_1$ captures how difficult it is to rank the arms correctly by the value of $p_i$, in the sense that as long as the estimation error of each $p_i$ is smaller than $\epsilon_1$, the rank based on the estimation will be the correct rank based on the true parameter $p_i$. 




\textcolor{black}{\subsection{Proof of Theorem \ref{thm: Dt regret bound}}}
Most parts of the proof is similar to the static case. We also consider $D_t\geq 1/2$ without loss of generality due to Corollary \ref{cor: D<1/2}. Besides, we also  divide the time steps into four parts and complete the proof by summing up the regret bound of each part. The first three parts can be bounded in  similar ways as the static case. The major difference comes from the Part 4.

(1) Initialization: the regret can be bounded by $\bar M n$ because the initialization at most lasts for $n$ time steps and $\bar M$ is an upper bound of the single-step regret.

(2) When $E_t$ happens: 
notice that Lemma \ref{lem: regret 1bdd} still holds in the time-varying case if we replace $M$ with $\bar M$, so the second part is bounded by
$
\E \sum_{t=1}^T I_{E_t} R_t(S_t)\leq \frac{2\bar Mn}{\alpha -2}
$.

(3) When $\bar E_t$ and $B_t(\epsilon_1)$ happen:
notice that Lemma \ref{lem: regret4 bdd} still holds so
$
\E \sum_{t=1}^T  R_t(S_t)I_{\{ \bar E_t, B_t(\epsilon_1) \}} \leq \frac{\alpha \bar M n \log T}{2\epsilon_1^2}
$.

(4) When $\bar E_t$ and $\bar B_t(\epsilon_1)$ happen, we can show that the regret is $O(\sqrt{T \log T})$ as stated in the lemma below.
\begin{lemma}\label{lem: regret4 Dt bound}
	The regret in Part (4) can be bounded by
$$	\E \sum_{t=1}^T  R_t(S_t)I_{\{ \bar E_t,\bar B_t(\epsilon_1) \}} \leq 2n^2\sqrt{\frac{\alpha \log T}{2}}  \sqrt{T+\frac{\alpha \log T}{2\epsilon_1^2}}. $$
\end{lemma}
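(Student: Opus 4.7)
The plan is to bound the per-round regret $R_t(S_t)$ on the event $\{\bar E_t,\bar B_t(\epsilon_1)\}$ by a quantity of order $n$ times the maximum confidence-radius of an arm in $S_t$, and then to aggregate across $t$ via the standard counter-indexed Cauchy--Schwarz step of CMAB regret analyses. Intuitively, unlike in the static case---where a small enough estimation error certified $S_t=S_t^*$ and yielded zero per-round regret---the time-varying target shifts the optimal cardinality $k_t^*$ continually, so the best one can hope for under $\epsilon_1$ (which depends only on $p$) is that $S_t$ remains a top-prefix in the true ranking; the per-round regret is then $O(1)$, and it is its dependence on the confidence radii that produces the $\sqrt{T\log T}$ rate.

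The starting point is to combine the two defining conditions of the good event. On $\bar E_t$ one has $|\bar p_i(t-1)-p_i|\le\sqrt{\alpha\log t/(2T_i(t-1))}$ for every $i$; on $\bar B_t(\epsilon_1)$ every $i\in S_t$ satisfies $T_i(t-1)>\alpha\log T/(2\epsilon_1^2)$, forcing that radius below $\epsilon_1$ for arms in $S_t$. Because $\epsilon_1\le\Delta_{\min}/2$, the ranking by $U_i(t)$ agrees with that by $p_i$ up to ties, so Lemma \ref{lem: Sst under assumption Dt}---the time-varying analogue of Lemma \ref{lem: no regret bar E bar B}---identifies $S_t$ as a top-$k_t$ prefix in the true ranking, with $k_t$ within one of any optimal $|S_t^*|$.

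The next step expands $R_t(S_t)$ via the expected-cost representation $r_t(S)=(\sum_{i\in S}p_i-D_t)^2+\sum_{i\in S}p_i(1-p_i)$. When $S_t$ and $S_t^*$ are nested and differ by a single boundary arm $j$, a direct computation shows that $R_t(S_t)=\pm(2y+1)p_j$, where $y=\sum_{i\in S_t\cap S_t^*}p_i-D_t$. The algorithm's stopping condition $\sum_{i\in S_t}\bar p_i(t-1)>D_t-1/2$ (line 7 of Algorithm \ref{alg: ucb_avg}), together with its twin from Algorithm \ref{alg: offline oracle}, forces $|2y+1|\le 2\sum_{i\in S_t\cup S_t^*}|\bar p_i(t-1)-p_i|$, yielding $R_t(S_t)\le 2n\,\max_{i\in S_t}|\bar p_i(t-1)-p_i|$; the symmetric case $S_t^*\subsetneq S_t$ is handled identically. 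Combining with $\bar E_t$ and exchanging the order of summation,
\[
\sum_{t=1}^T R_t(S_t)I_{\bar E_t,\bar B_t(\epsilon_1)}
\;\le\; 2n\sqrt{\tfrac{\alpha\log T}{2}}\sum_{i=1}^n\sum_{t:\,i\in S_t}\tfrac{1}{\sqrt{T_i(t-1)}}.
\]
For each $i$, the inner sum telescopes to at most $2\sqrt{T_i(T)}$, after which Cauchy--Schwarz gives $\sum_i\sqrt{T_i(T)}\le\sqrt{n\sum_iT_i(T)}$. Bounding $\sum_iT_i(T)\le n\bigl(T+\alpha\log T/(2\epsilon_1^2)\bigr)$---the additive correction absorbs the threshold-induced slack from the telescoping step---reproduces the claimed bound.

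The main obstacle is the per-round regret estimate in the second step: because the cost is quadratic in $\sum_{i\in S}p_i$, careful cross-term bookkeeping is required, and one must pair the algorithm's $\bar p$-based cutoff with the oracle's $p$-based cutoff without losing additional factors of $n$ or accruing $T$-dependence. This is exactly the role of the pre-established Lemma \ref{lem: Sst under assumption Dt}. Once the $O(n\max_i|\bar p_i-p_i|)$ per-round bound is in hand, the final telescoping is a direct adaptation of the standard UCB counter trick.
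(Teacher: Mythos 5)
Your proposal is correct and rests on the same central ingredient as the paper's proof: the per-round bound $\E[R_t(S_t)\mid\F_{t-1}]\le 2n\epsilon$ of Lemma \ref{lem: Sst under assumption Dt}, applied with $\epsilon$ equal to the realized confidence radius of the least-explored selected arm (your re-derivation of this bound via the identity $\E[R_t(S_t)\mid\F_{t-1}]=\pm(2y+1)p_j$ and the cutoff conditions of Lemma \ref{lem:alg properties} is exactly the computation in Appendix \ref{append: regret error if epsilon1>xi with assumption}). Where you genuinely diverge is in the aggregation over $t$. The paper peels the good event into layers $H_t^{q-1}-H_t^q$ indexed by how far the minimum counter $\min_{i\in S_t}T_i(t-1)$ exceeds the threshold $\frac{\alpha\log T}{2\epsilon_1^2}$, observes that each layer can occur at most $n$ times, and sums $2n^2\eta_q$ over $q$ via an integral; this discretization is precisely what lets it invoke Lemma \ref{lem: Sst under assumption Dt} with a \emph{deterministic} $\epsilon=\eta_q$ on each layer. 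You instead exchange the sums over arms and rounds, telescope $\sum_{t:\,i\in S_t}T_i(t-1)^{-1/2}\le 2\sqrt{T_i(T)}$, and finish with Cauchy--Schwarz --- the standard CMAB counter argument. Both routes give the same $O(n^2\sqrt{T\log T})$ rate (indeed, since your per-round bound is really $2\sum_{i\in S_t}|\bar p_i(t-1)-p_i|$ before you relax it to $2n\max_i$, your route can even save a factor of $n$). The one point you should make explicit is that applying Lemma \ref{lem: Sst under assumption Dt} with a realization-dependent $\epsilon$ is legitimate because its conclusion holds conditionally on every $\F_{t-1}$-realization satisfying $\bar E_t\cap\bar B_t(\epsilon)$, so one may choose $\epsilon$ per realization; the paper's layering exists precisely to sidestep this step. (The factor-of-two mismatch between the displayed constant in the lemma and what either argument actually yields is an inconsistency internal to the paper, not a defect of your proof.)
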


\begin{proof}


Our proof relies on the following lemma which shows that the regret at time $t$ can roughly be  bounded by the estimation error $\epsilon$ at $t$ when $\epsilon\leq \epsilon_1$.  

\begin{lemma}\label{lem: Sst under assumption Dt} 
	For any time step $t$,  consider any $D_t$ and any $0< \epsilon\leq \epsilon_1$ such that
	$\Pb(\bar E_t, \bar B_t(\epsilon))>0$.
	Let $\F_t$ denote the natural filtration up to time $t$. For any  $\F_{t-1}$ such that   $\bar E_t$ and $ \bar B_t(\epsilon)$ are true, 
 we have
$
	\E [R_t(S_t) \mid \F_{t-1}]\leq 2n \epsilon.
$ 
\end{lemma}
\nit{Proof sketch.}
Due to the space limit, we defer the proof to Appendix \ref{append: regret error if epsilon1>xi with assumption} and only sketch the proof here.  Firstly, we are able to show that under $\bar E_t$ and $\bar B_t(\epsilon)$, the selected subset differs from the optimal subset for at most one arm. This is mainly due to $\epsilon\leq \epsilon_1$. Secondly, we can bound the regret of the suboptimal selections by $O(\epsilon)$, which is mainly due to our quadratic loss function. \qed


Next, we will finish the proof by bounding the estimation errors. We introduce  event $H_t^q$ to represent that each selected arm $i$ at time $t$ has been selected for more than $
\frac{\alpha \log T}{2\epsilon_1^2}+q $ times for  $q=0, 1, 2, \dots$:
	$$H_t^q\coloneqq \left\{ \forall\, i\in S_t,  T_i(t-1) > \frac{\alpha \log T}{2\epsilon_1^2}+q \right\} \cap \bar E_t \cap \bar B_t(\epsilon_1).$$


In addition, we define the estimation error $\eta_q$ by the confidence interval radius when an arm has been explored for  $\frac{\alpha \log T}{2\epsilon_1^2}+q-1$ times:
$
\frac{\alpha \log T}{2\eta_q^2}=\frac{\alpha \log T}{2\epsilon_1^2}+q-1
$,
that is, 
$
\eta_q = \sqrt{ \frac{\frac{\alpha \log T}{2}}{q -1+ \frac{\alpha \log T}{2\epsilon_1^2}}}
$.
The proof is completed by:
	\begin{align*}
&\E\!\left[ \sum_{t=1}^T  R_t(S_t)I_{\bar E_t\cap \bar B_t(\epsilon_1)} \right] 
 = \sum_{q=1}^T \sum_{t=1}^T  \E\!\left[ R_t(S_t)I_{(H_t^{q-1} -H_t^q)}\right] \\
& \leq  \sum_{q=1}^T\sum_{t=1}^T 2n \eta_{q} \Pb(H_t^{q-1}-H_t^q) \\
& \leq  \sum_{q=1}^T 2n^2 \eta_{q} =  2n^2\sum_{q=1}^T\sqrt{ \frac{\frac{\alpha \log T}{2}}{q-1 + \frac{\alpha \log T}{2\epsilon_1^2}}}\\
& \leq 2n^2\sqrt{\frac{\alpha \log T}{2} }\int_{0}^{T}\sqrt{ \frac{1}{q-1 + \frac{\alpha \log T}{2\epsilon_1^2}}}\ \mathrm{d}q\\
& \leq 4n^2\sqrt{\frac{\alpha \log T}{2}}  \sqrt{T+\frac{\alpha \log T}{2\epsilon_1^2}},
\end{align*}
where the first  equality is by $ \bar E_t \cap \bar B_t(\epsilon_1)=\cup_{q=1}^T (H_t^{q-1}-H_t^q)$;  the first inequality is by taking conditional expectation on $H_t^{q-1}-H_t^q$ and by  Lemma  \ref{lem: Sst under assumption Dt}, $(H_t^{q-1}-H_t^q) \subseteq \bar E_t \cap \bar B_t(\eta_q)$ and $\eta_q\leq \epsilon_1$; the second one is because
$H_t^{q-1}-H_t^q\subseteq\bigcup_{i=1}^n \{i \in S_t, \, T_i(t-1) = \frac{\alpha \log T}{2\epsilon_1^2}+q \}
$
and $\{i \in S_t, \, T_i(t-1) = \frac{\alpha \log T}{2\epsilon_1^2}+q \}$ occurs at most once; the third inequality uses the fact that $T>2$ and thus $ \frac{\alpha \log T}{2\epsilon_1^2}>1$.
\end{proof}


\begin{figure}
	\begin{subfigure}[b]{0.22\textwidth}
		\includegraphics[width=\textwidth]{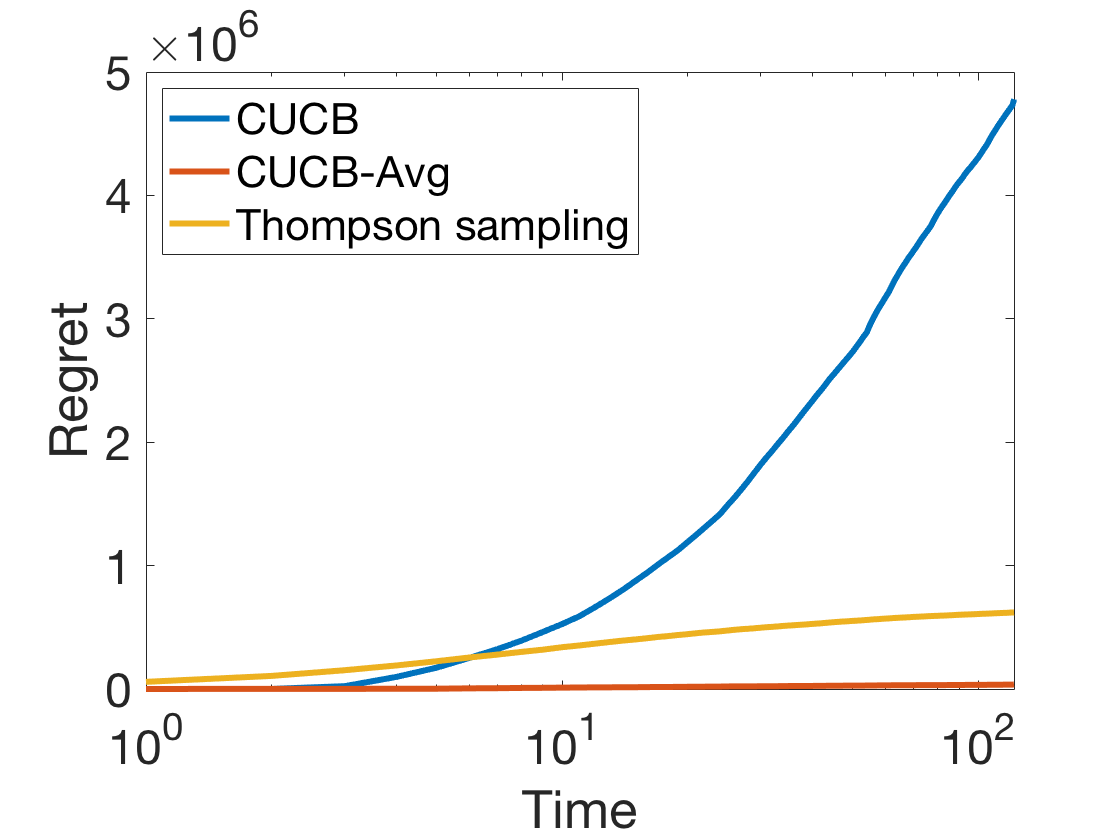}
		\caption{Average peak.}
	\end{subfigure} 
	\hfill
	\begin{subfigure}[b]{0.22\textwidth}
		\includegraphics[width=\textwidth]{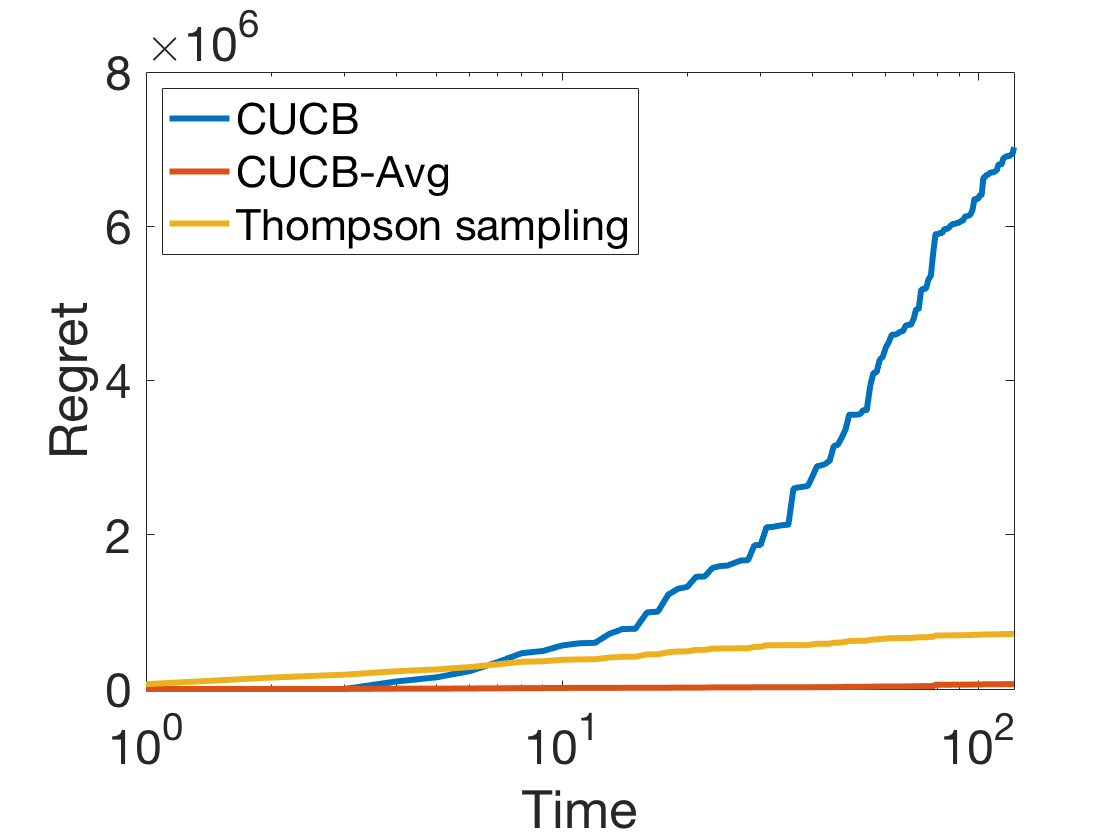}
		\caption{Daily peak.}
	\end{subfigure}
	\caption{The regret of CUCB, CUCB-Avg, and TS.}
	\label{fig: regret}
\end{figure}

\begin{figure}
	\begin{subfigure}[b]{0.22\textwidth}
		\includegraphics[width=\textwidth]{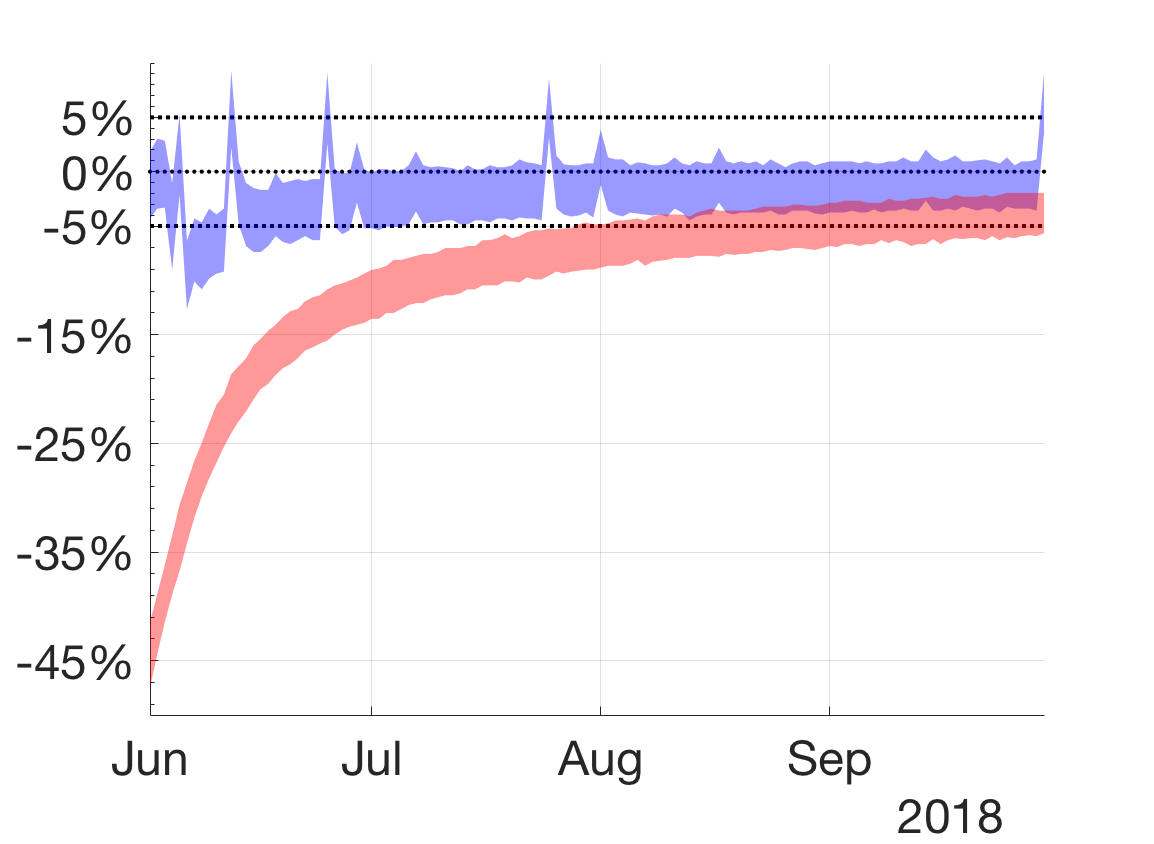}
		\caption{Average peak.}
	\end{subfigure} 
	\hfill
	\begin{subfigure}[b]{0.22\textwidth}
		\includegraphics[width=\textwidth]{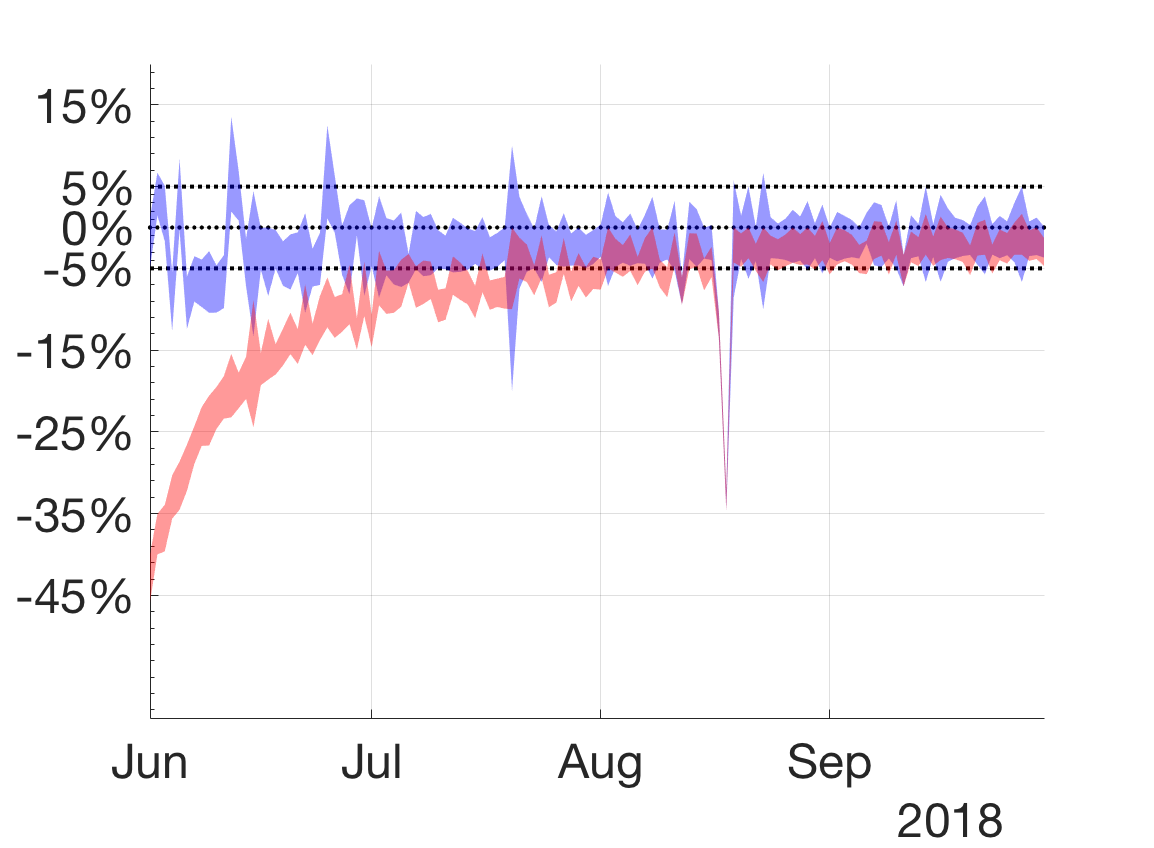}
		\caption{Daily peak.}
	\end{subfigure}
	\caption{90\% confidence intervals of  load reduction's relative errors of CUCB-Avg (blue) and Thompson sampling (red).}
	\label{fig: conf interval}
\end{figure}
\begin{figure}
	\begin{subfigure}[b]{0.22\textwidth}
		\includegraphics[width=\textwidth]{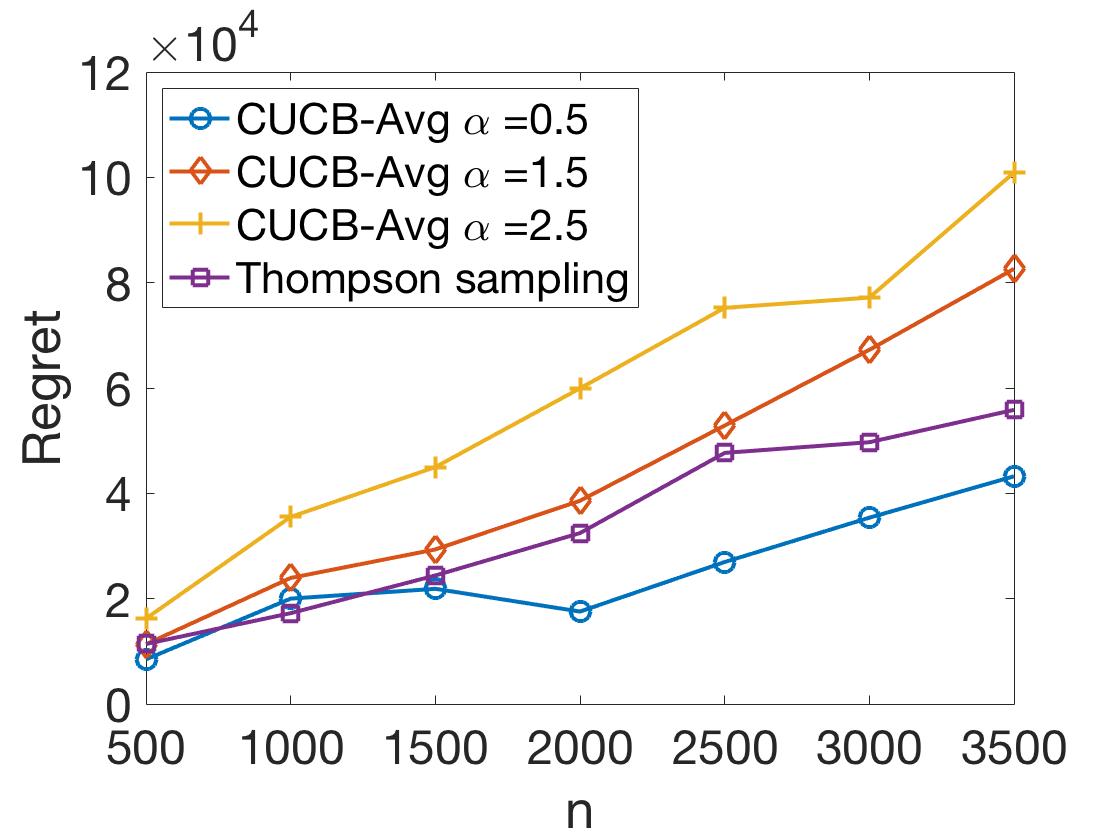}
		\caption{When $T=10^5$}
	\end{subfigure} 
	\hfill
	\begin{subfigure}[b]{0.22\textwidth}
		\includegraphics[width=\textwidth]{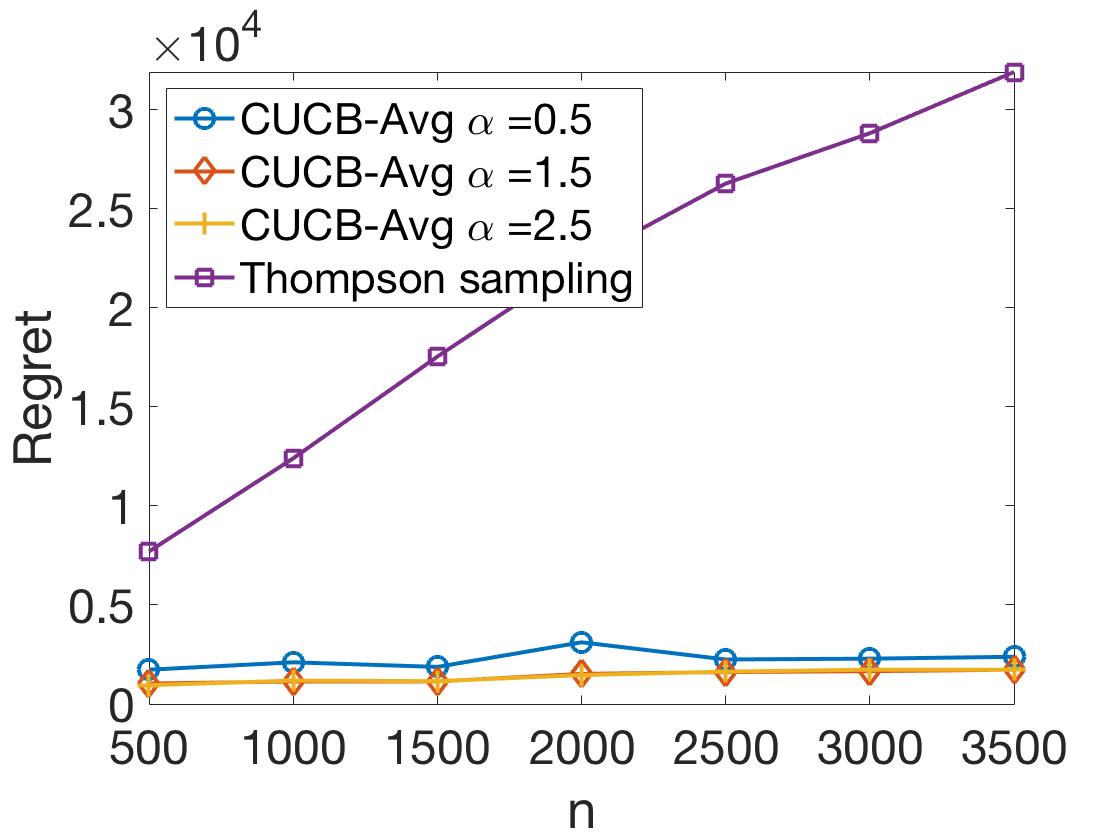}
		\caption{When $T=122$}
	\end{subfigure}
	\caption{The regret of TS and CUCB-Avg for different $n$.}
	\label{fig: compare TS with diff. n}
\end{figure}

{\color{black}\section{Numerical Experiments}\label{sec: simulation}}
In this section, we conduct numerical experiments to complement the theoretical analysis above.

\subsection{Algorithms comparison}\label{subsec: numerical Dt}
We will compare our algorithm with CUCB  \citep{chen2016combinatorial}, which is briefly explained in Section \ref{subsec: CUCB}, and Thompson sampling (TS), an algorithm with good empirical performance in classic MAB problems. In TS, the unknown parameter $p$ is viewed as a random vector with a prior distribution. The algorithm selects a subset $S_t = \phi(\hat p_t,D)$ based on a sample $\hat p_t$  from the prior distribution of $p$ at $t=1$ (or the posterior  distribution at $t\geq 2$), then updates the posterior distribution of $p$ by observations $\{X_{t,i} \}_{i\in S_t}$. For more details, we refer the reader to \citep{russo2017tutorial}.

In our experiment, we consider a residential demand response program with 3000 customers. Each customer can either participate in the DR event by reducing 1kW  or not.  The probabilities of participation are i.i.d. $\text{Unif}[0,1]$. \textcolor{black}{The demand response events last for one hour on each day from June to September in 2018,} with a goal of shaving the 
 peak loads in Rhode Island. The \textit{hourly} demand profile is from New England ISO.\footnote{\url{https://www.iso-ne.com/isoexpress/web/reports/load-and-demand/-/tree/demand-by-zone}} We consider two schemes to determine the peak-load-shaving target $D_t$:
\begin{enumerate}
	\item[i)] Average peak: Compute the averaged  load profile in a day by averaging the  daily load profiles in the four months. The constant target $D$ is the 5\% of the difference between the peak load and the  load at one hour before the peak hour of the averaged load profile. 
	\item[ii)] Daily peak: On each day $t$, the target $D_t$ is 5\% of the difference between the peak load and the load at one hour before the peak hour of the daily demand.
\end{enumerate}

 In our algorithms, we set $\alpha=2.5$. In Thompson sampling, $p$'s prior distribution  is   $\text{Unif}[0,1]^n$. We consider one DR event per day and plot the daily performance. 

Figure \ref{fig: regret} plots the regret of CUCB, CUCB-Avg and TS under the two schemes of peak shaving. The x-axis is in log scale and the resolution is by day. Both figures show that CUCB-Avg  performs  better than  CUCB and  TS. In addition, the regret of CUCB-Avg in Figure \ref{fig: regret}(a) is linear with respect to $\log(T)$,  consistent with our theoretical result in Theorem \ref{thm: regret bdd D}. Moreover, the regret of CUCB-Avg in Figure \ref{fig: regret}(b) is almost linear with $\log(T)$, demonstrating that in practice the regret can be much better than our worst case regret bound in Theorem \ref{thm: Dt regret bound}. 

Figure \ref{fig: conf interval} plots the 90\% confidence interval of the relative  reduction error, $\frac{\sum_{i\in S_t}X_{t,i}-D_t}{D_t}$, of CUCB-Avg and TS by 1000 simulations. It is observed that the relative error of CUCB-Avg roughly stays within $\pm 5\%$,  much better than Thompson sampling. This again demonstrates the reliability of CUCB-Avg. Interestingly, the figure shows that TS tends to reduce less load than the target, which is possibly because TS overestimates the customers' load reduction when selecting customers.  \red{Finally, on August 18th both algorithms cannot fulfill the daily peak target because  it is very hot and the target is too high to reach even after selecting all the users. }

Finally, we compare TS and CUCB-Avg for different $n$ by considering the scheme (i). We consider two cases: 1) when $T$ is very large so the  regret is dominated by the $\log(T)$ term, 2) when $T$ is a reasonable number in practice. We let $T=10^5$ for case 1 and $T=122$ (the total number of days from June to September) for case 2. We consider a smaller target $D=40$ for illustration and consider $n=500:500:3500$. Figure \ref{fig: compare TS with diff. n}(a) shows that the dependence on $n$ of CUCB-Avg's regret is similar to  that of TS when $T$ is large, and the dependence is not cubic,  the theoretical explanation of which is left for future work. Moreover, Figure \ref{fig: compare TS with diff. n}(a) shows that CUCB-Avg can achieve better regrets than TS under a properly chosen small $\alpha$. Though not explained by theory yet,  the phenomenon that a small $\alpha$ yields good performance has been observed in literature \citep{wang2018thompson}. Further, Figure \ref{fig: compare TS with diff. n}(b) shows that CUCB-Avg achieves significantly smaller regrets than TS for a practical $T$, indicating the  effectiveness of our algorithm in reality.

\subsection{More discussion on the effect of $\alpha$ and $n$}
{Figure \ref{fig: compare TS with diff. n} has  shown that the choice of $\alpha$ and $n$ affects the algorithm performance. In this subsection, we will discuss  the effect of $\alpha$ and $n$ in greater details. In particular, we will study the DR performance by the relative deviation of the load reduction, which is defined as $\sqrt{\E L(S_t)}/D_t$, for each day during the four months.  }

Figure \ref{fig:alpha} shows the relative deviation of CUCB-Avg for different $\alpha$ when $n=3000$ and when the target is determined by scheme (i) in Section \ref{subsec: numerical Dt}. It is observed that when $T$ is small, a large $\alpha$ provides smaller relative deviation, thus better performance. This is because   the information of customers is limited when $T$ is small, and larger $\alpha$ encourages exploration of the information, thus yielding better performance. When $T$ is large, a smaller $\alpha$ leads to a better performance. This is because when $T$ is large, the information of customers is sufficient, and a small $\alpha$ encourages the exploitation of the current information, thus generating better decisions. {The observations above are  also consistent with Figure \ref{fig: compare TS with diff. n}}. Further, Figure \ref{fig:alpha} shows that for a wide range of $\alpha$'s values, CUCB-Avg  reduces the deviation to below 5\% after a few days, indicating that CUCB-Avg is reasonably robust to the choice of $\alpha$. 

Figure \ref{fig: n} shows the relative deviation of CUCB-Avg for different $n$ when $\alpha=2.5$ and when the target is determined by scheme (i) in Section \ref{subsec: numerical Dt}. It is observed that even with a large number of customers, CUCB-Avg reduces the relative deviation to below 5\% very quickly, demonstrating that our algorithm can handle large $n$ effectively. In addition, when $T$ is small, a small $n$ provides smaller relative deviation, because a small number of customers is easier to learn in a short time period. When $T$ is large, a large $n$ provides better performance, because there are more reliable  customers to choose from a larger customer pool. {It is worth mentioning that though Figure \ref{fig: compare TS with diff. n}(a) shows that the regret increases with  $n$ when $T$ is large, there is no conflict because the regret captures the gap between the  deviation generated by the algorithm and  the optimal one, which may increase even when the algorithm generates less deviation since the optimal deviation also decreases. }

\begin{figure}
	\centering
		\includegraphics[width=0.3\textwidth]{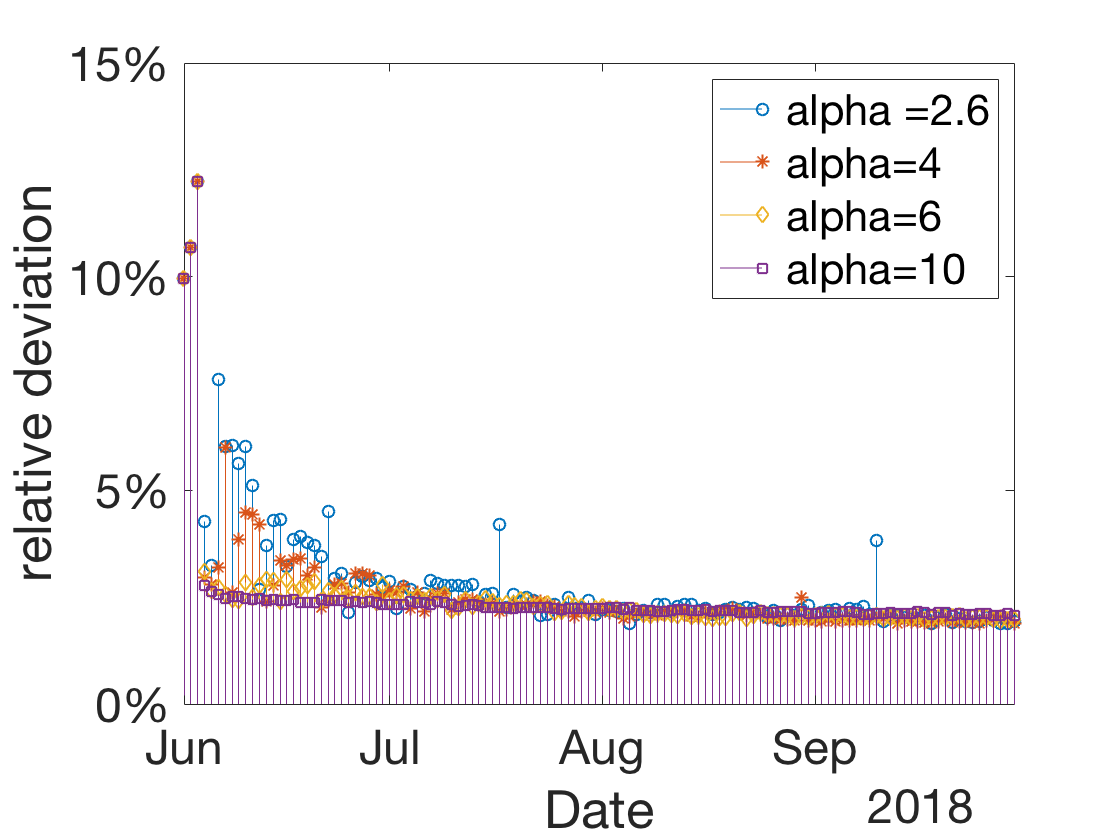}
	\caption{Effect of $\alpha$}
		\label{fig:alpha}
\end{figure}

\begin{figure}
	\centering
	\includegraphics[width=0.3\textwidth]{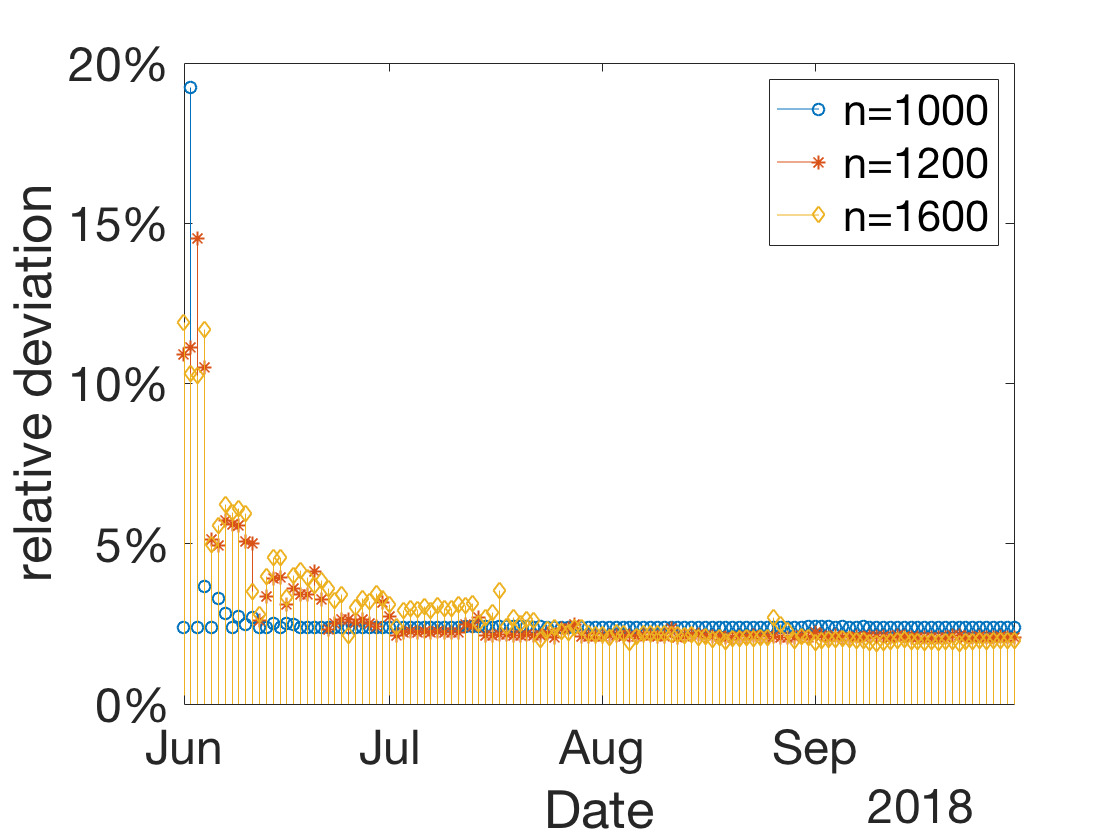}
		\caption{Effect of $n$}
	\label{fig: n}
\end{figure}

\subsection{On the user fatigue effect}

It is widely observed that customers tend to be less responsive to demand response signals after participating in DR events consecutively. This effect is usually called \textit{user fatigue}. Though our algorithm and theoretical analysis do not consider this effect for simplicity, our CUCB-Avg can handle the fatigue effect after small modifications, which is briefly discussed below.

For illustration purpose, we consider a simple model of user fatigue effect.  Each customer $i$ is associated with an original response probability $p_i$. The response probability at stage $t$, denoted as $p_i(t)$, decays exponentially with a fatigue ratio $f_i$ if customer $i$ has been selected consecutively, that is,  $p_i(t)=(f_i)^{\chi_i(t)} p_i$ if customer $i$ has been selected from day $t-\chi_i(t)$ to day $t-1$. If the customer is not selected , we consider that the customer takes a rest at this stage and will respond to the next DR event with the original probability. Though the fatigue model may be too pessimistic about the effects of the consecutive selections by considering exponential decaying fatigue factors,  and too optimistic about the relaxation effect by assuming full recovery after one day rest, this model captures the commonly observed phenomena that the consecutive selection is  a key reason for user fatigue and customers can recover from fatigue if not selected for some time \citep{hopkins2017best}. The model can be revised to be more complicated and realistic, which is left as future work.

Next, we explain how to modify CUCB-Avg to address the user fatigue effects. We consider that the aggregator has some initial estimation of the fatigue ratio of customer $i$, denoted as $\tilde f_i$, and will use the estimated fatigue ratios to rescale the upper confidence bounds and sample averages in Algorithm 2 to account for the fatigue effect. In particular, the rescaled upper confidence bound is $(\tilde f_i)^{\chi_i(t)}U_i(t)$, and the rescaled history sample average is $\bar p_i(t)= \frac{1}{T_i(t)}\sum_{\tau \in I_i(t)} \frac{X_{\tau, i}}{(\tilde f_i)^{\chi_i(t)}}$, where $\chi_i(t)$ denotes the number of consecutive days up until $t-1$ when customer $i$ is selected.

In our numerical experiments, different users may have different user fatigue ratios, which  are generated i.i.d. from $\text{Unif}[0.75, 0.95]$. 
Other parameters are  the same as  in Section 6.1. Figure \ref{fig: consumer fatigue} plots the relative deviation of our modified CUCB-Avg in two scenarios: i) the aggregator  has access to the accurate fatigue ratio, i.e. $\tilde f_i=f_i$; ii) the aggregator only has a rough estimation for the entire population: $\tilde f_i =0.85$ for all $i$. It can be observed that our algorithm is able to reduce the relative deviation to below 5\% after a few days even when the fatigue ratios are inaccurate. This demonstrates that our algorithm, with some simple modifications, can work reasonably well even when considering customer fatigue effects. 

\begin{figure}[htp!]
	\centering
	\includegraphics[width=0.3\textwidth]{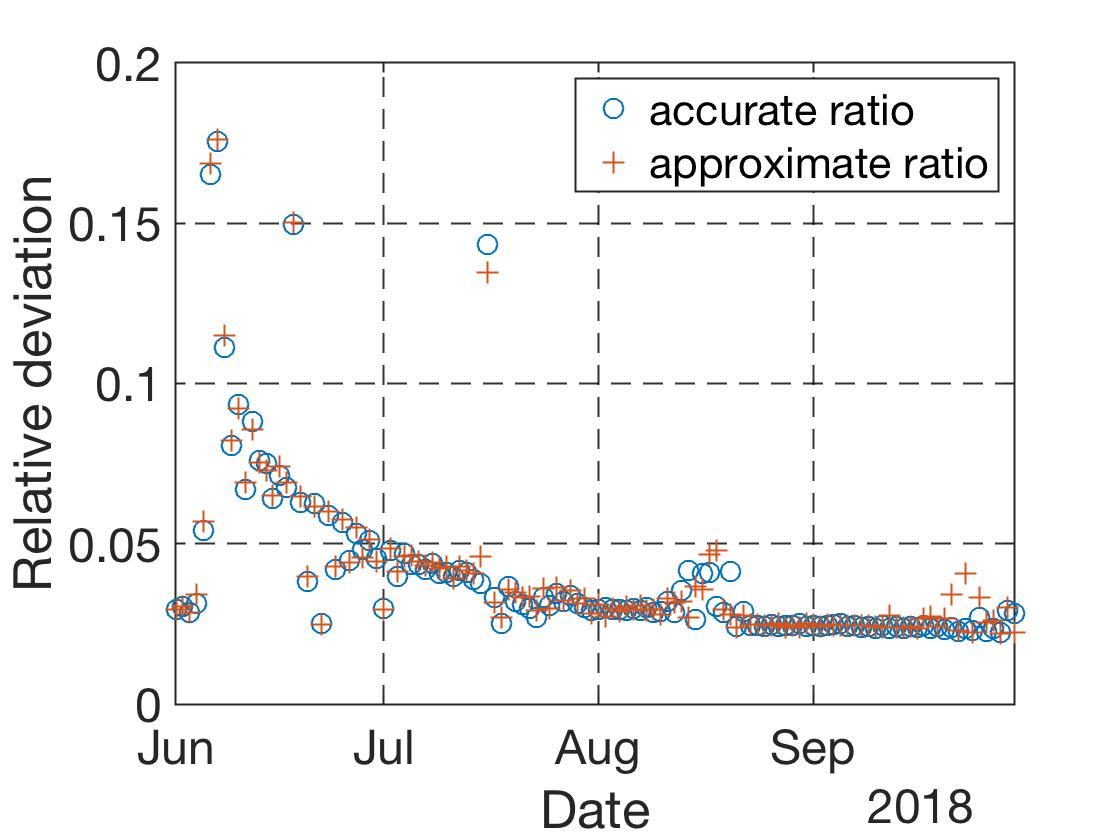}
	\caption{The performance of our CUCB-Avg (after simple modifications) when considering user fatigue.}
	\label{fig: consumer fatigue}
\end{figure}

\section{Conclusion}
This paper studies  a CMAB problem  motivated by residential demand response with the goal of minimizing the difference between the total load adjustment and the target value. 
 We propose  CUCB-Avg and show that CUCB-Avg achieves sublinear regrets in both static and   time-varying cases. 
 There are several interesting directions to explore in the future. First, it is  interesting to improve the dependence on $n$. Second, it is worth  studying the regret lower bounds.   Besides, it is worth considering more realistic behavior models which may include e.g. the effects of temperatures and humidities, the user fatigue, correlation among users,  time-varying response patterns, general load reduction distributions, dynamic population, etc. 

\vspace{12pt}

\appendix

\nbf{Appendix.}

\section{Proof of Theorem \ref{thm: offline opt optimality}}\label{aped: offline opt}

Only in this subsection, we assume $p_1\geq \dots \geq p_n$ to simplify the notation. This will not cause any loss of generality in the offline analysis.  In other parts of the document, the order of $p_1,\dots, p_n$ is unknown, and we will use $p_{\sigma(1)}\geq \dots p_{\sigma(n)}$ to denote the non-increasing order of parameters.

Since adding or removing an arm $i$ with $p_i=0$ will not affect the regret, in the following we will assume $p_i>0$ without loss of generality \footnote{One way to think about this is that we only consider subset $S$ such that $p_i>0$ for $i\in S$.}.



The proof of Theorem \ref{thm: offline opt optimality} takes two steps. 
\begin{enumerate}
	\item In Lemma \ref{lem:necessary condition}, we establish a local optimality condition, which is a necessary condition to the global optimality. 
	\item In Lemma \ref{lem: exists S*}, we show that there exists an optimal selection $S^*$ which includes first $k$ arms with an unknown $k$. Then we can easily show that the Algorithm \ref{alg: offline oracle} selects the optimal $k$.
\end{enumerate}

We first state and prove Lemma \ref{lem:necessary condition}. 

\begin{lemma}
	\label{lem:necessary condition}
	Suppose $S^*$ is optimal and $p_i >0$ for $i\in S^*$. Then we must have 
	$$\sum_{i\in S^*-\{j\} }p_i\leq D-1/2,\quad \forall \ j \in S^*$$
	If $S^*\not =[n]$, then we will also have
	\begin{align*}
	& \sum_{i\in S^*} p_i\geq D-1/2
	\end{align*}
	
\end{lemma}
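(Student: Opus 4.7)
My plan is a straightforward local perturbation argument exploiting the special form of the objective
$$f(S) = \Bigl(\sum_{i\in S}p_i - D\Bigr)^2 + \sum_{i\in S}p_i(1-p_i).$$
Since $S^*$ is optimal, any single‑element deletion or single‑element addition cannot decrease $f$; each such comparison reduces to a one‑line algebraic identity from which the claimed inequalities fall out directly. Throughout I use that in this subsection all $p_i>0$ (the zero‑probability arms have already been discarded without loss of generality, as noted at the start of the appendix), and I abbreviate $P := \sum_{i\in S^*}p_i$.

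For the first inequality, fix $j\in S^*$ and expand $f(S^*\setminus\{j\}) - f(S^*)$. The quadratic term contributes $(P-p_j-D)^2 - (P-D)^2$, while the variance term loses $p_j(1-p_j)$. Setting $a := P-D$ and simplifying gives
$$f(S^*\setminus\{j\}) - f(S^*) = 2p_j\bigl(p_j - a - \tfrac12\bigr).$$
Optimality forces this to be $\ge 0$, and since $p_j>0$ I get $p_j - (P-D) - \tfrac12 \ge 0$, which is exactly $P - p_j \le D - \tfrac12$, i.e.\ the desired bound on $\sum_{i\in S^*-\{j\}} p_i$.

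For the second inequality, assume $S^*\neq[n]$ and pick any $j\notin S^*$. An analogous expansion of $f(S^*\cup\{j\}) - f(S^*)$ yields
$$f(S^*\cup\{j\}) - f(S^*) = p_j\bigl(2(P-D)+1\bigr),$$
where the $+p_j(1-p_j)$ from the variance term cancels the $p_j^2$ coming from the square. Non‑negativity together with $p_j>0$ gives $P \ge D - \tfrac12$.

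There is essentially no obstacle once one spots the clean cancellation that turns each perturbation into a linear inequality in $P$; the only thing to keep in mind is the standing assumption $p_i>0$ for all $i$ in this appendix, which is what lets me divide through by $p_j$ in both steps. I do not need any higher‑order (two‑swap) perturbations here, because the lemma only asserts a necessary local optimality condition.
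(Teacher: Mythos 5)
Your proposal is correct and is essentially identical to the paper's own proof: both compare $f(S^*)$ against the single-deletion set $S^*\setminus\{j\}$ and the single-addition set $S^*\cup\{j\}$, observe the cancellation that leaves a factor of $p_j$ times a linear expression in $\sum_{i\in S^*}p_i - D$, and divide by $p_j>0$ (justified for $j\notin S^*$ by the appendix's standing convention that zero-probability arms are discarded). No further comment is needed.
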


\begin{proof}
	Since $S^*$ is optimal, removing an element will not reduce the expected loss, i.e. 
	$$\E L(S^*)\leq \E L(S^*-\{j\})$$
	which is equivalent with
	\begin{align*}
	&\E L(S^*)- \E L(S^*-\{j\})\\
	& = (2\sum_{i\in S^*} p_i -2D-p_j)p_j + p_j(1-p_j)\\
	& = (2\sum_{i\in S^*} p_i -2D+1-2p_j)p_j \leq 0
	\end{align*}
	Since $p_j>0$, we must have
	$$\sum_{i\in S^*-\{j\}} p_i\leq D-1/2$$

	If $S^*\not =[n]$, then adding an element will not reduce the cost. So we must have
	\begin{align*}
	&	\E L(S^*)\leq \E L(S^*\cup \{j\}), \quad \forall j\not\in S^*
	\end{align*}
	which is equivalent with
	\begin{align*}
	&\E L(S^*)- \E L(S^*\cup\{j\})\\
	& =- (2\sum_{i\in S^*} p_i -2D+p_j)p_j - p_j(1-p_j)\\
	& = -(2\sum_{i\in S^*} p_i -2D+1)p_j \leq 0
	\end{align*}
	Since $p_j>0$, we must have
	$$\sum_{i\in S^*} p_i\geq D-1/2.$$

\end{proof}

\begin{proof}[Proof of Corollary \ref{cor: D<1/2}]
	Suppose there exists a non-empty optimal subset $S\not=\emptyset$. Then 
	$$\sum_{i\in S^*-\{j\} }p_i\geq 0 > D-1/2,\quad \forall \ j \in S^*$$
	which results in a contradiction by Lemma \ref{lem:necessary condition}.
\end{proof}

\begin{corollary}\label{cor: small n}
	When $\sum_{i=1}^n p_i \leq D-1/2$, the optimal subset is  $[n]$.
\end{corollary}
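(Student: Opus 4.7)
The plan is to argue by contradiction, leveraging the local optimality condition already established in Lemma \ref{lem:necessary condition}. Recall that throughout this appendix the blanket assumption $p_i > 0$ for every $i \in [n]$ is in force, so any index omitted from a candidate subset contributes a strictly positive expected response.

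First, I would suppose for contradiction that some optimal $S^\ast$ is a proper subset of $[n]$. Because $p_i > 0$ for every $i \in S^\ast$, the hypotheses of Lemma \ref{lem:necessary condition} are satisfied, and its ``$S^\ast \neq [n]$'' branch yields the lower bound $\sum_{i \in S^\ast} p_i \geq D - 1/2$. Next I would bound $\sum_{i \in S^\ast} p_i$ from above using the corollary's hypothesis: since $S^\ast \subsetneq [n]$, at least one index $j \notin S^\ast$ has $p_j > 0$, so $\sum_{i \in S^\ast} p_i \leq \sum_{i=1}^n p_i - p_j < \sum_{i=1}^n p_i \leq D - 1/2$. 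This strict upper bound contradicts the lower bound from the lemma, so no optimal subset other than $[n]$ can exist, and since the combinatorial problem has some optimum, $[n]$ itself must be optimal.

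I do not anticipate any real obstacle here; the argument is essentially a one-line application of Lemma \ref{lem:necessary condition}, very much in the spirit of the proof of Corollary \ref{cor: D<1/2}. The only subtlety is that the hypothesis $\sum_{i=1}^n p_i \leq D - 1/2$ allows equality, but once $S^\ast$ omits at least one strictly positive $p_j$ the chain of inequalities becomes strict and the contradiction still goes through without a separate case analysis.
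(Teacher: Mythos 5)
Your argument is correct and is essentially identical to the paper's own proof: both suppose an optimal $S^*\subsetneq[n]$, use the strict inequality $\sum_{i\in S^*}p_i<\sum_{i=1}^n p_i\leq D-1/2$ (strict because the omitted arm has $p_j>0$ under the blanket assumption), and contradict the second branch of Lemma \ref{lem:necessary condition}. No gaps.
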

\begin{proof}
	Suppose there exists an optimal subset $S\not=[n]$. Then 
	$$\sum_{i\in S^* }p_i< \sum_{i=1}^n p_i \leq D-1/2$$
	which results in a contradiction by Lemma \ref{lem:necessary condition}.
\end{proof}

Next, we are going to show that there must exist an optimal subset  containing all elements with highest mean values. This is done by contradiction.
\begin{lemma}\label{lem: exists S*}
	When $D\geq1/2$ and $\sum_{i=1}^n p_i> D-1/2$,
	there must exist an optimal subset $S^*$ whose elements' mean values are $q_1\geq \dots\geq q_m$, such that for any $p_i>q_m$, we have $i \in S^*$.
\end{lemma}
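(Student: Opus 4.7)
The plan is a swap-plus-extremal-choice argument. Among all optimal subsets (finitely many, so the maximum is attained), I will pick the one $S^*$ that maximizes $\sum_{i\in S^*}p_i$, and show that this choice forces $S^*$ to contain every arm whose parameter strictly exceeds $q_m:=\min_{i\in S^*}p_i$. The underlying intuition is exactly the variance-reduction idea already used in the proof sketch of Theorem~\ref{thm: offline opt optimality}: given a fixed (or near-fixed) bias term, shifting weight from lower-probability arms to higher-probability arms can only help the variance, so an optimal subset should prefer high-$p_i$ arms.

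The argument I would execute is as follows. Suppose for contradiction that $S^*\neq[n]$ and some $j\notin S^*$ satisfies $p_j>q_m$; pick $i^*\in S^*$ with $p_{i^*}=q_m$ and form $S':=(S^*\setminus\{i^*\})\cup\{j\}$. The only nontrivial step is to compute $\E L(S^*)-\E L(S')$ and simplify it to
\[
\E L(S^*)-\E L(S')=2(p_{i^*}-p_j)\Bigl[\sum_{i\in S^*\setminus\{i^*\}}p_i+\tfrac{1}{2}-D\Bigr].
\]
The bias-squared difference contributes $(p_{i^*}-p_j)(2\mu-2D+p_j-p_{i^*})$ where $\mu=\sum_{i\in S^*}p_i$, and the variance difference contributes $(p_{i^*}-p_j)(1-p_{i^*}-p_j)$; summing these gives the displayed expression. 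Now apply Lemma~\ref{lem:necessary condition} to the optimal $S^*$: removing $i^*\in S^*$ yields $\sum_{i\in S^*\setminus\{i^*\}}p_i\le D-1/2$, so the bracket is $\le 0$; and by construction $p_{i^*}-p_j<0$. Hence $\E L(S^*)-\E L(S')\ge 0$, so $S'$ is also optimal. But $\sum_{i\in S'}p_i=\sum_{i\in S^*}p_i+(p_j-p_{i^*})>\sum_{i\in S^*}p_i$, contradicting the extremality of $S^*$.

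The main obstacle I anticipate is bookkeeping on the boundary cases rather than the core swap computation. First, if $S^*=[n]$ then every arm is already in $S^*$ and the conclusion is trivial, so the contradiction setup tacitly requires $S^*\neq[n]$, which I handle separately. Second, the argument invokes Lemma~\ref{lem:necessary condition} in its ``$S^*\neq[n]$'' branch (giving the upper bound on sums after removing $i^*$); one has to make sure $i^*$ is well-defined, i.e.\ $S^*$ is nonempty. Under the hypotheses $D\ge 1/2$ and $\sum_{i=1}^n p_i>D-1/2$, the empty set fails the lower-bound condition in Lemma~\ref{lem:necessary condition} unless $D=1/2$, and in that degenerate case any singleton achieves the same expected loss, so I can replace $S^*$ with a singleton of maximal $p_i$ from the start. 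With these small edge cases disposed of, the extremal choice plus the single displayed identity complete the proof.
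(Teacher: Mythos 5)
Your proof is correct, and it takes a genuinely different and cleaner route than the paper's. Both arguments lean on the local-optimality condition of Lemma~\ref{lem:necessary condition}, but where the paper replaces $S^*$ wholesale by a top-$k$ prefix set $A$, verifies $\E L(A)\le \E L(S^*)$ through a lengthy computation in $\delta_x,\delta_y$ with a two-case analysis, and then iterates the construction until it stabilizes, you perform a single pairwise exchange $i^*\leftrightarrow j$ and close the argument with an extremal choice of $S^*$ (maximizing $\sum_{i\in S^*}p_i$ over the finitely many optimal subsets). Your identity $\E L(S^*)-\E L(S')=2(p_{i^*}-p_j)\bigl[\sum_{i\in S^*\setminus\{i^*\}}p_i+\tfrac12-D\bigr]$ checks out; the bracket is nonpositive by the removal inequality of Lemma~\ref{lem:necessary condition} (note this is the unconditional first part of that lemma, not its ``$S^*\neq[n]$'' branch as you label it --- only your empty-set exclusion needs the second part), and $p_{i^*}-p_j<0$ by hypothesis, so $S'$ is optimal with strictly larger parameter sum, contradicting extremality. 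The one hypothesis you use implicitly is $p_{i^*}>0$, needed to invoke Lemma~\ref{lem:necessary condition}; this is covered by the appendix's standing reduction to arms with $p_i>0$, and your treatment of the $D=1/2$ degeneracy is also right since $\E L(\{i\})=D^2+p_i(1-2D)=\tfrac14=\E L(\emptyset)$ there. What your approach buys is the elimination of both the two-case algebra and the somewhat informal ``repeat until it terminates'' step in the paper; what the paper's construction buys is that it directly exhibits an optimal set as a prefix of the sorted arms, the form used immediately afterwards in the proof of Theorem~\ref{thm: offline opt optimality}, whereas your argument delivers exactly the containment property stated in the lemma, which is all that is required.
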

\begin{proof}
	Let's prove by construction and contradiction. 
	Consider $S^*$, assume there exists $p_i> q_m$ but $i\not \in S^*$. In the following, we will ignore other random variables outside $S^*\cup \{i\}$ because they are irrelevant. Now, we rank the mean value $\{q_1, \dots,p_i,\dots ,q_m, \}$ and assume that $p_i$ is the $j$th largest element here. To simplify the notation, we will  call the newly ranked mean value set as
	$$ p_1\geq \dots \geq p_{j-1} \geq p_j\geq p_{j+1}\geq \dots \geq p_{m+1}$$
	The mean values of random variables in $S^*$ are $\{p_1,\dots ,p_{j-1} , p_{j+1}, \dots, p_{m+1}\}$ (used to be called as $\{q_1,\dots, q_m\}$) and the injected element (used to be denoted as $p_i$) now is called $p_j$. Under this simpler notation, we proceed to construct a  subset of top $k$ arms with some $k$ whose expected loss is no more than the optimal expected loss.
	
	Construct a  subset $A$ in the following way. Pick the smallest $k$ such that 
	\begin{align*}
	& \sum_{i=1}^k p_i> D-1/2\\
	& \sum_{i=1}^{k-1} p_i\leq D-1/2
	\end{align*}
	Then let $A=\{1,\dots, k\}$. It is easy to see that $k\geq j$. ( Since $S^*$ is optimal,  by Lemma \ref{lem:necessary condition}, excluding any element will go below $D-1/2$, so $k$ must include the newcomer $j$ to be beyond $D-1/2$.)
	
	We claim that $\E L(A)\leq \E L(S^*)$. Since $\E L(S^*)$ is optimal, we must have $\E L(A)=\E L(S^*)$ and $A$ is also optimal. Then, we can construct a new subset $A_1$ with the same rule above. Since there are only finite elements, we can always end up with an optimal set $\hat A$ which includes variables with the highest mean values. Then the proof is done.
	
	The only remaining part is to prove $\E L(A)\leq \E L(S^*)$. Denote 
	\begin{align*}
	& \sum_{i=1}^k p_i= D-1/2+\delta_x\\
	& \sum_{i\not=j,\, i=1}^{m+1} p_i=D-1/2+\delta_y
	\end{align*}
	By construction of $k$, $\delta_x\in(0, p_k]$. By Lemma \ref{lem:necessary condition}, $\delta_y\in [0, p_{m+1}]$.
	So we have
	\begin{align*}
	\sum_{i=1}^k p_i-\sum_{i\not=j,\, i=1}^{m+1} p_i=p_j-\sum_{i=k+1}^{m+1} p_i=\delta_x-\delta_y
	\end{align*}
	
	Now let's do some simple algebra and try to prove $\E L(A)-\E L(S^*)\leq 0$. Basically, we are trying to write  $\E L(A)-\E L(S^*)$ with $\delta_x$ $\delta_y$ defined above and then bound it using bounds of  $\delta_x$ $\delta_y$ .
	\begin{align}
	&\E L(A)-\E L(S^*)=(\sum_{i=1}^k p_i -D)^2 + \nonumber\\
	&\sum_{i=1}^k p_i(1-p_i)-(\sum_{i\not=j,\, i=1}^{m+1} p_i -D)^2 -\sum_{i\not=j,\, i=1}^{m+1} p_i(1-p_i)\nonumber\\
	&=(\sum_{i=1}^k p_i -D+\sum_{i\not=j,\, i=1}^{m+1} p_i-D )\nonumber\\
	&	(\sum_{i=1}^k p_i -D-\sum_{i\not=j,\, i=1}^{m+1} p_i +D)\nonumber\\
	&+p_j(1-p_j)-\sum_{i=k+1}^{m+1} p_i(1-p_i)\nonumber\\
	&=(\delta_x+\delta_y-1)(\delta_x-\delta_y)+p_j-\sum_{i=k+1}^{m+1} p_i+\sum_{i=k+1}^{m+1} p_i^2-p_j^2\nonumber\\
	&=(\delta_x+\delta_y-1)(\delta_x-\delta_y)+\delta_x-\delta_y+\sum_{i=k+1}^{m+1} p_i^2-p_j^2\nonumber\\
	&=(\delta_x+\delta_y)(\delta_x-\delta_y)+\sum_{i=k+1}^{m+1} p_i^2-p_j^2\nonumber\\
	&=(\delta_x+\delta_y)(\delta_x-\delta_y)+\sum_{i=k+1}^{m+1} p_i^2-(\sum_{i=k+1}^{m+1} p_i+\delta_x-\delta_y)^2\nonumber\\
	&=(\delta_x+\delta_y)(\delta_x-\delta_y)+\sum_{i=k+1}^{m+1} p_i^2 \nonumber\\
	&-\left[ (\sum_{i=k+1}^{m+1} p_i)^2 + (\delta_x-\delta_y)^2 +2(
	\delta_x-\delta_y)\sum_{i=k+1}^{m+1} p_i\right]\nonumber\\
	&=(\delta_x-\delta_y)(\delta_x+\delta_y-\delta_x+\delta_y-2\sum_{i=k+1}^{m+1} p_i)\nonumber\\
	&+\sum_{i=k+1}^{m+1} p_i^2-(\sum_{i=k+1}^{m+1} p_i)^2\nonumber\\
	&=(\delta_x-\delta_y)(2\delta_y-2\sum_{i=k+1}^{m+1} p_i)+\sum_{i=k+1}^{m+1} p_i^2-(\sum_{i=k+1}^{m+1} p_i)^2 \label{equ::1}
	\end{align}
	
	Now, we first notice that $\delta_y\leq p_{m+1}\leq \sum_{i=k+1}^{m+1} p_i$, so $(2\delta_y-2\sum_{i=k+1}^{m+1} p_i)\leq 0$. 
	
	Also notice that $$\sum_{i=k+1}^{m+1} p_i^2-(\sum_{i=k+1}^{m+1} p_i)^2\leq 0$$ since $p_i\geq 0$.
	
	\noindent\textbf{Case 1:} $\delta_x\geq \delta_y$. In this case, \eqref{equ::1}$\leq 0$ is straightforward.

	\noindent\textbf{Case 2:} $\delta_x< \delta_y$. In this case, $p_{m+1}<p_j<\sum_{i=k+1}^{m+1} p_i$. So we must have $m-k\geq 1$. Since $(2\delta_y-2\sum_{i=k+1}^{m+1} p_i)\leq 0$, we can decrease $\delta_x$ to 0
	\begin{align*}
	&	\eqref{equ::1}\leq -\delta_y(2\delta_y-2\sum_{i=k+1}^{m+1} p_i)+\sum_{i=k+1}^{m+1} p_i^2-(\sum_{i=k+1}^{m+1} p_i)^2\\
	&=RHS
	\end{align*}
	RHS is a quadratic function with respect to $\delta_y$ and it is increasing in the region $[0, \frac{\sum_{i=k+1}^{m+1} p_i}{2}]$. 
	Since $m-k\geq 1$, we have
	$$ \frac{\sum_{i=k+1}^{m+1} p_i}{2}\geq (p_{m+1}+p_m)/2\geq p_{m+1}\geq \delta_y$$
	So the highest possible value is reached when $\delta_y=p_{m+1}$. Plugging this in RHS, we have
	\begin{align*}
	&RHS\\
	&\leq -p_{m+1}(2p_{m+1}-2\sum_{i=k+1}^{m+1} p_i)+\sum_{i=k+1}^{m+1} p_i^2-(\sum_{i=k+1}^{m+1} p_i)^2\\
	&=-2p_{m+1}^2 +2(\sum_{i=k+1}^{m+1} p_i)p_{m+1}+\sum_{i=k+1}^{m+1} p_i^2-(\sum_{i=k+1}^{m+1} p_i)^2\\
	&=(\sum_{i=k+1}^{m+1} p_i)(p_{m+1}-\sum_{i=k+1}^{m} p_i)-p_{m+1}^2+\sum_{i=k}^{m} p_i^2\\
	&=(p_{m+1}+\sum_{i=k}^{m} p_i)(p_{m+1}-\sum_{i=k+1}^{m} p_i)-p_{m+1}^2+\sum_{i=k}^{m} p_i^2\\
	&=p_{m+1}^2-(\sum_{i=k}^{m} p_i)^2-p_{m+1}^2+\sum_{i=k}^{m} p_i^2\\
	&=\sum_{i=k}^{m} p_i^2-(\sum_{i=k}^{m} p_i)^2\leq 0 \quad \quad\tag{Since $p_i\geq 0$}
	\end{align*}
	Thus we have shown that 
	$$ \E L(A)-\E L(S^*)\leq 0$$

\end{proof}

Now, given Lemma \ref{lem:necessary condition} and Lemma \ref{lem: exists S*}, we can prove Theorem \ref{thm: offline opt optimality}.

\textit{Proof of theorem \ref{thm: offline opt optimality}:}
When $D<1/2$ or $\sum_{i=1}^n p_i\leq D-1/2$, see Corollary \ref{cor: D<1/2} and Corollary \ref{cor: small n}. 

When $D\geq 1/2$ and $\sum_{i=1}^n p_i> D-1/2$,
by Lemma \ref{lem: exists S*}, there exists an optimal subset $S^*=\{1,\dots, k\}$ for some $k$, i.e. containing the first several arms with largest mean values. Since $S^*$ is optimal, we must have, by Lemma \ref{lem:necessary condition}, that 
\begin{align*}
& \sum_{i=1}^k p_i\geq  D-1/2\\
& \sum_{i=1}^{k-1} p_i\leq D-1/2
\end{align*}

\begin{enumerate}
	\item[If] $\sum_{i=1}^k p_i> D-1/2$, then  $S^*$ is the output of Algorithm \ref{alg: offline oracle}, so the output of Algorithm \ref{alg: offline oracle} is optimal.
	\item[If] $\sum_{i=1}^k p_i=  D-1/2$, then it is easy to show that $\E L(\{1,\dots, k\})=\E L(\{1,\dots, k+1\})$. So $\{1,\dots, k+1\}$ is also optimal. The output of Algorithm \ref{alg: offline oracle} is $\{1,\dots, k+1\}$. So the output of Algorithm \ref{alg: offline oracle} is still optimal.
\end{enumerate}
\qed
\begin{corollary}\label{cor: case 2}
	If $\sum_{i=1}^{k-1} p_i=  D-1/2$, then the subset $\{1,\dots, k-1\}$ and  $\{1,\dots, k\}$ are both optimal.
\end{corollary}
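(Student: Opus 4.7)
The plan is to compute the expected losses of the two candidate subsets directly and show that they coincide, then invoke Theorem~\ref{thm: offline opt optimality} to conclude that both are optimal. Throughout, I use the standing assumption from Appendix~\ref{aped: offline opt} that $p_i>0$ for all $i$ and that $p_1\geq \cdots \geq p_n$.

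First I would verify that $\{1,\dots,k\}$ is indeed optimal in this boundary case. Since $\sum_{i=1}^{k-1} p_i = D-1/2$ does not satisfy the strict inequality in Line~3 of Algorithm~\ref{alg: offline oracle}, but $\sum_{i=1}^{k} p_i = D-1/2 + p_k > D-1/2$ (as $p_k>0$), the smallest index satisfying the strict inequality is $k$, so the algorithm outputs $\{1,\dots,k\}$. Theorem~\ref{thm: offline opt optimality} then gives the optimality of $\{1,\dots,k\}$.

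Next I would compare the two expected losses using the decomposition
\[
\E L(S) \;=\; \Bigl(\sum_{i\in S}p_i - D\Bigr)^{\!2} + \sum_{i\in S} p_i(1-p_i).
\]
For $S_1 = \{1,\dots,k-1\}$, the deviation term equals $(-1/2)^2 = 1/4$, so $\E L(S_1) = 1/4 + V_{k-1}$ where $V_{k-1}\coloneqq \sum_{i=1}^{k-1} p_i(1-p_i)$. For $S_2 = \{1,\dots,k\}$, the total expected contribution is $D-1/2+p_k$, giving deviation-squared $(p_k-1/2)^2 = p_k^2 - p_k + 1/4$, and the variance term is $V_{k-1} + p_k - p_k^2$. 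Summing, $\E L(S_2) = 1/4 + V_{k-1}$ as well. Hence $\E L(S_1) = \E L(S_2)$, and since $S_2$ is optimal, so is $S_1$.

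There is no real obstacle here; the corollary is essentially a bookkeeping observation that at the boundary $\sum_{i=1}^{k-1} p_i = D-1/2$, the quadratic-plus-variance objective is flat between including and excluding the $k$-th arm, because the added deviation-squared cost exactly cancels the reduction in variance. The only minor care needed is to justify why Algorithm~\ref{alg: offline oracle} still picks $\{1,\dots,k\}$ (and not $\{1,\dots,k-1\}$) in the degenerate case, which follows from the strict inequality used in Line~3 together with $p_k>0$.
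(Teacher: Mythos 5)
Your proof is correct and follows essentially the same route as the paper: the paper leaves the corollary unproved but the identical equal-loss computation appears (with the index shifted by one) in the second case of its proof of Theorem~\ref{thm: offline opt optimality}, where it asserts $\E L(\{1,\dots,k\})=\E L(\{1,\dots,k+1\})$ when the partial sum hits $D-1/2$ exactly. You simply carry out that cancellation explicitly and anchor optimality via the algorithm's output, which is exactly the intended argument.
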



\section{Proof of Proposition \ref{prop: A1A2 epsilon0} and Lemma  \ref{lem: no regret bar E bar B} with Assumption (A1) and (A2)}\label{aped: D no regret Ass}
We will prove Lemma \ref{lem: no regret bar E bar B} by using $\epsilon_0$'s expression in Proposition \ref{prop: A1A2 epsilon0} under Assumption (A1) and (A2). The proof of Proposition \ref{prop: A1A2 epsilon0} follows naturally. 

Let $\F_t$ denote the natural filtration up to time $t$.

Before the proof, we provide two technical  lemmas that characterize the properties of $S_t$ selected by CUCB-Avg, which will be useful not only in this section but also in the sections afterwards.
\begin{lemma}[Properties of CUCB-Avg's Selection]\label{lem:alg properties}
	For any $t$ that is not in the initialization phase, the subset $S_t$ selected by CUCB-Avg satisfies the following properties
	\begin{enumerate}
		\item[i)] $\sum_{i\in S_t} \bar p_{i}(t-1) >D-1/2 $ or $S_t=[n]$.
		\item[ii)] Define $i_{\min}\in\argmin_{i\in S_t}U_i(t)$, then $$\sum_{i\in S_t-\{i_{\min}\}}\bar p_{i}(t-1) \leq D-1/2. $$
		\item[iii)] For any $j\in S_t$ and any $i\in [n]$ such that $U_i(t)> U_j(t)$, we have $i\in S_t$.
	\end{enumerate}
	
\end{lemma}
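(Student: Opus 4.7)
The plan is to verify each of the three properties by direct inspection of Lines~6--8 of Algorithm~\ref{alg: ucb_avg}. Recall that CUCB-Avg first sorts the arms in non-increasing order of $U_i(t)$ to form the sequence $\sigma(t,1),\dots,\sigma(t,n)$, then chooses $k_t=|S_t|$ as the smallest non-negative integer with $\sum_{i=1}^{k_t}\bar p_{\sigma(t,i)}(t-1)>D-1/2$, setting $k_t=n$ if no such integer exists. Thus $S_t=\{\sigma(t,1),\dots,\sigma(t,k_t)\}$ is by construction the top-$k_t$ prefix of the $U$-ranking.

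Property (i) follows immediately by splitting on the two branches of Line~7: in the ``smallest $k_t$'' branch $\sum_{i\in S_t}\bar p_i(t-1)=\sum_{i=1}^{k_t}\bar p_{\sigma(t,i)}(t-1)>D-1/2$, while in the fall-back branch $k_t=n$, so $S_t=[n]$. For property (ii), I would pick $i_{\min}=\sigma(t,k_t)$, which lies in $\arg\min_{i\in S_t}U_i(t)$ because the $U$-values along $\sigma(t,\cdot)$ are non-increasing. Then $S_t-\{i_{\min}\}=\{\sigma(t,1),\dots,\sigma(t,k_t-1)\}$, and the minimality of $k_t$ in Line~7 gives $\sum_{i=1}^{k_t-1}\bar p_{\sigma(t,i)}(t-1)\leq D-1/2$ in both branches (in the fall-back branch a fortiori, since even the full sum over $n$ terms is $\leq D-1/2$). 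For property (iii), let $j\in S_t$ with $j=\sigma(t,\ell)$ for some $\ell\leq k_t$. Any $i\in[n]$ with $U_i(t)>U_j(t)=U_{\sigma(t,\ell)}(t)$ must occupy some position $\ell'<\ell$ in the non-increasing ranking, hence $i\in\{\sigma(t,1),\dots,\sigma(t,\ell-1)\}\subseteq S_t$.

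Because this lemma is essentially a bookkeeping statement that translates the algorithmic construction of $S_t$ into three structural facts, I do not anticipate a substantive obstacle. The only care needed is in the treatment of ties: for (ii) a consistent choice is encoded by taking $i_{\min}=\sigma(t,k_t)$, and for (iii) the strict inequality $U_i(t)>U_j(t)$ rules ties out entirely. The degenerate case $k_t=0$ leaves $i_{\min}$ undefined in (ii), but this arises only when $D<1/2$, which Corollary~\ref{cor: D<1/2} excludes without loss of generality from the regret analysis that invokes this lemma.
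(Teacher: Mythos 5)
Your proposal is correct and matches the paper's approach: the paper's own proof consists of the single sentence that the lemma is straightforward from Algorithm~\ref{alg: ucb_avg}, and your write-up simply supplies the routine details (prefix structure of the $U$-ranking, the two branches of the $k_t$ rule, and its minimality). Your explicit handling of ties via the choice $i_{\min}=\sigma(t,k_t)$ and of the degenerate case $k_t=0$ is a sensible clarification consistent with how the lemma is invoked.
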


\begin{proof}
The proof is straightforward by Algorithm \ref{alg: ucb_avg}.
\end{proof}

\begin{lemma}\label{lem: epsilon < Deltak}
	For any $\epsilon>0$ satisfying $\epsilon\leq\Delta_k/2$ for some $k$ with $0\leq k \leq n$,  given $\bar E_t$ and $\bar B_t(\epsilon)$,  we have either $S_t \subseteq \{ \sigma(1),\dots, \sigma(k)  \}$ or $\{ \sigma(1),\dots, \sigma(k)  \} \subseteq S_t$.
\end{lemma}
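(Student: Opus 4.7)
The plan is to argue by contradiction: if neither $S_t \subseteq \{\sigma(1),\dots,\sigma(k)\}$ nor $\{\sigma(1),\dots,\sigma(k)\} \subseteq S_t$ holds, then there must exist a ``misplaced pair'' $i \in S_t$ with $i = \sigma(m)$ for some $m > k$ and $j \in \{\sigma(1),\dots,\sigma(k)\}$ with $j \notin S_t$ (so $j = \sigma(\ell)$ for some $\ell \le k$). For this pair, the ordering of the true parameters gives $p_j - p_i \ge p_{\sigma(k)} - p_{\sigma(k+1)} = \Delta_k \ge 2\epsilon$, and I intend to derive a contradiction by showing that $U_j(t)$ and $U_i(t)$ cannot be separated by $2\epsilon$ under the two events in the hypothesis. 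The boundary cases $k=0$ and $k=n$ give trivial inclusions, so I may restrict to $1 \le k \le n-1$ where $\Delta_k$ is defined.

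Next I would establish two confidence-type estimates that will be used throughout the paper, and I expect to isolate them (or restate them) here. First, under $\bar E_t$, for every $j \in [n]$, the definition of $U_j(t)$ together with $\bar p_j(t-1) > p_j - \sqrt{\alpha \log t / (2 T_j(t-1))}$ yields $U_j(t) \ge p_j$. Second, for any $i \in S_t$, the event $\bar B_t(\epsilon)$ gives $T_i(t-1) > \alpha \log T / (2\epsilon^2)$, so the confidence radius at $i$ is bounded by $\sqrt{\alpha \log t / (2 T_i(t-1))} < \epsilon$ (using $t \le T$); combined with $\bar E_t$ this gives $U_i(t) < p_i + 2\epsilon$.

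Now I would close the contradiction using property (iii) of Lemma~\ref{lem:alg properties}: its contrapositive states that if $j \in S_t$ and $i \notin S_t$ then $U_i(t) \le U_j(t)$. Applied to our misplaced pair ($i \in S_t$, $j \notin S_t$), this gives $U_j(t) \le U_i(t)$. Chaining this with the two estimates above yields
\[
p_j \le U_j(t) \le U_i(t) < p_i + 2\epsilon,
\]
so $p_j - p_i < 2\epsilon \le \Delta_k$, contradicting $p_j - p_i \ge \Delta_k$.

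The only subtlety I anticipate is a careful handling of the clipping in the definition $U_i(t)=\min(\cdot,1)$ (so $U_j(t) \ge p_j$ continues to hold when $p_j = 1$) and the ratio $\sqrt{\log t/\log T} \le 1$ used to absorb the $\log t$ factor into the $\epsilon$-bound; neither is deep, but both must be stated explicitly to be rigorous. The rest is a short chain of inequalities, and no additional combinatorial argument is needed beyond property (iii) of Lemma~\ref{lem:alg properties}.
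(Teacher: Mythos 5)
Your proposal is correct and follows essentially the same route as the paper's proof: both reduce to a ``misplaced pair'' straddling index $k$, use $\bar E_t$ to get $U(t)\geq p$ for all arms and $\bar E_t\cap\bar B_t(\epsilon)$ to get $U(t)<p+2\epsilon$ for selected arms, and invoke property (iii) of Lemma~\ref{lem:alg properties} to reach a contradiction with the gap $\Delta_k\geq 2\epsilon$. The only cosmetic difference is that the paper derives $U_i(t)>U_j(t)$ and then applies (iii) directly, whereas you apply its contrapositive first and land on a numerical contradiction; these are logically equivalent.
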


\begin{proof}
	When $k=0$ or $n$, the statement is trivially true.
	
	When $1\leq k \leq n-1$.
	Suppose there is a realization of $\F_{t-1}$ such that $\bar E_t$ and $\bar B_t(\epsilon)$ hold, but $S_t \not \subseteq \{ \sigma(1),\dots, \sigma(k)  \}$ and $\{ \sigma(1),\dots, \sigma(k)  \}\not \subseteq S_t$. Fix this realization of $\F_{t-1}$, then there exists $j \in S_t - \{ \sigma(1),\dots, \sigma(k)  \}$ and $i\in \{ \sigma(1),\dots, \sigma(k)  \}- S_t$. We will show that $U_i(t) >U_j(t)$ in the following
	\begin{align*}
	U_j(t) &\leq \bar p_j(t-1) +\sqrt{\frac{\alpha \log t}{2T_j(t-1)}} \\
	&< p_j + 2\sqrt{\frac{\alpha \log t}{2T_j(t-1)}}  < p_j + 2\epsilon   \\
	& \leq p_{\sigma(k+1)}+2\epsilon  \leq p_{\sigma(k+1)}+2\frac{p_{\sigma(k)}-p_{\sigma(k+1)}}{2}  \\
	& = p_{\sigma(k)} \leq p_i \\
	& \leq \min\left( \bar p_i(t-1) + \sqrt{ \frac{\alpha \log t}{2 T_i(t-1)} },1\right) = U_i(t)
	\end{align*}
	where the first inequality and last equality are from the definition of the upper confidence bound \eqref{eq:Ui}, the second inequality uses the fact that $\bar E_t$ holds, the third inequality is based on $\bar B_t(\epsilon)$, the fourth inequality is by our choice of $j$, the fifth inequality is by $\epsilon\leq \Delta_k/2$, the sixth inequality is by our choice of $i$, and the last inequality uses the fact that $\bar E_t$ and $\bar B_t(\epsilon)$ hold.

	Together with Lemma \ref{lem:alg properties} (iii), we have shown that $i \in S_t$, which leads to a contradiction.
\end{proof}

\nit{Proof of Lemma \ref{lem: no regret bar E bar B} given Assumption (A1) and (A2):} 

Notice that $S^*=\phi(p, D)=\{\sigma(1),\dots, \sigma(k)\}$. By Lemma \ref{lem: epsilon < Deltak} and $\epsilon_0 \leq \Delta_k/2$, we have either $S_t \subseteq S^*$ or $S^* \subseteq S_t$ given $\bar E_t$ and $\bar B_t(\epsilon_0)$. In the following, we will first show that $S_t=S^*$ in Step 1-2  then prove zero regret in Step 3.


\nbf{Step 1: Given $\bar E_t,\bar B_t(\epsilon_0)$, $S_t\subsetneqq S^*$ is impossible: } We prove this by contradiction.
Suppose $S_t\subsetneqq S^*$, then $S_t \not = [n]$ and
\begin{align*}
&	\sum_{i\in S_t }\bar p_i(t-1) \\
&<  \sum_{i\in S_t}\left(  p_i+ \sqrt{ \frac{\alpha \log t}{2 T_i(t-1)} } \right) \qquad \tag{by $\bar E_t$}\\
& < \sum_{i\in S_t}\left(  p_i+ \epsilon_0\right) \qquad \tag{by $\bar B_t(\epsilon_0)$}\\
& \leq \sum_{i=1}^{k-1}\left( p_{\sigma(i)}+ \epsilon_0\right) \qquad \tag{by $S_t\subsetneqq S^*$ }\\
& = D-1/2-\delta_2 +(k-1)\epsilon_0  \qquad \tag{by definition of $\delta_2$}\\
& \leq D-1/2 \qquad \tag{by definition of $\epsilon_0$, $\epsilon_0 \leq \frac{\delta_2}{k}$}
\end{align*}
However, by Lemma \ref{lem:alg properties} (i), $ \sum_{i\in S_t }\bar p_i(t-1)> D-1/2$, which leads to a contradiction.

\nbf{Step 2: Given $\bar E_t,\bar B_t(\epsilon_0)$, $S^*\subsetneqq S_t$ is impossible: }We prove this by contradiction.
Suppose $S^*\subsetneqq S_t$, so $S^*\not =[n]$, thus $\sum_{i=1}^n p_i>D-1/2$.  We denote $i_{\min}\in \argmin_{i\in S_t} U_i(t)$. We will first  show that $i_{\min} \in S_t-S^*$, then show that $\sum_{i\in S_t-\{ i_{\min}\}} \bar p_i(t-1) > D-1/2$. Thus, by Lemma \ref{lem:alg properties} (ii), we have a contradiction. 


Now, first of all, we show that $i_{min}\in  S_t-S^*$. It suffices to show that for any $i\in S^*$, and $j\in S_t-S^*$,  we have $U_i(t)>U_j(t)$. This is proved by the following.
\begin{align*}
U_j(t) & < p_j + 2\epsilon_0  \qquad \tag{by $\bar E_t$ and $\bar B_t(\epsilon_0)$}\\
& \leq p_{\sigma(k+1)} + 2 \frac{\Delta_k}{2} \quad \tag{by $j \not \in S^*$ and def. of $\epsilon_0$}\\
& = p_{\sigma(k)}  \leq p_i \leq U_i(t)  \tag{by $\bar E_t$ and $\bar B_t(\epsilon_0)$}
\end{align*}

Then we show $\sum_{i\in S_t-\{ i_{\min}\}} \bar p_i(t-1) > D-1/2$ by
\begin{align*}
\sum_{i\in S_t-\{ i_{min}\}} \bar p_i(t-1) & \geq \sum_{i \in S^*} \bar p_i(t-1)\\
&> \sum_{i\in S^*}  (p_i-\epsilon_0)  \\
&= D-1/2+\delta_1 -k\epsilon_0 \\
& \geq D-1/2 
\end{align*}
where the first inequality is by $S_t-\{i_{\min}\} \supseteq S_t$, and the  second inequality is by $\bar E_t, \bar B_t(\epsilon_0)$.

\nbf{Step 4: Prove $\E R_t(S_t)I_{\{ \bar E_t,\bar B_t(\epsilon_0) \}} =0$. }
By the three steps above, we have $S_t=S^*$ under $\bar E_t,\bar B_t(\epsilon_0) $, then it is straightforward that
\begin{align*}
&\E [R_t(S_t)I_{\{ \bar E_t,\bar B_t(\epsilon_0) \}} ]
= \E [R_t(S_t)I_{\{ \bar E_t,\bar B_t(\epsilon_0),S_t = S^*\}}]\\
& = \E [R_t(S^*)I_{\{ \bar E_t,\bar B_t(\epsilon_0),S_t = S^*\}}]=0
\end{align*}
\qed


\section{A general expression of $\epsilon_0$ and a proof of Lemma  \ref{lem: no regret bar E bar B} without Assumption (A1) and (A2)}\label{aped: D no regret}
Without Assumption (A1) and (A2), we need additional technical discussion because there  might be multiple optimal subsets due to ties in the probability profile $p$ and Corollary \ref{cor: case 2}. But the main idea behind the proof is the same.

We will first give an explicit expression of $\epsilon_0$, then we will show $\E R_t(S_t)I_{\{\bar E_t, \bar B_t(\epsilon_0) \}}=0$ given the new $\epsilon_0$.

Without loss of generality, we will consider $D\geq 1/2$ due to Corollary \ref{cor: D<1/2}.

We denote the natural filtration up to time $t$ as $\F_t$.

Now we present the expression of $\epsilon_0$ in the general case.

\begin{definition}\label{def: general epsilon0}
	Let $p_{\sigma(1)}\geq \dots \geq p_{\sigma(n)}$. The $\epsilon_0$ in Lemma \ref{lem: no regret bar E bar B} can be determined by
	$$ \epsilon_0=\min(\frac{\delta_1}{l_1} , \frac{\delta_2}{l_2} , \frac{\Delta_{k_1}}{2}, \frac{\Delta_{k_2}}{2})$$
where  parameter $\delta_1, \delta_2, l_1, l_2, k_1, k_2$ are defined below.
	
	\nbf{Definition of $l_1$:} 
	\begin{align*}
	l_1 =
	\begin{cases}
	\min G_1& \text{if } \sum_{i=1}^n p_i >D-1/2\\
	n & \text{ if } \sum_{i=1}^n p_i \leq D-1/2
	\end{cases}
	\end{align*}
	where $$G_1 = \{1\leq k \leq n | \ \sum_{i=1}^k p_{\sigma(i)} >D-1/2 \}. $$ 
	Notice that $\{\sigma(1),\dots, \sigma(l_1)  \}$ is one possible output of  Algorithm \ref{alg: offline oracle} (there might be other outputs due to the random tie-breaking rule). 
	
	
	
	\nbf{Definition of $\delta_1$:} 
	\begin{align*}
	\delta_1 =
	\begin{cases}
	\sum_{i=1}^{l_1} p_{\sigma(i)} -(D-1/2)& \text{if } \sum_{i=1}^n p_i>D-1/2\\
	n & \text{ if } \sum_{i=1}^n p_i\leq D-1/2
	\end{cases}
	\end{align*}
	Note that when $l_1=n$, we let $\delta_1 =n$, so that $\frac{\delta_1}{l_1}=1$ is large enough to not affect the value of $\epsilon_0$.
	
	\nbf{Definition of $l_2$:} 
	\begin{align*}
	l_2 =
	\max  \{0\leq k <l_1 | \ \sum_{i=1}^k p_{\sigma(i)} <D-1/2 \}
	\end{align*}
	 Note that $0\leq l_2\leq n-1$, and $l_1> l_2$.

	\nbf{Definition of $\delta_2$:} 
	\begin{align*}
	\delta_2=
	\begin{cases}
	(D-1/2) - 	\sum_{i=1}^{l_2} p_{\sigma(i)}& \text{if } l_2\geq 1\\
	1 & \text{ if } l_2=0
	\end{cases}
	\end{align*}
	Note that when $l_2=0$, let $\delta_2=1$, then $\frac{\delta_2}{l_2}=+\infty$, which is large enough to not affect the value of $\epsilon_0$

	\nbf{Definition of $\Delta_i$:} 
	\begin{align*}
	\Delta_i  & =
	\begin{cases}
	p_{\sigma(i)} - p_{\sigma(i+1)}  & \text{ if } 1\leq i \leq n-1\\
	2 & \text{ if } i=0, n
	\end{cases}
	\end{align*}
	Note that when $i=0$ or $n$, we let $\Delta_i= 2$ which is large enough to keep the $\epsilon_0$ unaffected.
	
	\nbf{Definition of $k_1$:} 
	\begin{align*}
	k_1 = \max \{ 0\leq  i \leq l_1-1| \  \Delta_i>0\}
	\end{align*}
Intuitively, $k_1$ is the  largest index under the non-increasing order such that the parameter $p_{\sigma(k_1)} $ is the second smallest among $p_{\sigma(1)}, \dots, p_{\sigma(l_1)}$.
	
	\nbf{Definition of $k_2$:} 
	\begin{align*}
	k_2 = \min \{ l_1\leq  i \leq n| \  \Delta_i>0\}
	\end{align*}
	Intuitively, $k_2$ is the largest index under the non-increasing order that can be possibly selected by the offline optimization algorithm. 
\end{definition}

In the following, we will first prove a supportive corollary based on Lemma \ref{lem: epsilon < Deltak}, then provide a proof of Lemma  \ref{lem: no regret bar E bar B}.

\begin{corollary}\label{cor: D divide S*}
	Given $\bar E_t$ and $\bar B_t(\epsilon_0)$, then we have  $S_t \subsetneqq \{ \sigma(1),\dots, \sigma(k_1)  \}$, or $\{ \sigma(1),\dots, \sigma(k_1)  \} \subseteq S_t \subseteq \{ \sigma(1),\dots, \sigma(k_2)  \}$, or $\{ \sigma(1),\dots, \sigma(k_2)  \} \subsetneqq S_t$.
\end{corollary}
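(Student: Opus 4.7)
The plan is to apply Lemma \ref{lem: epsilon < Deltak} twice---once with $k=k_1$ and once with $k=k_2$---and then glue the two resulting dichotomies together into the claimed trichotomy. The key enabling facts are that by Definition \ref{def: general epsilon0}, $\epsilon_0\leq\Delta_{k_1}/2$ and $\epsilon_0\leq\Delta_{k_2}/2$, so both applications are legal, and that $0\leq k_1\leq l_1-1<l_1\leq k_2\leq n$, so $k_1<k_2$ and consequently $\{\sigma(1),\dots,\sigma(k_1)\}\subsetneq\{\sigma(1),\dots,\sigma(k_2)\}$.

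The core step is the case analysis. Under $\bar E_t\cap\bar B_t(\epsilon_0)$, Lemma \ref{lem: epsilon < Deltak} at $k=k_1$ gives: (i) $S_t\subseteq\{\sigma(1),\dots,\sigma(k_1)\}$, or (ii) $\{\sigma(1),\dots,\sigma(k_1)\}\subseteq S_t$; and the same lemma at $k=k_2$ gives: (iii) $S_t\subseteq\{\sigma(1),\dots,\sigma(k_2)\}$, or (iv) $\{\sigma(1),\dots,\sigma(k_2)\}\subseteq S_t$. Because $\{\sigma(1),\dots,\sigma(k_1)\}\subsetneq\{\sigma(1),\dots,\sigma(k_2)\}$, the pair (i)$\wedge$(iv) is impossible; (i)$\wedge$(iii) collapses to $S_t\subseteq\{\sigma(1),\dots,\sigma(k_1)\}$; (ii)$\wedge$(iv) collapses to $\{\sigma(1),\dots,\sigma(k_2)\}\subseteq S_t$; and (ii)$\wedge$(iii) is exactly the sandwich $\{\sigma(1),\dots,\sigma(k_1)\}\subseteq S_t\subseteq\{\sigma(1),\dots,\sigma(k_2)\}$. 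To match the strict inclusions in the statement, I would push the two boundary cases $S_t=\{\sigma(1),\dots,\sigma(k_1)\}$ and $S_t=\{\sigma(1),\dots,\sigma(k_2)\}$ into the middle branch, leaving the outer branches with $\subsetneqq$.

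There is no real obstacle: this is a purely combinatorial bookkeeping argument once Lemma \ref{lem: epsilon < Deltak} is invoked. The only mildly delicate points are the degenerate endpoints $k_1=0$ and $k_2=n$. If $k_1=0$, then $\{\sigma(1),\dots,\sigma(k_1)\}=\emptyset$, so branch (ii) of the first dichotomy holds trivially and the first branch of the trichotomy is vacuous; if $k_2=n$, then $\{\sigma(1),\dots,\sigma(k_2)\}=[n]$, branch (iii) of the second dichotomy holds trivially, and the third branch of the trichotomy is vacuous. Both are consistent with the conventions $\Delta_0=\Delta_n=2$ used in Definition \ref{def: general epsilon0} to make $\epsilon_0\leq\Delta_{k_1}/2$ and $\epsilon_0\leq\Delta_{k_2}/2$ hold automatically at the boundary, so Lemma \ref{lem: epsilon < Deltak} still applies.
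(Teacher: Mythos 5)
Your proposal is correct and follows essentially the same route as the paper: the paper's own (very terse) proof likewise notes $k_1<k_2$ and invokes Lemma \ref{lem: epsilon < Deltak} at both $k_1$ and $k_2$ using $\epsilon_0\leq\Delta_{k_1}/2$ and $\epsilon_0\leq\Delta_{k_2}/2$. Your case analysis and the treatment of the degenerate endpoints $k_1=0$, $k_2=n$ simply spell out the details the paper leaves implicit.
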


\begin{proof}
	It is easy to see that $k_1 < k_2$ by definition. By using the fact that $\epsilon_0\leq \Delta_{k_1}/2$, $\epsilon_0\leq \Delta_{k_1}/2$, and  Lemma \ref{lem: epsilon < Deltak}, it is straightforward to prove the corollary.
\end{proof}

Finally, we are ready to prove Lemma  \ref{lem: no regret bar E bar B}.

\nit{Proof of Lemma  \ref{lem: no regret bar E bar B}:} The major part of the proof is to show that if $\bar E_t,\bar B_t(\epsilon_0)$ happen, then $S_t$ must be optimal. Then, given that $S_t$ is optimal, it is easy to prove zero regret at time $t$. 

Now, let's first prove $S_t\in \Ss^*$ given $\bar E_t,\bar B_t(\epsilon_0)$, where $\Ss^*$ denotes the set of all possible optimal subsets.

\nbf{Step 1: prove that $ \ S_t \in \Ss^*$ when $\bar E_t$ and $\bar B(\epsilon_0) $ hold.} We will list all possible scenarios,  and prove that in each scenario,  when $\bar E_t$ and $\bar B(\epsilon_0)$ hold, we have $S_t \in \Ss^*$.

\underline{Scenario 1: When $l_1 >l_2+2$,} we will first show that $\Ss^*$ contains any set $S$ that satisfies $S\supseteq \{\sigma(1),\dots, \sigma(l_2+1)  \}$. To prove this, we first mention that 
   the following facts can be verified based on the definitions: $k_2=l_1=n$, $\sum_{i=1}^{l_2+1 }p_{\sigma(i)} = \sum_{i=1}^{l_2+2 }p_{\sigma(i)}=D-1/2$, $\sum_{i=1}^n p_i=D-1/2$, $p_{\sigma(l_2+1)}>0$, $p_{\sigma(l_2+2)}= \dots =p_{\sigma(n)}=0$, $k_1=l_2+1$.
Moreover, by Corollary \ref{cor: case 2},  $\{\sigma(1),\dots, \sigma(l_2+1)  \}$ is optimal. Since  the union of set  $\{\sigma(1),\dots, \sigma(l_2+1)  \}$ and some arms with zero probabilities is also optimal, any set with a subset $\{\sigma(1),\dots, \sigma(l_2+1)  \}$ is also optimal. 


Next, we will show $S_t$ is optimal. By Corollary \ref{cor: D divide S*} and $k_1=l_2+1$ and $k_2=n$, we have either $S_t \subsetneqq \{ \sigma(1),\dots, \sigma(l_2+1)  \}$ or $\{ \sigma(1),\dots, \sigma(l_2+1)  \} \subseteq S_t$. Since the second possible case guarantees $S_t \in \Ss^*$, we only need to show
 that $S_t \subsetneqq \{ \sigma(1),\dots, \sigma(l_2+1)  \}$ is impossible. This is done by contradiction. Suppose $S_t \subsetneqq \{ \sigma(1),\dots, \sigma(l_2+1)  \}$, then
\begin{align*}
&\sum_{i\in S_t}\bar p_{i}(t-1) < \sum_{i=1}^{l_2}\bar p_{\sigma(i)}(t-1)\\
&\leq \sum_{i=1}^{l_2} p_{\sigma(i)}+\epsilon_0 \tag{by $\bar E_t$, $\bar B_t(\epsilon_0)$}\\
& =  \sum_{i=1}^{l_2} p_{\sigma(i)}+l_2\epsilon_0\leq D-1/2  \tag{by def. of $l_2$ and $\epsilon_0$}
\end{align*} 
By Lemma \ref{lem:alg properties} (i), this leads to a contradiction. Therefore, $S_t$ is optimal in this scenario.

\underline{Scenario 2: When $l_1 =l_2+1$,} we will first show that $\Ss^*$ contains any set that contains and only contains $\sigma(1),\dots, \sigma(k_1) $ together with $l_1-k_1$ arms from subset $\{\sigma(k_1+1),\dots, \sigma(k_2) \}$. To prove this, notice that by definition of $l_1$ and Theorem \ref{thm: offline opt optimality}, $\phi(p, D)=\{\sigma(1), \dots, \sigma(l_1) \} $ is optimal. Then, by definition of $k_1$ and $ k_2$, we have $k_1+1 \leq l_1 \leq k_2$ and the arms $\{\sigma(k_1+1),\dots, \sigma(k_2) \}$ are in a tie with the same parameter $p_{\sigma(l_1)}$. Moreover, there are $l_1-k_1$ arms in $\phi(p, D)$ whose value is $p_{\sigma(l_1)}$. Therefore,  replacing these $l_1-k_1$ arms with any $l_1-k_1$ arms with the same value will still yield an optimal subset. This proves that any set is optimal if it contains $\sigma(1),\dots, \sigma(k_1) $ together with $l_1-k_1$ arms from subset $\{\sigma(k_1+1),\dots, \sigma(k_2) \}$.

Next, we will show $S_t$ is optimal. By Corollary \ref{cor: D divide S*}, $S_t$ satisfies one of the three possibilities (a) $S_t \subsetneqq \{ \sigma(1),\dots, \sigma(k_1)  \}$, or (b) $\{ \sigma(1),\dots, \sigma(k_1)  \} \subseteq S_t \subseteq \{ \sigma(1),\dots, \sigma(k_2)  \}$, or  (c) $\{ \sigma(1),\dots, \sigma(k_2)  \} \subsetneqq S_t$. We will show that (a) and (c) are impossible. Then, we will show that $S_t$ has exactly $l_1-k_1$ arms with parameter  $p_{\sigma(k_1+1)}$. Thus, $S_t$ is optimal. 


 Firstly, we suppose the possibility (a) is true, i.e. $S_t \subseteq  \{\sigma(1),\dots, \sigma(k_1)  \}$, then we have
\begin{align*}
&\sum_{i\in S_t}\bar p_{i}(t-1)\leq \sum_{i\in S_t}\left( p_{i}+\epsilon_0\right) \tag{by $\bar E_t, \bar B_t(\epsilon_0)$}\\
& \leq  \sum_{i=1}^{k_1} p_{\sigma(i)}+k_1\epsilon_0 \\
& \leq  \sum_{i=1}^{l_2} p_{\sigma(i)}+l_2\epsilon_0 \tag{by def. $k_1 \leq l_1-1=l_2$}\\
&\leq D-1/2 \tag{by def. of $l_2, \delta_2, \epsilon_0$}
\end{align*}
which contradicts Lemma \ref{lem:alg properties} (i). Hence, we have $S_t \supsetneqq \{\sigma(1),\dots, \sigma(k_1)  \}$.

Secondly, we suppose possibility (c) is true, i.e. $S_t \supsetneqq  \{\sigma(1),\dots, \sigma(k_2)  \} $.  We denote $i_{min}\in \argmin_{i\in S_t} U_i(t)$. It can be shown that $i_{min} \in S_t- \{\sigma(1),\dots, \sigma(k_2)  \} $. This is because for any $i\in  \{\sigma(1),\dots, \sigma(k_2)  \} $ and any $j\in S_t- \{\sigma(1),\dots, \sigma(k_2)  \} $, we have
\begin{align*}
U_j(t) & < p_j + 2\epsilon_0  & \tag{by $\bar E_t$ and $\bar B_t(\epsilon_0)$}\\
& \leq p_{\sigma(k_2+1)} + 2 \frac{\Delta_{k_2}}{2} \\
& = p_{\sigma(k_2)}\leq p_i \leq U_i(t)
\end{align*}
where the first and last inequality use the definition of $U_i(t)$ in \eqref{eq:Ui} and the fact that $\bar E_t$ and $\bar B_t(\epsilon_0)$ are true. As a result, $i_{min}\not\in \{\sigma(1),\dots, \sigma(k_2)  \} $. Therefore, we have
\begin{align*}
&\sum_{i\in S_t-\{ i_{min}\}} \bar p_i(t-1) \geq \sum_{i=1}^{k_2}\bar p_{\sigma(i)}(t-1)  \\
& \geq  \sum_{i=1}^{l_1}\bar p_{\sigma(i)}(t-1)   > \sum_{i=1}^{l_1} p_{\sigma(i)} -l_1\epsilon_0 \\
& =D-1/2+\delta_1 -l_1\epsilon_0 \geq D-1/2 
\end{align*}
where the last equality uses the definition of $\delta_1$ and the fact that $\sum_{i=1}^n p_i>D-1/2$ when $S_t \supsetneqq \{\sigma(1), \dots, \sigma(k_2)\}$. 
This leads to a contradiction with Lemma \ref{lem:alg properties} (ii). Thus, (c) is not true.

%
%

Consequently, we have  $\{ \sigma(1),\dots, \sigma(k_1)  \} \subseteq S_t \subseteq \{ \sigma(1),\dots, \sigma(k_2)  \}$. In the following, we  will show that $S_t$ has exactly $l_1-k_1$ arms with parameter $p_{\sigma(k_1+1)}$ by contradiction. By using the same proof techniques as above, it is straightforward to show that if we select more or less than $l_1-k_1$ arms, the sum of parameters is either more than $D-1/2$ excluding $i_{min}$ or less than $D-1/2$, which leads to a contradiction with Lemma \ref{lem:alg properties} (ii) and (iii). 

In conclusion, $S_t$ is optimal in this scenario.

\underline{Scenario 3: When $l_1 =l_2+2$,} it can be shown that $\Ss^*$ contains subsets that contains and only contains $\sigma(1),\dots, \sigma(k_1) $ together with either $l_1-k_1$ or $l_1-1-k_1$ arms with parameter $p_{\sigma(l_1)}$. To prove this, notice that by $l_1 =l_2+2$, we have $\sum_{i=1}^{l_1-1}p_{\sigma(i)}=D-1/2$. Hence by Corollary \ref{cor: case 2}, both $\{\sigma(1),\dots, \sigma(l_1)  \}$ and $ \{\sigma(1),\dots, \sigma(l_1-1)  \}$ are optimal. In addition, by definition of $k_1$ and $ k_2$, we have $k_1+1 \leq l_1 \leq k_2$ and the arms $\{\sigma(k_1+1),\dots, \sigma(k_2) \}$ are in a tie with the same parameter $p_{\sigma(l_1)}$. Moreover, there are $l_1-k_1$ and $l_1-k_1-1$ arms with parameter $p_{\sigma(l_1)}$ in $\{\sigma(1),\dots, \sigma(l_1)  \}$ and $ \{\sigma(1),\dots, \sigma(l_1-1)  \}$ respectively. Therefore,  replacing these  arms by  arms with the same parameter will still yield an optimal subset. This proves that any set is optimal if it contains $\sigma(1),\dots, \sigma(k_1) $ together with $l_1-k_1$ or $l_1-k_1-1$ arms from subset $\{\sigma(k_1+1),\dots, \sigma(k_2) \}$.


Next, we can prove that $S_t$ is optimal in the same way as in Scenario 2. By Corollary \ref{cor: D divide S*}, $S_t$ satisfies one of the three possibilities (a) $S_t \subsetneqq \{ \sigma(1),\dots, \sigma(k_1)  \}$, or (b) $\{ \sigma(1),\dots, \sigma(k_1)  \} \subseteq S_t \subseteq \{ \sigma(1),\dots, \sigma(k_2)  \}$, or  (c) $\{ \sigma(1),\dots, \sigma(k_2)  \} \subsetneqq S_t$. We can show that (a) and (c) are impossible. Then, we can show that $S_t$ has either $l_1-k_1$ or $l_1-1-k_1$ arms with parameter  $p_{\sigma(l_1)}$. The proof is the same as that for Scenario 2 above, thus being omitted here for brevity. 
%
%
%
%


\nbf{Step 2: prove $\E R_t(S_t)I_{\{ \bar E_t,\bar B_t(\epsilon_0) \}} =0$. } Notice that
$\bar E_t$ and $\bar B_t(\epsilon_0)$ are determined by $\F_{t-1}$. Consider $\Ss^* $ to be the set of all optimal subsets, 
	\begin{align*}
	& \E I_{\{\bar E_t,\bar B_t(\epsilon_0)\}} R_t(S_t)=
\E I_{\{\bar E_t,\bar B_t(\epsilon_0), \ S_t\in \Ss^*\}} R_t(S_t) \\
	& =  \sum_{S\in \Ss^*} \E I_{\{\bar E_t,\bar B_t(\epsilon_0), \ S_t=S\}} R_t(S_t)  \\
	& =  \sum_{S\in \Ss^*} \Pb(\bar E_t,\bar B_t(\epsilon_0), \ S_t=S )\E R_t(S)  =0
	\end{align*}
	where the second inequality is because $R_t(S) $ and $ I_{\{\bar E_t,\bar B_t(\epsilon_0), \ S_t=S\}} $ are independent. 

\qed

\section{Proof of Lemma \ref{lem: Sst under assumption Dt}}\label{append: regret error if epsilon1>xi with assumption}

To illustrate the intuition, we first provide the proof  under the assumption that $p$ and $D_t$ satisfies Assumption (A1) and (A2) in this appendix. Then, we will provide a proof without these assumptions based on the same idea in Appendix \ref{append: regret error if epsilon1>xi}.

Suppose $p$ and $D_t$ satisfies Assumption (A1) and (A2),
we define    constants: $\epsilon_1$, $\xi(D_t)$ and $\epsilon_0(D_t)$  as

$$ \epsilon_1= \min\left(\frac{\Delta_1}{2},\dots, \frac{\Delta_T}{2}, \frac{\beta}{n}\right)$$

\begin{equation}
\xi(D_t)= 
\min\left(\frac{\delta^t_1}{l^t_1}, \frac{\delta^t_2}{l^t_1}, \epsilon_1\right) 
\end{equation}
$$ \epsilon_0(D_t)= \min \left(\frac{\delta^t_1}{l^t_1}, \frac{\delta^t_2}{l^t_1},\frac{\Delta_{l_1^t}}{2}\right)$$
where
\begin{align*}
& l_1^t=| \phi(p, D_t)|\\
& \sum_{i=1}^{l_1^t} p_{\sigma(i)}=D_t-1/2+\delta^t_1, \\
& \sum_{i=1}^{l_1^t-1} p_{\sigma(i)}=D_t-1/2-\delta^t_2.
\end{align*}
Notice that the constant $\epsilon_0(D_t)$ is the same as the constant $\epsilon_0$ defined in Proposition \ref{prop: A1A2 epsilon0} with respect to  target $D_t$.


\begin{proof}
	We are going to discuss two different scenarios based on different values of $\epsilon$ and prove the bound in each scenario.
	
	\nbf{Scenario 1: when $\xi(D_t)\geq \epsilon$, show zero regret.}  In this case, we have $\epsilon\leq \xi(D_t)\leq \epsilon_0(D_t)$. As a result, we have $\bar E_t \cap \bar B_t(\epsilon)\subseteq \bar E_t \cap \bar B_t(\epsilon_0(D_t))$, thus, by Lemma \ref{lem: no regret bar E bar B}, there is no regret at $t$. 
	
	\nbf{Scenario 2: when $\xi(D_t)< \epsilon$, show $S_t$ at most differs from the optimal set by one arm. }
	Formally, we will show that, conditioning on $\F_{t-1}$ such that $\bar E_t$ and $\bar B_t(\epsilon)$ hold, the selection of CUCB-Avg must satisfy
	$S_t \in \{ \{  \sigma(1), \dots, \sigma(l_1^t-1)   \} , \{ \sigma(1), \dots, \sigma(l_1^t)  \}, \{ \sigma(1), \dots, \sigma(l_1^t+1) \}\}$. The proof takes three steps.
	
	Step 1: $S_t$ must satisfy one of the five conditions: i) $S_t \subsetneqq \{  \sigma(1), \dots, \sigma(l_1^t-1)   \}$, ii) $S_t= \{  \sigma(1), \dots, \sigma(l_1^t-1)   \}$, iii), $S_t = \{  \sigma(1), \dots, \sigma(l_1^t)   \}$, iv) $S_t =\{  \sigma(1), \dots, \sigma(l_1^t+1)   \}$, v) $S_t \supsetneqq \{  \sigma(1), \dots, \sigma(l_1^t+1)   \}$. This is proved by   $\epsilon_1< \min\left(\frac{\Delta_{l_1^t-1}}{2}, \frac{\Delta_{l_1^t}}{2},\frac{\Delta_{l_1^t+1}}{2}\right)$ and Lemma \ref{lem: epsilon < Deltak} in Appendix \ref{aped: D no regret Ass}.
	
	
	Step 2: Show that condition i) is not possible by contradiction. We suppose $S_t \subsetneqq \{  \sigma(1), \dots, \sigma(l_1^t-1)   \}$, and show that the total estimated mean is less than $D_t-1/2$ below, which contradicts  Lemma \ref{lem:alg properties} (i).
	\begin{align*}
	&	\sum_{i\in S_t }\bar p_i(t-1) \\
	&<  \sum_{i\in S_t}\left(  p_i+ \sqrt{ \frac{\alpha \log t}{2 T_i(t-1)} } \right) \qquad \tag{by $\bar E_t$}\\
	& < \sum_{i\in S_t}\left(  p_i+ \epsilon_1\right) \qquad \tag{by $\bar B_t(\epsilon_1)$}\\
	& \leq \sum_{i=1}^{l_1^t-2}\left( p_{\sigma(i)}+ \epsilon_1\right) \ \tag{by $S_t \subsetneqq \{  \sigma(1), \dots, \sigma(l_1^t-1)   \}$ }\\
	& = D_t-1/2-\delta^t_2 -p_{\sigma(l_1^t-1)}+(l_1^t-2)\epsilon_1  \qquad \tag{by def of $\delta^t_2$}\\
	& \leq D_t-1/2 \qquad \tag{by  $\epsilon_1 \leq \frac{\beta}{n}\leq \frac{p_{\sigma(l_1^t-2)}}{l_1^t-2}$}
	\end{align*}
	
	Step 3: Show that condition v) is not possible. This is proved by contradiction. Suppose v) is true, then it can be shown that $i_{\min}=\argmin_{i\in S_t} U_i(t)$ must be in the set $S_t- \{  \sigma(1), \dots, \sigma(l_1^t+1)   \}$ based on the same argument in the Step 2 of the proof of Lemma \ref{lem: no regret bar E bar B} in Appendix \ref{aped: D no regret Ass}. In addition, based on the same argument, we can show that $\sum_{i\in S_t-\{ i_{min}\}}\bar p_i(t-1)>D_t/2$ below:
	\begin{align*}
	\sum_{i\in S_t -\{i_{\min}\}} \bar p_i(t-1)
	&\geq	\sum_{i=1}^{l_1^t+1}\bar p_{\sigma(i)}(t-1) \\
	&>  \sum_{i=1}^{l_1^t+1}\left(  p_{\sigma(i)}- \sqrt{ \frac{\alpha \log t}{2 T_i(t-1)} } \right)\\
	& > \sum_{i=1}^{l_1^t+1}\left(  p_{\sigma(i)}- \epsilon_1\right) \\
	& = D_t-1/2+\delta^t_1+p_{\sigma(l_1^t+1)}-(l_1^t+1)\epsilon_1 \\
	& \geq  D_t-1/2
	\end{align*}
	where the second and third  inequality are based on $\bar E_t$ and $\bar B_t(\epsilon_1)$, and the last equality and inequality are based on the definition of $\delta_1^t$ and $\epsilon_1$.

	By Lemma \ref{lem:alg properties} (ii), there is a contradiction. Therefore, v) is not true.

	\nbf{Scenario 2 (continued): when $\xi(D_t)< \epsilon$, show $\E[ R_t(S_t)\mid \F_{t-1}] \leq 2n\epsilon$.}
	We only need to discuss condition ii) and iv) since the regret is zero in  condition iii).
	
	Conditioning on condition ii), we have
	\begin{align*}
	&\E [R_t(S_t)\mid \F_{t-1}, S_t=\{\sigma(1),\dots,\sigma(l_1^t-1)\} ]\\
	& = (\sum_{i=1}^{l_1^t-1}p_{\sigma(i)}-D_t)^2 -  (\sum_{i=1}^{l_1^t}p_{\sigma(i)}-D_t)^2 - p_{\sigma(l_1^t)}(1-p_{\sigma(l_1^t)})\\
	&=-p_{\sigma(l_1^t)}(2\sum_{i=1}^{l_1^t-1}p_{\sigma(i)}-2D_t+p_{\sigma(l_1^t)})- p_{\sigma(l_1^t)}(1-p_{\sigma(l_1^t)})\\
	&= -p_{\sigma(l_1^t)}(2\sum_{i=1}^{l_1^t-1}p_{\sigma(i)}-2D_t+1)\\
	&= 2p_{\sigma(l_1^t)}\delta_{2}^t\leq 2\delta_{2}^t \leq 2n \epsilon
	\end{align*}
	where the last inequality uses the fact that $\delta_2^t < (l_1^t-1)\epsilon$ when $S_t=\{\sigma(1),\dots,\sigma(l_1^t-1)\}$ and $\bar E_t$, $\bar B_t(\epsilon)$ hold, which is proved below.
	\begin{align*}
	D_t-1/2&<  \sum_{i\in S_t} \bar p_i(t-1)=	\sum_{i=1}^{l_1^t-1}\bar p_{\sigma(i)}(t-1) \\
	& \leq \sum_{i=1}^{l_1^t-1}\left( p_{\sigma(i)}+ \epsilon\right)  \ \tag{by def of $\bar E_t$ and $\bar B_t(\epsilon)$}\\
	& = D_t-1/2-\delta_2^t +(l_1^t-1)\epsilon
	\end{align*}
	
	
	Conditioning on condition iv), we have
	\begin{align*}
	&	\E [R_t(S_t)\mid \F_{t-1}, S_t=\{\sigma(1),\dots, \sigma(l_1^t+1)\}]\\
	& = (\sum_{i=1}^{l_1^t+1}p_{\sigma(i)}-D_t)^2 -  (\sum_{i=1}^{l_1^t}p_{\sigma(i)}-D_t)^2\\
	&\quad +p_{\sigma(l_1^t+1)}(1-p_{\sigma(l_1^t+1)})\\
	&=p_{\sigma(l_1^t+1)}(2\sum_{i=1}^{l_1^t}p_{\sigma(i)}-2D_t+p_{\sigma(l_1^t+1)})\\
	&\quad+ p_{\sigma(l_1^t+1)}(1-p_{\sigma(l_1^t+1)})\\
	&= p_{\sigma(l_1^t+1)}(2\sum_{i=1}^{l_1^t}p_{\sigma(i)}-2D_t+1)\\
	&= 2p_{\sigma(l_1^t)}\delta_{1}^t\leq 2\delta_{1}^t \leq 2n \epsilon
	\end{align*}
	where the last inequality uses  the fact that $\delta_t^1 < l_1^t\epsilon$ when $S_t=\{\sigma(1),\dots, \sigma(l_1^t+1)\}$ and $\bar E_t$, $\bar B_t(\epsilon)$ hold,  which  is proved  below.
	\begin{align*}
	D_t-1/2&\geq \sum_{i\in S_t-\{i_{\min}\}} \bar p_i(t-1)\\
	& >  \sum_{i\in S_t-\{i_{\min}\}} \left(  p_i- \sqrt{ \frac{\alpha \log t}{2 T_i(t-1)} } \right)\\
	&=  \sum_{i=1}^{l_1^t}\left(  p_{\sigma(i)}- \sqrt{ \frac{\alpha \log t}{2 T_i(t-1)} } \right) \\
	& > \sum_{i=1}^{l_1^t}\left(  p_{\sigma(i)}- \epsilon\right) \\
	& = D_t-1/2+\delta^t_1-l_1^t\epsilon_1 
	\end{align*}
	where the first inequality is by Lemma \ref{lem:alg properties} (ii), the second inequality is by $\bar E_t$, the last inequality is by $\bar B_t(\epsilon)$, the first equality uses the fact that $\min_{i\in S_t} U_i(t)= U_{\sigma(l_1^t+1)}(t)$ due to $\epsilon\leq \epsilon_1 \leq \Delta_{l_1^t}/2$ and $\bar E_t$ and $\bar B_t(\epsilon)$.
	
	In conclusion, we have $\E[R_t(S_t)\mid \F_{t-1}]\leq 2n\epsilon$.
\end{proof}


\subsection{Proof of Lemma \ref{lem: Sst under assumption Dt} without additional assumptions}\label{append: regret error if epsilon1>xi}
In this appendix, we provide a proof of Lemma \ref{lem: Sst under assumption Dt} without additional assumption (A1) (A2). 

Before the proof, we note that adding or deleting a zero-valued arm from the selected subset $S_t$ will not affect the regret $\E [R_t(S_t)\mid \F_{t-1}]$. Therefore, without loss of generality, we will focus on $S_t$ without zero-valued arms.

In addition, we note that zero regret under $\xi(D_t)> \epsilon_1$ can be proved in the same way as Lemma \ref{lem: no regret bar E bar B}. Therefore, we only need to focus on $\xi(D_t) \leq \epsilon_1$.

\nbf{Scenario 1: target $D_t$ is reachable.} In this scenario, there exists $1\leq l_1^t\leq n$ such that 
$$ \sum_{i=1}^{l_1^t} p_{\sigma(i)}>D_t-1/2$$
$$ \sum_{i=1}^{l_2^t} p_{\sigma(i)}< D_t-1/2$$
This also suggests that $p_{\sigma(l_1^t)}>0$.

We will characterize $S_t$ in two ways. First, we will show that $S_t$ must follow the right ordering. Second, we will show that $S_t$ will at most select one more or one less arm than the optimal selection from the oracle.

Firstly, 
by Lemma \ref{lem: epsilon < Deltak} and $\epsilon_1 \leq \min\left( \Delta_{k_1^t}, \Delta_{k_2^t}\right)$, we know the subset $S_t$ selected by CUCB-Avg must satisfy one of the following:
\begin{enumerate}
	\item[i)]  $S_t \subseteq \{\sigma(1), \dots, \sigma(k_1^t) \}$, 
	\item[ii)] $\{ \sigma(1),\dots, \sigma(k_1^t)  \} \subsetneqq S_t \subseteq \{ \sigma(1),\dots, \sigma(k_2^t)  \}$, and $S_t$ has no more than $l_2^t- k_1^t$ arms whose parameter is equal to $p_{\sigma(k_1^t)}$. 
	\item[iii)] $\{ \sigma(1),\dots, \sigma(k_1^t)  \} \subsetneqq S_t \subseteq \{ \sigma(1),\dots, \sigma(k_2^t)  \}$, and $S_t$ has  $l_2^t-k_1^t+1\leq u \leq l_1^t-k_1^t$ arms whose parameter is equal to $p_{\sigma(k_1^t)}$. 
	\item[iv)]  $\{ \sigma(1),\dots, \sigma(k_1^t)  \} \subsetneqq S_t \subseteq \{ \sigma(1),\dots, \sigma(k_2^t)  \}$, and $S_t$ has more than $l_1^t- k_1^t$ arms whose parameter is equal to $p_{\sigma(k_1^t)}$. 
	\item[v)] $\{ \sigma(1),\dots, \sigma(k_2^t)  \} \subsetneqq S_t$.
\end{enumerate} 

By the proof of Lemma \ref{lem: no regret bar E bar B}, iii) generates no regret. Similarly to the proof in Appendix \ref{append: regret error if epsilon1>xi with assumption}, we can rule out i) and v) and show that ii) and iv) only happens under some restrictions on $\delta_1^t$ and $\delta_2^t$. Then following the same argument as in the proof with Assumption (A1) and (A2), we can bound the regret by $2n\epsilon$.

\nbf{Scenario 2: target $D_t$ too large to reach.} In this scenario, $\sum_{i=1}^n p_i \leq D_t-1/2$. Similarly, we can show that $S_t$ satisfy $|S_t|\geq n-1$ and $S_t$ must include the top $n-1$ arms. Only when $\delta_2^t <(n-1)\epsilon$, the regret is not 0, and the regret bound will also hold by the same argument.
\section{Incorporating the ideas of risk-aversion MAB}\label{append: risk averse}

\begin{figure}
	\centering
	\includegraphics[width=0.3\textwidth]{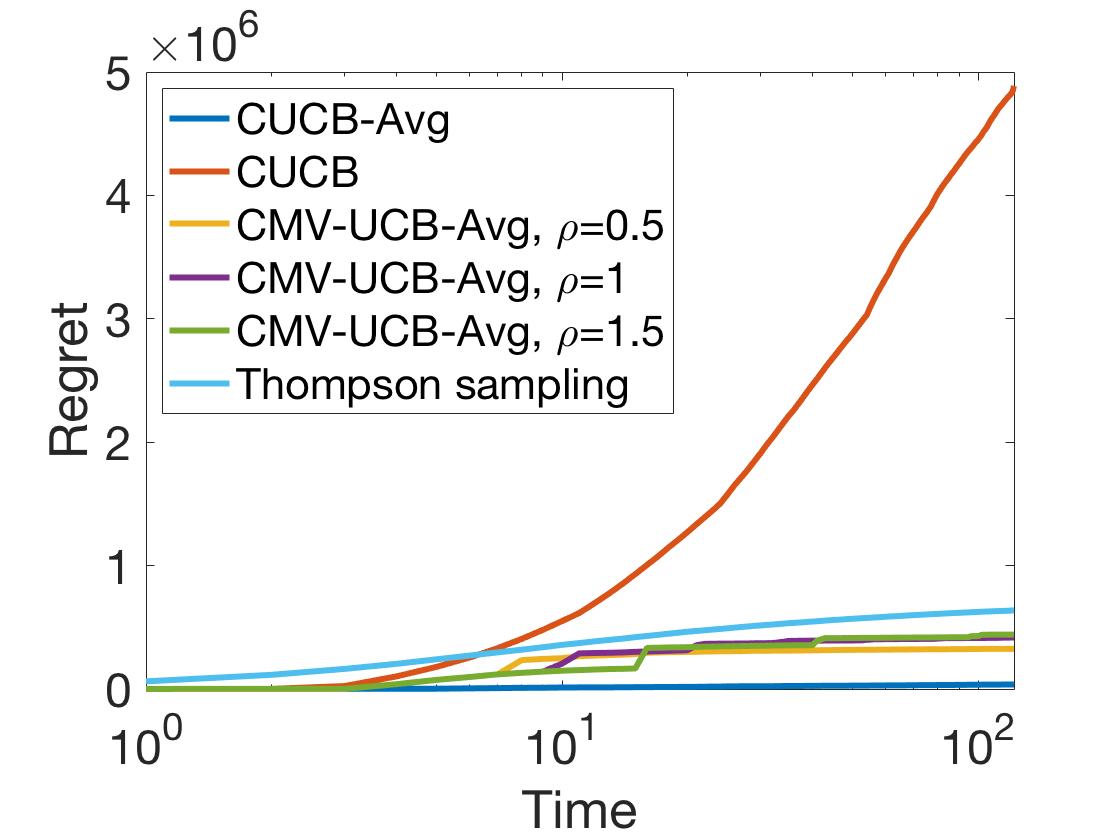}
	\caption{Regret comparison between CUCB, CUCB-Avg, CMV-UCB-Avg, and Thompson sampling}
	\label{fig: mvucb}
\end{figure}

As mentioned in Section 1.2, the papers on risk aversion MAB  \citep{sani2012risk,vakili2016risk} focus on selecting the \textit{single} arm with the best \textit{mean}-variance tradeoff, while our paper aims at selecting \textit{a subset of} arms to achieve the best \textit{bias}-variance tradeoff, where the bias refers to the difference between the expected load reduction and the target load reduction. Identifying the single arm with the best mean-variance tradeoff is helpful, but not enough to  ensure the load reduction to be close to the target. Therefore, the risk-aversion MAB algorithms cannot be directly applied to solve our problem.

Nevertheless, out of curiosity, we  combine the risk-aversion ideas and our algorithm design ideas to construct a new algorithm, which we call CMV-UCB-Avg.  CMV-UCB-Avg ranks arms by the MV-UCB index proposed in \cite{sani2012risk,vakili2016risk}, which consists of an empirical mean-variance tradeoff and an upper confidence bound, then selects the top $K$ arms according to the step 2 in our CUCB-Avg.  The first-rank-then-select structure is motivated by our offline optimization algorithm. We conduct numerical experiments to compare CMV-UCB-Avg with other algorithms under the average-peak setting  in Section 6. Figure \ref{fig: mvucb} shows that CMV-UCB-Avg performs better than the classic CUCB. However, our CUCB-Avg  performs better than CMV-UCB-Avg for several different values of the mean-variance tradeoff parameter $\rho$.



\bibliographystyle{agsm}
\bibliography{citation4MAB}         

\end{document}